\definecolor{shadecolor}{rgb}{0.92,0.92,0.92}
\newlength{\defbaselineskip}
\newcommand{\setlinespacing}[1]%
{\setlength{\baselineskip}{#1 \defbaselineskip}}
\theoremstyle{plain}
\newtheorem{theorem}{Theorem}[section]
\newtheorem{proposition}[theorem]{Proposition}
\theoremstyle{definition}
\theoremstyle{remark}
\newtheorem{remark}[theorem]{Remark}
\numberwithin{equation}{section}
\newcommand{\bE}{\mathbb{E}}
\begin{document}
	\title{A Mean-Field Control Problem of Optimal Portfolio Liquidation with Semimartingale Strategies}

	\author{Guanxing Fu\thanks{The Hong Kong Polytechnic University, Department of Applied Mathematics, Hung Hom, Kowloon, Hong Kong. G. Fu’s research is supported by The Hong Kong RGC (ECS No.25215122) and NSFC Grant No. 12101523. } \quad \quad Ulrich Horst\thanks{Humboldt University Berlin, Department of Mathematics and School of Business and Economics, Unter den Linden 6, 10099 Berlin}\quad \quad Xiaonyu Xia\thanks{Wenzhou University, College of Mathematics and Physics, Wenzhou 325035, PR China; corresponding author. X. Xia’s research is supported by NSFC Grant No. 12101465.}}

	\maketitle
	
\vspace{-2.6mm}
\begin{abstract}
We consider a mean-field control problem with c\`adl\`ag semimartingale strategies arising in portfolio liquidation models with transient market impact and self-exciting order flow. We show that the value function depends on the state process only through its law, and that it is of linear-quadratic form and that its coefficients satisfy a coupled system of non-standard Riccati-type  equations. The Riccati equations are obtained heuristically by passing to the continuous-time limit from a sequence of discrete-time models. A sophisticated transformation shows that the system can be brought into standard Riccati form from which we deduce the existence of a global solution. Our analysis shows that the optimal strategy jumps only at the beginning and the end of the trading period.   		
		
\end{abstract}
	
	{\bf AMS Subject Classification:}{ 93E20, 91B70, 60H30}
	
	{\bf Keywords:}{ mean-field control, semimartingale strategy, portfolio liquidation }

\section{Introduction}
	Let $T \in (0,\infty)$ and let $W$ be a Brownian motion on a probability space $(\Omega, \cal F, \mathbb P)$ and $\mathbb F=(\mathcal F_t)_{t \in [0,T]}$ be the augmented Brownian filtration. In this paper we consider the mean-field stochastic control problem 
	\begin{equation*}
		\min_{Z \in  \mathscr A} \mathbb E\left[ \int_0^T\left( Y_{s-}\,dZ_s+\frac{\gamma_{2}}{2}\,d[Z]_s + \sigma_s d[Z,W]_s \right)+\int_0^T\lambda X^2_s\,ds\right]
	\end{equation*}
	subject to the state dynamics
	\begin{equation*}
		\left\{\begin{split}
			&~dX_s=-dZ_s\\
			&~dY_s=-\rho Y_s\,ds+\gamma_{1}d C_s+\gamma_{2}\,dZ_s+\sigma_s\,dW_s\\
			&~dC_s=-(\beta-\alpha)C_s ds+\alpha(\mathbb E[x_0]-\mathbb E[X_s]) \,ds\\
			&~X_{0-} = x_0; ~ Y_{0-}=C_{0-}=0; ~ X_T = 0
		\end{split}\right.
	\end{equation*}
	for $0 \leq s \leq T$ where the set of admissible controls is given by 
	\begin{equation*}
		\mathscr A:=\left\{Z: Z\textrm{ is a c\`adl\`ag } \mathbb F\textrm{ semimartingale with }{\mathbb E[[Z]_T]}<\infty	\textrm{ and }{\mathbb E\left[\sup_{0\leq t\leq T}Z^2_t\right]<\infty}	\right\}
	\end{equation*}
	
	Control problems of the above form arise in models of optimal portfolio liquidation with instantaneous and transient market impact and self-exciting order flow. In such models the process $(X_t)_{t \in [0,T]}$ describes the portfolio holdings (``inventory'') of a large investor, the terminal state constraint $X_T=0$ reflects the liquidation constraint, $(Z_t)_{t \in [0,T]}$ is the trading strategy, and $(Y_t)_{t \in [0,T]}$  specifies the transient impact of the investor's past trading on future transaction prices. We may think of $Y$ as an additional drift ``added'' to an unaffected (martingale) benchmark price process or a randomly fluctuating spread that increases linearly in order flow and recovers from past trading at a constant rate $\rho$. The process $(C_t)_{t \in [0,T]}$  can be viewed as describing the expected number of child orders resulting from the large investor's trading activity. The first term in the cost function describes the trading cost while the last term captures the market risk. These two terms are standard in the liquidation literature; see \cite{Alfonsi-2016, AJK-2014, Cartea-2018, FGHP-2018, HXZ-2020, PZ-2018} and references therein. When working with semimartingale strategies one needs to penalize the (co)variation of the semimartinagle strategy (with the driving Brownian motion); otherwise the optimization problem would not be well posed. This observation goes back at least to G\^arleanu and Pedersen \cite{GP}. This explains the second and third term in our cost function. Similar cost terms have also been considered in \cite{Ackermann2021, Hkiv, lorenz_schied}. 

\subsection{Portfolio liquidation models}
{The impact of limited liquidity on optimal trade execution has been extensively analyzed in the mathematical finance literature in recent years. The majority of the optimal portfolio liquidation literature allows for one of two possible price impacts. The first approach, pioneered by Bertsimas and Lo \cite{BL-1998} and Almgren and Chriss \cite{AC-2001}, divides the price impact in a purely temporary effect, which depends only on the present trading rate and does not influence future prices, and in a permanent effect, which influences the price depending only on the total volume that has been traded in the past. The temporary impact is typically assumed to be linear in the trading rate, leading to a quadratic term in the cost functional. The original modeling framework has been extended in various directions including general stochastic settings with and without model uncertainty and multi-player and mean-field-type models by many authors including Ankirchner et al.~\cite{AJK-2014}, Cartea and Jaimungal~\cite{cartea_jaimungal_2016}, Fu et al.~\cite{FGHP-2018,FHH,FHX1}, Gatheral and Schied~\cite{GatheralSchied11}, Graewe et al.~\cite{GHQ-2015}, Horst et al.~\cite{HXZ-2020}, Kruse and Popier~\cite{KP-2016} and Neuman and Vo{\ss}~\cite{neuman_voss}.
	
	A second approach, initiated by Obizhaeva and Wang \cite{OW-2013}, assumes that price impact is not permanent, but transient with the impact of past trades on current prices decaying over time. When impact is transient, one often allows for both absolutely continuous and singular trading strategies. When singular controls are admissible, optimal liquidation strategies usually comprise large block trades at the initial and terminal time. The work of Obizhaeva and Wang has been extended by Alfonsi et al. \cite{alfonsi_fruth_schied}, Chen et al. \cite{chen_kou_wang}, Fruth et al. \cite{FruthSchoenebornUrusov14}, Gatheral \cite{Gatheral2011}, Horst and Naujokat \cite{HN-2014}, and Predoiu et al. \cite{predoiu_shaikhet_shreve}, among others. 
	
	For single-player models Graewe and Horst \cite{GH-2017} and Horst and Xia \cite{HX1} combined instantaneous and transient impacts into a single model. Although only absolutely continuous trading strategies were admissible in both papers, numerical simulations reported in \cite{GH-2017} suggest that if all model parameters are deterministic constants, as the instantaneous impact parameter converges to zero the optimal portfolio process converges to the optimal solution in \cite{OW-2013} with two block trades and a constant trading rate. 
	
	The recent work of Horst and Kivman \cite{Hkiv} provides a rigorous convergence analysis within a Markovian factor model. They proved that in the stochastic setting, the optimal portfolio processes obtained in \cite{GH-2017} converge to a process of {\it infinite variation} with jumps as the instantaneous market impact parameter converges to zero and showed that the limiting portfolio process is of infinite variation and is optimal in a liquidation model with general {\it semimartingale execution strategies}. As such, their work provides a microscopic foundation for the use of semimartingale strategies in portfolio liquidation models. 
	
	Liquidation models with inventory processes of infinite variation and semi-martingale strategies were first considered by Lorenz and Schied \cite{lorenz_schied} and G\^arleanu and Pedersen \cite{GP}. Later, Becherer et al. \cite{becherer_bilarev_frentrup} considered a trading framework with generalized price impact and proved that the cost functional depends continuously on the trading strategy, considered in several topologies. More recently, Ackermann et al. \cite{Ackermann2021} considered a liquidation model with general RCLL semimartingale trading strategies. In a follow-up paper \cite{Ackermann2022} the same authors showed how to reduce liquidation problems with block-trades to standard LQ stochastic control problems. The main difficulty in \cite{Ackermann2021,Ackermann2022} arises from allowing a very general filtration while we consider a Brownian filtration but allow for self-exciting order flow which results in the said mean-field control problem. 
	
	Carmona and Webster \cite{carmona_webster} provide empirical evidence that inventories of large traders do indeed have a non-trivial quadratic variation component. For instance, for the Blackberry stock, they analyze the inventories of ``the three most active Blackberry traders'' on a particular day, namely CIBC World Markets Inc., Royal Bank of Canada Capital Markets, and TD Securities Inc. From their data, they ``suspect that RBC (resp.~TD Securities) were trading to acquire a long (resp.~short) position in Blackberry'' and found that the corresponding inventory processes were with infinite variation. More generally, they find that systematic tests ``on different days and other actively traded stocks give systematic rejections of this null hypothesis [quadratic variation of inventory being zero], with a $p$-value never greater than $10^{-5}$.'' 
	
	In this paper we consider a single player liquidation model with semi-martingale strategies and expectations feedback where the large investor expects his current trading activity to have an impact on future market dynamics. There are many reasons why (large) orders may have an impact on future price dynamics. For instance, extensive selling may diminish the pool of counterparties and/or generate herding effects where other market participants start selling in anticipation of further price decreases. Extensive selling may also attract predatory traders that employ front-running strategies; see Brunnermeier and Pedersen \cite{Brunnermeier-2005}, Carlin et al. \cite{Carlin2007} and Schied and Sch\"oneborn \cite{SS-2009} for an in-depth analysis of predatory trading.  
	
	Single and multi-player liquidation models with expectation feedback and absolutely continuous strategies {where $dZ_t = \xi_t dt$} have been considered by, e.g.~Cay\'e and Muhle-Karbe \cite{Caye-2016} and more recently Chen et al. \cite{CHT} and Fu et al. \cite{FHX1}. While Cay\'e and Muhle-Karbe \cite{Caye-2016} assume an exogenous augmentation of the large trader's order flow, due to e.g., predatory traders, Chen et al. \cite{CHT} and Fu et al. \cite{FHX1} assume that the market order dynamics follows a Hawkes process with exponential kernel, and that the large investor’s trading triggers an additional flow of child orders whose rate increases linearly in the investor’s expected traded volume. We retain the assumption of market order arrivals following Hawkes processes but allow for general semi-martingale execution strategies. This results in a novel mean field control problem. %{ \color{red} This is the most important difference between previous work and our model: we consider a mean field control problem, while the control problems analyzed in the existing literature are standard ones.} }

\subsection{Contributions to control problems}

	While the restriction to absolutely continuous controls is standard in much of the liquidation literature, the assumption seems restrictive; it is often made for mathematical convenience as the resulting control problem is much simpler to analyze. In fact, while abstract existence and characterization of solutions results can be obtained for models allowing for more general classes of admissible strategies (see \cite{HN-2014} and references therein), explicit solutions in stochastic settings are rarely available. Retaining the assumption that the large investor's trading activity triggers an absolutely continuous flow of child orders that increases linearly in his/her traded volume, we obtain explicit solutions for both the value function and the optimal strategy when allowing for general c\`adl\`ag semimartingale trading strategies. 

To the best of our knowledge, we are the first to address mean field control problems with c\`adl\`ag semimartingale strategies, which include mean field singular control problems as special cases. Mean field singular control problems have been considered by many authors, including \cite{FH-2017,pham:wei:20,Hafayed-2013,Oksendal-2017}. Among them, using a relaxed approach Fu and Horst \cite{FH-2017} established an existence of optimal control result for a general class of mean field singular control problems. Guo et al. \cite{pham:wei:20} established a novel It\^o's formula for the flow of measures on semimartingales; however, the examples provided in \cite{pham:wei:20} are mean field control problems with either regular or singular controls. Hafayed \cite{Hafayed-2013} established a maximum principle for a general class of mean field type singular control problems. Using a maximum principle approach, Hu et al. \cite{Oksendal-2017} studied mean field type singular control games arising in harvesting problems. 

Our paper also contributes to the literature on the characterization of non-Markovian singular control problems. One-dimensional models were studied in, e.g.   \cite{Bank-2005,Bank-Karoui-2004,Bank-Riedel-2001,lorenz_schied}. Multidimensional  problems are much more difficult to analyze, even in Markovian settings; see Dianetti and Ferrari \cite{Dianetti-Ferrari-2021}. For non-Markovian ones, refer to e.g. \cite{Ackermann2021,EMoreauP-2018}. In \cite{Ackermann2021}, Ackermann et al. solved a two-dimensional problem with random coefficients arising in optimal liquidation problems. In \cite{EMoreauP-2018}, Elie et al. studied a multidimensional path-dependent singular control problem arising in utility maximization problems. Our control problem is a three-dimensional non-Markovian one
with the non-Markovianity arising from a possibly non-Markovian volatility process $\sigma$ and the mean field term $\mathbb E[X]$. Moreover, our strategy is of infinite variation.

The main challenge when analyzing control problems with semimartingale strategies (with or without mean-field term) is that there are usually no canonical candidates for the value functions and/or the optimal strategies. Even the linear-quadratic case is difficult to analyze; although it is intuitive that the value function is of linear-quadratic form, the dynamics of the coefficients is a priori not clear. This calls for case-by-case approaches when analyzing such problems. We follow the approach taken in \cite{Ackermann2021, GP} and consider a sequence of discrete time models and then pass to a heuristic continuous time limit. Our approach suggests that the value function depends on the state process only through its distribution and that it is of linear quadratic form driven by three deterministic processes and a BSDE. 

It turns out that the driving processes follow a system of Riccati equations that does not satisfy the assumptions that are usually required to guarantee the existence and uniqueness of a solution. Our main mathematical contribution is to show - through a sophisticated transformation - that our system can be rewritten in terms of a more standard system that satisfies the assumption in Kohlmann and Tang \cite{Kohlmann2003} under a weak interaction condition that bounds the impact of the child order flow on the market dynamics.  Subsequently, we employ a non-standard verification argument that shows that the candidate optimal strategy is indeed optimal. The key idea is to rewrite the cost function as a sum of complete squares plus a correction term that turns out to be the value function. 

Our analysis shows that the optimal strategy jumps only at the beginning and the end of the trading period. This is consistent with earlier findings in \cite{GP, Hkiv, HN-2014,OW-2013}; in the absence of an external ``trigger'' there is no reason for the optimal strategy to jump, except at the terminal time to close the position and at the initial time.     

The remainder of this paper is structured as follows. The models, main results  and assumptions are stated in Section \ref{sec-main}. The wellposedness of the system of the Riccati equations that specify the candidate value function and strategy is established in Section \ref{sec-ric}. The verification argument is given in Section \ref{sec-ver}. Numerical simulations illustrating the nature of the optimal solution are given in Section \ref{sec-num}. Section \ref{sec-con} concludes. The heuristic derivation of the value function is postponed to an appendix.

\textbf{Notation.} For a deterministic function $\mathcal Y$, denote by $\dot{\mathcal Y}$ its derivative. For a stochastic process $\mathcal Y$ satisfying some SDE, we still use the same notation $\dot{\mathcal Y}$ to denote the drift of $\mathcal Y$. For a matrix (or vector) $\mathcal Y$, $\mathcal Y_{ij}$ (or $\mathcal Y_i$) denotes its $(i,j)$- (or $i$-) component. For a space $\mathcal D$, we denote by $L^\infty([0,T];\mathcal D)$  the space of all $\mathcal D$-valued bounded functions.  $C([0,T];\mathcal D)$ is the space of $\mathcal D$-valued continuous functions. For an integer $n$, we denote by $\mathbb S^n$ the space of symmetric $n\times n$ matrices. For a measure $\mu$ on ${\mathbb R}^n$ we denote by $\bar \mu$ the vector of expected values, and for a matrix  $A\in \mathbb R^{n\times n}$ we put
\[
\textnormal{Var}(\mu)(A):=\int_{\mathbb R^3}(x-\overline\mu )^\top A( x-\overline\mu  ) \mu(dx)=  \int_{\mathbb R^3}x^\top A x\mu(dx)-\overline\mu^\top A\overline\mu.  
\]
%%%%%%%%%%%%%%%%%%%%%%%%%%%%%%%%%%
%%%%%%%%%%%%%%%%%%%%%%%%%%%%%%%%%%
%%%%%%%%%%%%%%%%%%%%%%%%%%%%%%%%%%
\section{Model and main results}\label{sec-main}

{In this section, we introduce a novel single player portfolio liquidation problem with semi-martingale execution strategies and expectations feedback. Our starting point is a novel ``semi-martingale extension'' of the portfolio liquidation model with instantaneous and persistent price impact analyzed in \cite{GH-2017} that allows the impact process to be driven by noise trading. We introduce this model in the next subsection before introducing an additional feedback term of mean-field type into the dynamics of the benchmark price process.  
	
	{We assume throughout that randomness is described by a Brownian motion $W$ defined on a filtered probability space $(\Omega, \cal F, \mathbb P)$ where $\mathbb F=(\mathcal F_t)_{t \in [0,T]}$ denotes the augmented Brownian filtration.   }
	
	\subsection{The benchmark model}
	
	We consider an investor that needs to unwind an initial of $x_0$ shares over a finite time horizon $[0,T]$ using general semi-martingale execution strategies $Z$. The large investor’s stochastic control problem at time $t \in [0,T]$ is given by 
	\begin{equation}\label{cost}
		\min_{Z \in  \mathscr A} \mathbb E\left[ \int_0^T\left( Y_{s-}\,dZ_s+\frac{\gamma_{2}}{2}\,d[Z]_s + \sigma_s d[Z,W]_s \right)+\int_0^T\lambda X^2_s\,ds\right]
	\end{equation}
	subject to the state dynamics
	\begin{equation}\label{model}
		\left\{\begin{split}
			&~dX_s=-dZ_s\\
			&~dY_s=-\rho Y_s\,ds+\gamma_{2}\,dZ_s+\sigma_s\,dW_s\\
			%&~dC_s=-(\beta-\alpha)C_s ds+\alpha(\mathbb E[x_0]-\mathbb E[X_s]) \,ds\\
			&~X_{0-} = x_0; ~ Y_{0-}=0; ~ X_T = 0
		\end{split}\right.
	\end{equation}
	for $0 \leq s \leq T$, where the set of admissible controls is given by the set 
	\begin{equation*}
		\mathscr A:=\left\{Z: Z\textrm{ is a c\`adl\`ag } \mathbb F\textrm{ semimartingale with } {\mathbb E[[Z]_T]}<\infty	\textrm{ and }{\mathbb E\left[\sup_{0\leq t\leq T}Z^2_t\right]<\infty}	\right\}
	\end{equation*}
	of all square integrable semi-martingales with finite expected covariation. 
	
	The process $(X_t)_{t \in [0,T]}$ describes the investor's portfolio holdings (``inventory''){ with the terminal state constraint $X_T=0$ reflecting the liquidation constraint.} The process $(Z_t)_{t \in [0,T]}$ is the trading strategy
	and $(Y_t)_{t \in [0,T]}$  specifies the transient impact{ of the investor's past trading on future transaction prices.} We think of $Y$ as an additional drift ``added'' to an unaffected martingale price process $M$ that increases linearly in the investors order flow, changes continuously due to noise trader effects, and recovers from past trading at a constant rate $\rho \geq 0$. That is, we assume that the transaction price process is of the form 
	\begin{equation}\label{price}
		S_t = M_t - Y_t, \quad t \in [0,T].
	\end{equation}  
	The martingale part drops out of the equation when computing the expected cost from trading, which are hence given by $\mathbb E\left[ \int_0^T Y_{s-}\,dZ_s \right]$. This explains the first term in our cost function. The fourth term captures market risk and penalizes slow liquidation. These two terms are standard in the liquidation literature; see \cite{Alfonsi-2016, AJK-2014, Cartea-2018, FGHP-2018, HN-2014, HXZ-2020, PZ-2018} and references therein.
	
	When working with semimartingale strategies one additionally needs to penalize the (co)variation of the semimartinagle strategy (with the driving Brownian motion); otherwise the optimization problem is not well posed.{ This observation goes back at least to G\^arleanu and Pedersen \cite{GP}; see also \cite{Ackermann2021,Hkiv,lorenz_schied}.} This explains the second and third term in our cost function. 
	
	%\textcolor{blue}{
		%\begin{remark}
		%	Even without mean-field terms, our benchmark optmization problem \eqref{cost} and \eqref{model} is novel in literature because of the external customer flow in \eqref{model}.
		%\end{remark}
		%}
	
	%%%%%%%%%%%%%%%%%%%%%%%%%%%%%%%%%%
	%%%%%%%%%%%%%%%%%%%%%%%%%%%%%%%%%%
	%%%%%%%%%%%%%%%%%%%%%%%%%%%%%%%%%%
	
	\subsection{Expectations feedback.}
	
	We are now going to introduce an additional feedback effect into the above model that accounts for the possibility of an additional order flow (“child orders”) triggered by the large investor’s trading activities. Specifically, we assume that the transaction price impact process satisfies a dynamics of the form  
	\begin{equation}\label{price2}
		dY_t = -\rho Y_t dt + \gamma_1  dC_t + \gamma_2  dZ_t  + \sigma_t dW_t, \quad t \in [0,T]
	\end{equation}  
	where the process $(C_t)_{t \in [0,T]}$ specifies the large investor's assessment of the impact of child orders triggered by his own trading on the price impact process.  
	
	To motivate the dynamics of the {\it endogenous} child order term, let us assume as in \cite{CHT,FHX1} that market sell and buy orders arrive according to independent exponential Hawkes processes $N^\pm$ with respective intensities
	\[
	\zeta^\pm_t := \mu^\pm_t + \alpha \int_0^t e^{-\beta(t-s)}dN^\pm_s. 
	\]
	Here, $\mu^\pm$ are {\it endogenous} base intensities that will be specified below, and $\alpha,\beta$ are {\it exogenous} deterministic coefficients that capture the impact of past orders on future order flow. The total order flow imbalance can be written as a sum 
	\[
	N^+_t - N^-_t =:  Z_t + {\cal C}_t
	\]
	of the trader's contribution $Z$ and the endogenous child order process $ \cal C$. Working with the child order process is inconvenient. It is much more convenient to consider instead  
	%
	%
	%Since child orders cannot be distinguished from regular orders, the processes $N^\pm$ and hence $\cal C$ are not directly observable. This suggests to choose 
	the {\it expected} child order process as a measure for the impact of own trading on future transaction price dynamics in \eqref{price2}, i.e.~to set\footnote{Working with the expected child order process will allow us to work with the child order arrival rate and also accounts for the fact that child orders are not directly observable. We further comment on the choice of the expected child order process below. }  
	\[
	C_t := \mathbb E[{\cal C}_t].
	\]
	
	We assume that the market is in equilibrium - i.e.~that $\mu^\pm \equiv \mu$ - if the large investor is inactive. In this case, the order flow imbalance $N^+ - N^-$ is a martingale and hence $C_t \equiv 0$. If the large investor employs a {\it continuous} semi-martingale strategy of the form
	\[
	dZ_t = b_t dt + \tilde \sigma_t dW_t,
	\]
	it seems natural to change the base intensities to 
	\[
	\mu^\pm_t = \mu +b^\pm_t,
	\]	
	where $b^\pm_t$ denotes the positive/negative part of $b_t$. In other words, we assume that the market order arrival intensities increase in the large investor's trading with the martingale part of his trading strategy having no impact on future order flow. 
	
	Since the compensated Hawkes processes $\left(N^\pm_t - \int_0^t \zeta^\pm_s ds \right)_{0 \leq t \leq T}$ are martingales starting in zero, the expected child order flow equals
	\begin{equation}
		C_t = \mathbb E \left[ \int_0^t \zeta^+_s ds - \int_0^t \zeta^-_s ds - Z_t \right].
	\end{equation}
	With our choice of base intensities the expected intensities satisfy
	\[
	\mathbb E [\zeta^\pm_t] = \mu + \mathbb E [b^\pm_t] + \alpha \int_0^t e^{-\beta(t-s)}  \mathbb E [\zeta^\pm_s] ds. 	
	\]
	As a result, the expected aggregate order flow imbalance $N_t :=  \mathbb E[N^+_t - N^-_t] $ equals  
	\begin{equation}
		\begin{split}
			N_t %:= &~ \mathbb E[N^+_t - N^-_t] = \mathbb E \left[ \int_0^t \zeta^+_s ds - \int_0^t \zeta^-_s ds  \right] \\ 
			& = ~ \int_0^t \mathbb E[b_s]ds + \alpha \int_0^t e^{-\beta(t-s)} N_s ds \\
			& = ~ \mathbb E[Z_t]  + \alpha \int_0^t e^{-\beta(t-s)} N_s ds,
		\end{split}
	\end{equation}
	and hence the expected number of child orders equals
	\begin{equation*}
		\begin{split}
			C_t 
			%	& =  \mathbb E \left[ \int_0^t \zeta^+_s ds - \int_0^t \zeta^-_s ds - Z_t \right] \\ 
			%	& = ~ \int_0^t \bE[b_s]ds + \alpha \int_0^t e^{-\beta(t-s)} C_s ds - \mathbb E[Z_t] \\
			& = ~ \alpha \int_0^t e^{-\beta(t-s)} N_s ds.  
		\end{split}
	\end{equation*}
	%\hl{where we use Fubini theorem. }
	%As a result, the expected  imbalance triggered by the large trader's execution strategy can be decomposed into the trader's own expected order submission plus the expected number of (net) sell child orders 
	%\begin{equation} \label{ODE-C}
	%	C_t := \alpha \int_0^t e^{-\beta(t-s)} \bar N_s ds. %= \bar Z_t - \left(x_0 - \bE X_t \right).
	%\end{equation}
	%
	
	Differentiating the above equation shows that\footnote{Essentially, it makes no difference to consider $x_0$ as random or deterministic. Here, we assume it is a square integrable r.v.. }  
	\begin{equation}\label{dC}
		dC_t = \left( -(\beta-\alpha)C_t + \alpha (\mathbb E[x_0] -  \mathbb E [X_t] ) \right) dt, \quad C_0=0. 
	\end{equation}
	In particular, the child order flow rate resulting from the order submission rate $b$ increases linearly in the expected traded volume $\mathbb E[x_0] -  \mathbb E[X_t]$ and is mean-reverting if $\frac{\alpha}{\beta} < 1$.  In this case, each order triggers less than one child order on average and the Hawkes process is stable in the long run; see \cite{HO} for details.   
	
	Choosing the child order dynamics \eqref{dC} in \eqref{price2}  our transient impact process follows the dynamics
	\begin{equation}\label{model-t1}
		\left\{\begin{split}
			&~dY_t=\Big(-\rho Y_t+\gamma_{1} \dot C_t\Big)\,dt+\gamma_{2}\,dZ_t + \sigma_t\,dW_t\\
			&~dC_t=-(\beta-\alpha)C_t dt+\alpha(\mathbb E[x_0]-\mathbb E[X_t]) \,dt\\
			&~Y_{0}=0; \quad C_0 = 0. 
		\end{split}\right.
	\end{equation}
	
	In particular, the dynamics of the transient impact process $Y$ can be decomposed into three components\footnote{Note that a similar decomposition also appears in \cite{MuhleKarbe}.}, namely, the transient impact of the larger trader's own trading activity
	\[
	dY^1_t = -\rho Y^1_t\,dt+\gamma_2\,dZ_t,
	\]
	the transient impact of the child orders
	\[
	dY^2_t= -\rho Y^2_t\,dt+\gamma_1\,dC_t,
	\]
	and the transient impact of noise traders' activities
	\[
	dY^3_t = -\rho Y^3_t\,dt + \sigma_t\,dW_t.
	\]

	\begin{remark}
		%\begin{itemize}
At first sight one may be inclined to work with the child order process $\cal C$ in \eqref{price2} and hence in \eqref{model-t1}. We prefer to work instead with the expected order flow $C$ for at least two reasons. First, working with the order flow renders our model analytically tractable. Second, the Hawkes processes $N^\pm$ and hence the child order processes are unobservable in general. While the trader cannot observe the child order process he/she can estimate the base intensity $\mu$ along with the parameters $\alpha$ and $\beta$ from past data and hence estimate the intensities $\zeta^\pm$. This allows the trader to compute the expected intensities and thus to specify the expected order flow, which strikes us as a canonical way to measure own impact. We acknowledge that it would be more even canonical to work with the \textsl{conditional} expected order flow given the trader's own submission rate $b$. However, to the best of our knowledge there is no closed form expression for the conditional expected flow.    
		%\end{itemize}
	\end{remark}
}

%
%\subsubsection{Expectations feedback in general semi-martingale models}
%
%As argued above, choosing the child order arrival rate to model feedback effects of own trading seems natural when only continuous strategies are allowed. When impact process follows \eqref{model-t1}, then large block trades increase the rate of child order arrivals but to not trigger additional ``child block trades''. Child block trades can be incorporated into the dynamics by assuming that the transient impact process satisfies the dynamics 
%\begin{equation}\label{model-t}
%	\left\{\begin{split}
	%		&~dY_t=\Big(-\rho Y_t+\gamma_{1} C'_s\Big)\,dt+\gamma_{2}\,dZ_t + \gamma_3 \Delta Z_t + \sigma_t\,dW_t\\
	%		&~dC_t=-(\beta-\alpha)C_t dt+\alpha(x_0 -\mathbb E[X_t]) \,ds\\
	%		&~Y_{0}=0; \quad C_0 = 0. 
	%	\end{split}\right.
%\end{equation}
%where $\Delta Z_t = Z_t - Z_{t-}$ and $\gamma_3$ denotes a non-negative multiplier that measures the size of an immediate child block trade resulting from the original block trade $\Delta Z_t$. 
%
%
%
%\begin{remark}
%The dynamics of the transient impact process $Y$ can be decomposed into three components, namely, the transient impact of the larger trader's own trading activity
%\[
%	dY^1_t = -\rho Y^1_t\,dt+\gamma_2\,dZ_t,
%\]
%the transient impact of the child orders
%\[
%	dY^2_t= -\rho Y^2_t\,dt+\gamma_1\,dC_t + \gamma_3 \Delta Z_t,
%\]
%and the transient impact of noise traders' activities
%\[
%	dY^3_t = -\rho Y^3_t\,dt + \sigma_t\,dW_t.
%\]
%\end{remark}

%%%%%%%%%%%%%%%%%%%%%%%%%%%%%%%
%%%%%%%%%%%%%%%%%%%%%%%%%%%%%%%
%%%%%%%%%%%%%%%%%%%%%%%%%%%%%%%

\subsection{The optimization problem}

We are now ready to state the large trader's optimization problem when general semi-martingale strategies are allowed. We retain the assumption that the cost functional at time $t \in [0,T)$ is given by
\[
	J(t,Z) = \mathbb E \left[  \int_t^T\left( Y_{s-}\,dZ_s+\frac{\gamma_{2}}{2}\,d[Z]_s + \sigma_s d[Z,W]_s \right)+\int_t^T\lambda X^2_s\,ds \right].
\] 
{We also assume that the child order flow is mean-reverting and that it increases linearly in the expected traded volume $\mathbb E[x_0] -  \bE[X_t]$. In other words, we assume that the impact process satisfies \eqref{price2}. The trader's control problem at time $t \in [0,T]$ is thus given by  }
\begin{equation}\label{cost-t}
	\min_{Z \in  \mathscr A_t}J(t,Z)%:= \min_{Z \in  \mathscr A_t}\mathbb E\left[ \int_t^T\left( Y_{s-}\,dZ_s+\frac{\gamma_{2}}{2}\,d[Z]_s + \sigma_s d[Z,W]_s \right)+\int_t^T\lambda X^2_s\,ds\right]
\end{equation}
subject to the state dynamics
\begin{equation}\label{model-t}
	\left\{\begin{split}
		&~dX_s=-dZ_s\\
		&~dY_s=\Big(-\rho Y_s+\gamma_{1} C'_s\Big)\,ds+\gamma_{2}\,dZ_s+\sigma_s\,dW_s\\
		&~dC_s=-(\beta-\alpha)C_s ds+\alpha(\mathbb E[x_0]-\mathbb E[X_s]) \,ds\\
		&~(X_{t-}, Y_{t-},C_{t-})=\mathcal X; ~ X_T = 0,
	\end{split}\right.
\end{equation}
where the set of admissible trading strategies is given by. 
\begin{equation}
\begin{split}
	\mathscr A_t & := ~ \bigg\{Z: Z\textrm{ is a c\`adl\`ag } \mathbb F\textrm{ semimartingale starting at $t$ with }  {\mathbb E\left[[Z]_T\right]<\infty} \textrm{ and }   \\
	& \left. \qquad \qquad  \mathbb E\left[\sup_{t\leq s\leq T}|Z_s|^2\right]<\infty 	\right\}.
\end{split}
\end{equation}

{
\begin{remark}
%	Some comments about the choice of the impact process are in order. 
\begin{itemize}
	\item[i)] 
	We assume that large block trades increase the rate of child order arrivals but do not trigger ``block child orders''. Block trades following block trades would be executed at very unfavorable prices. Hence, it is natural to assume that other traders would not submit block trades in response to observed block trades but much rather increase their order submission rates. 
	\item[(ii)] Cay\'e and Muhle-Karbe \cite{Caye-2016} consider a model with self-exciting order flow where ``other (e.g., predatory) traders augment the order flow of the large trader by a linear factor depending on $x_0 - X_t$''. Since theirs is a deterministic model there is no difference between traded and expected traded volume. By contrast, we allow for stochastic trading strategies.  For the reasons outlined above this suggests to use expected rather than actual trading volume as a measure for own impact on child order flow.   
%	\item[ii)] As we have seen the assumption that the child order flow is mean-reverting and increases linearly in expected traded volume can be justified in terms of a stylized microstructure model when only continuous trading strategies are allowed. When block trades are allowed this is a modeling assumption. It does not seem to be possible to incorporate block trades into the arrival rates of a Hawkes process to result in a dynamics of the form \eqref{model-t}.      
\end{itemize}
\end{remark}
}

%%%%%%%%%%%%%%%%%%%%%%%%%%%%%%%
%%%%%%%%%%%%%%%%%%%%%%%%%%%%%%%
%%%%%%%%%%%%%%%%%%%%%%%%%%%%%%%

\subsubsection{The value function}

In what follows we denote the value function of our control problem, given the random  initial state ${\cal X} = (X_{t-},Y_{t-},C_{t-})$ at time $t \in [0,T)$, by 
\begin{equation}\label{def:value-function-t}
	V(t,\mathcal X) = \inf_{Z \in {\mathscr A}_t }J(t,Z). 	
\end{equation}

The challenge is to identify a candidate for the value function. Although it is natural to expect the value function to be of linear-quadratic form, the dynamics of the respective coefficient processes is far from obvious.   

Our goal is to represent the value function in terms of two deterministic symmetric $\mathbb R^{3\times 3}$-valued processes $A, B$, an $\mathbb R^3$-valued deterministic process $D$ and a real-valued adapted square integrable process $F$ as\footnote{The notation $\text{Var}({\mu})(A_{t})$ was introduced at the end of the introduction.} 
\begin{equation}\label{value1}
	V(t,\mathcal X)=\text{Var}({\mu})(A_{t}) + \bar{\mu}^\top B_t\bar{\mu}+D^\top_t\bar{\mu}+\mathbb E[F_t]
\end{equation}
where $\mu$ denotes the law of the initial state $\cal X$, and $\bar \mu$ denotes the vector of expected values. 

The dynamics of the processes $A, B, D, F$ is derived heuristically in Appendix A by first analyzing a discrete time model and then taking the limit as the time difference between two consecutive trading periods tends to zero. It turns out that: %the joint dynamics of the process $(A,B,D,F)$ is given as follows. 

\begin{itemize}
%For $i,j=1,2,3$, let $A_{ij,t}$ be the $(i,j)$-th entry of the matrix $A_t$.  
\item The process $A$ is symmetric, satisfies $A_{11}=\gamma_{2}A_{21}$, $A_{12}=\gamma_{2}A_{22}+\frac{1}{2}$, $A_{13}=\gamma_{2}A_{23}$, and  the ODE
\begin{equation}\label{eq:tilde-A}
	\left\{
	\begin{aligned}
		\dot A_{11,t}=&\left(-\lambda+\frac{\left(\rho A_{11,t}+\lambda\right)^2}{\gamma_2\rho+\lambda}\right)\\
		\dot A_{13,t}=&\bigg(\frac{\gamma_1(\beta-\alpha)}{\gamma_2}A_{11,t}+(\beta-\alpha)A_{13,t}-\frac{\left(\rho A_{11,t}+\lambda\right)\left(\gamma_1(\beta-\alpha)-2\rho A_{13,t}\right)}{2(\gamma_2\rho+\lambda)}\bigg)\\
		\dot A_{33,t}=&\left(2(\beta-\alpha)A_{33,t}+2\frac{\gamma_1(\beta-\alpha)}{\gamma_2}A_{13,t}+\frac{\left(\gamma_1(\beta-\alpha)-2\rho A_{13,t}\right)^2}{4(\gamma_2\rho+\lambda)}\right)\\
		A_{11,T}=&\frac{\gamma_2}{2},\quad A_{13,T}=0,\quad A_{33,T}=0.
	\end{aligned}\right.
\end{equation}
The above is a standard ODE system that can be uniquely solved. 

	\item The process $B$ is symmetric, satisfies $B_{11}=\gamma_{2}B_{21}$, $B_{12}=\gamma_{2}B_{22}+\frac{1}{2}$, $B_{13}=\gamma_{2}B_{23}$, and the fully coupled system of Riccati-type equations
\begin{equation}\label{eq:tilde-B}
	\left\{
	\begin{aligned}
		\dot B_{11,t}=&~\Bigg(2\frac{\gamma_1\alpha}{\gamma_2}B_{11,t}+2\alpha B_{13,t}-\lambda   \\
		&   +\frac{\Big(2(\gamma_1\alpha-\gamma_2\rho) B_{11,t}+\gamma_1\gamma_2\alpha+2\alpha\gamma_2B_{13,t}-2\gamma_2\lambda\Big)^2}{4\gamma_2^2(\gamma_2\rho-\gamma_1\alpha+\lambda)}\Bigg)\\
		\dot B_{33,t}=&~\Bigg(2(\beta-\alpha)B_{33,t}+2\frac{\gamma_1(\beta-\alpha)}{\gamma_2}B_{13,t}   \\ 
		&  +\frac{\Big(2(\gamma_1\alpha-\gamma_2\rho)B_{13,t}+2\gamma_2\alpha B_{33,t}+\gamma_1\gamma_2(\beta-\alpha)\Big)^2}{4\gamma_2^2(\gamma_2\rho-\gamma_1\alpha+\lambda)}\Bigg)\\
		\dot B_{13,t}=&~\Bigg\{\frac{\gamma_1(\beta-\alpha)}{\gamma_2}B_{11,t}+\alpha B_{33,t}+(\beta-\alpha+\frac{\gamma_1\alpha}{\gamma_2})B_{13,t}\\
		&+\Big(2(\gamma_1\alpha-\gamma_2\rho) B_{11,t}+\gamma_1\gamma_2\alpha+2\alpha\gamma_2B_{13,t}-2\gamma_2\lambda\Big)\\
		&\cdot\frac{\left(2(\gamma_1\alpha-\gamma_2\rho)B_{13,t}+2\gamma_2\alpha B_{33,t}+\gamma_1\gamma_2(\beta-\alpha)\right)}{4\gamma_2^2(\gamma_2\rho-\gamma_1\alpha+\lambda)}\Bigg\}\\
		 B_{11,T}=&~\frac{\gamma_2}{2},\quad B_{13,T}=0,\quad B_{33,T}=0.
	\end{aligned}\right.
\end{equation}
This system is complicated to analyze; its analysis is postponed to the next section. 

\item The vector-valued process $D$ satisfies $D_{2} = \gamma_2^{-1} D_{1}$, and the components $D_1$ and $D_3$ satisfy the coupled linear ODE system 
\begin{equation}\label{eq:tilde-D}
	\left\{\begin{split}
		\dot D_{1,t}=&~\Bigg\{-\frac{2\gamma_1\alpha \mathbb E[x_0]}{\gamma_2}B_{11,t}-2\alpha \mathbb E[x_0] B_{13,t}+\frac{\gamma_1\alpha}{\gamma_2} D_{1,t}+\alpha D_{3,t}\\
		&+\Big(-2\lambda\gamma_2+2(\gamma_1\alpha-\gamma_2\rho)B_{11,t}+\gamma_1\gamma_2\alpha+\alpha\gamma_2B_{13,t}\Big)\\
		&\cdot\frac{\Big(-\gamma_1\gamma_2\alpha \mathbb E[x_0]+(\gamma_1\alpha-\gamma_2\rho)D_{1,t}+\alpha\gamma_2 D_{3,t}\Big)}{2\gamma_2^2(\gamma_2\rho-\gamma_1\alpha+\lambda)}\Bigg\}\\
		\dot D_{3,t}=&~\Bigg\{-2\alpha \mathbb E[x_0] B_{33,t}-\frac{2\gamma_1\alpha \mathbb E[x_0]}{\gamma_2}B_{13,t}+\frac{\gamma_1(\beta-\alpha)}{\gamma_2}D_{1,t}+(\beta-\alpha)D_{3,t}\\
		&+\Big(2(\gamma_1\alpha-\gamma_2\rho)B_{13,t}+2\gamma_2\alpha B_{33,t}+\gamma_1\gamma_2(\beta-\alpha)\Big)\\
		&\cdot\frac{\Big(-\gamma_1\gamma_2\alpha \mathbb E[x_0]+(\gamma_1\alpha-\gamma_2\rho)D_{1,t}+\alpha\gamma_2 D_{3,t}\Big)}{2\gamma_2^2(\gamma_2\rho-\gamma_1\alpha+\lambda)}\Bigg\}\\
		D_{1,T}=&~D_{3,T}=0
	\end{split}\right.
\end{equation}
%that depends on the process $B$. 

\item Due to the random volatility process, the process $F$ satisfies a BSDE, namely 
	\begin{equation}\label{eqn:F}
	\left\{\begin{split}
		-d F_t=&~\bigg\{\sigma^2_t \frac{2A_{11,t}-\gamma_2}{2\gamma_2^2}+\alpha\gamma_1\mathbb E[x_0]\frac{D_{1,t}}{\gamma_2}+\alpha\mathbb E[x_0]D_{3,t}\\
	&-\frac{1}{4(\lambda+\gamma_2\rho-\alpha\gamma_1)}\left(-\alpha\gamma_1\mathbb E[x_0]+(\gamma_1\alpha-\gamma_2\rho)\frac{D_{1,t}}{\gamma_2}+\alpha D_{3,t}\right)^2\bigg\}\,dt-Z^F_t\,dW_t\\
	F_T=&~0.
	\end{split}\right.
\end{equation}
\end{itemize}

We prove in Section 3 that the system \eqref{eq:tilde-B} is well posed and admits a unique global solution if the feedback effect as measured by the constant $\alpha$ is weak enough. In this case, the joint dynamics of the process $(A,B,D,F)$ is well-defined. 

%%%%%%%%%%%%%%%%%%%%%%%%%%%%%%%
%%%%%%%%%%%%%%%%%%%%%%%%%%%%%%%
%%%%%%%%%%%%%%%%%%%%%%%%%%%%%%%

\subsubsection{The main result}

%In this section we state the main result of this paper. 
We assume throughout that the following {\bf standing assumption} holds.
 
\begin{enumerate}
	\item The coefficients $\gamma_1$, $\gamma_2$, $\alpha$, $\beta$, $\rho$ and $\lambda$ are nonnegative constants. 
	\item The coefficients satisfy $\beta-\alpha>0$ and $\gamma_2\rho-\gamma_1\alpha+\lambda>0$. 
	\item The initial position $x_0$ is a square integrable r.v.~that is independent of the Brownian motion. The volatility process $\sigma$ is a square integrable progressively measurable process.\footnote{We emphasize that $\sigma$ is allowed to be degenerate.}
\end{enumerate}

It will be convenient to rewrite the state dynamics and the cost function in matrix form as
\begin{equation}\label{state_matrix}
\begin{split}
	d\mathcal X_s &=\Big(	\mathcal H\mathcal X_s+\overline{\mathcal H}\mathbb E[\mathcal X_s]	+\mathcal G	\Big)\,ds+\mathcal D_s\,dW_s+\mathcal K\,dZ_s,\quad s\in[t,T), \\
	\mathcal X_{t-} & = \mathcal X
\end{split}
\end{equation}
where 
\begin{equation}
	\begin{split}
		\mathcal H=&~ \begin{pmatrix}		0&0&0\\
			0& -\rho&-\gamma_{1}(\beta-\alpha)\\
			0& 0& -(\beta-\alpha)
		\end{pmatrix},\quad ~~  \overline{\mathcal H}=  \begin{pmatrix}			
			0&0&0\\
			-\alpha\gamma_{1}&0&0\\
			-\alpha&0&0
		\end{pmatrix},\\
		\mathcal G=&~\begin{pmatrix}		 0&\alpha\gamma_{1}\mathbb E[x_0]& \alpha\mathbb E[x_0]		\end{pmatrix}^\top,\quad  \mathcal D_s=\begin{pmatrix}	0&\sigma_s&0				\end{pmatrix}^\top,\\
		\mathcal K=&~\begin{pmatrix}
			-1 	& \gamma_{2}& 0	\end{pmatrix}^\top.\\
	\end{split}
\end{equation}

The following is the main result of this paper. Its proof is given in Sections 3 and 4 below. 

\begin{theorem}
If the standing assumption is satisfied, and if either $\lambda=0$ or $\lambda\rho\gamma_2 > 0$ and $\alpha$ is small enough, then the following holds.  
\begin{itemize}
\item[i)] In terms of the processes $A,B,D,F$ introduced in \eqref{eq:tilde-A}-\eqref{eqn:F}  the value function defined in \eqref{def:value-function-t} is given by 
\begin{equation} \label{value2}
	V(t,\mathcal X)=V(t,\mu)=\text{Var}({\mu})(A_{t}) + \bar{\mu}^\top B_t\bar{\mu}+D^\top_t\bar{\mu}+\mathbb E[F_t]. 
\end{equation}
%In particular, the value function of the original control problem \eqref{cost}-\eqref{model} is given by
%\[
%	V(0,\mathcal X)=\textnormal{Var}({\mu}_{})(A_{0}) +\bar{\mu}_{}^\top B_0\bar{\mu}_{}+D^\top_0\bar{\mu}_{}+\mathbb E[F_0].
%\] 

\item[ii)] The optimal strategy $\widetilde Z$ jumps only at the beginning and the end of the trading period where the initial and terminal jump is given by 
\begin{equation} \label{opt1}
	\Delta \widetilde Z_t = -\frac{I_{t}^A}{\tilde a}(\mathcal X_{t-}-\overline\mu)-\frac{I_{t}^B}{a}\overline\mu-\frac{I_{t}^D}{a} 
	\quad \mbox{and} \quad 
	\Delta \widetilde Z_T = X_{T-} 
\end{equation}	
respectively. On the time interval $(t,T)$ the optimal strategy satisfies the dynamics 
\begin{equation} \label{opt2}
	\begin{split}
	d \widetilde Z_s=&\left(-\frac{\dot I^A_s}{\tilde a}({\mathcal X_s}-\mathbb E[{\mathcal X_s}])-\frac{\dot I^B_s}{a}\mathbb E[{\mathcal X_s}]-\frac{\dot I^D_s}{a}-\frac{I^A_s}{\tilde a}\mathcal H({\mathcal X_s}-\mathbb E[{\mathcal X_s}]) \right. \\ 
	& - \left. \frac{I^B_s}{a}((\mathcal H+\overline{\mathcal H})\mathbb E[{\mathcal X}_s]+\mathcal G)\right)\,ds
-\frac{I^A_s}{\tilde a}\mathcal D_s\,dW_s,\qquad \qquad s\in(t,T)
	\end{split}
\end{equation}
where $\tilde a=\gamma_2\rho+\lambda$, $a=\gamma_2\rho-\gamma_1\alpha+\lambda$, and the processes $I^A,I^B$ and $I^D$ are given by 
\begin{equation*}
		I^A=\begin{pmatrix}
			-\rho A_{11}-\lambda\\
			-\rho \frac{A_{11}}{\gamma_2}+\rho\\
			\frac{\gamma_1(\beta-\alpha)}{2}-\rho A_{13}
		\end{pmatrix}^\top, \quad
		I^B=\begin{pmatrix}
			\frac{\alpha\gamma_1-\gamma_2\rho}{\gamma_2}B_{11}+\alpha B_{13}-\lambda+\frac{\alpha\gamma_1}{2}\\
			\frac{\alpha\gamma_1-\gamma_2\rho}{\gamma_2^2}B_{11}+\alpha \frac{B_{13}}{\gamma_2}+\rho-\frac{\alpha\gamma_1}{2\gamma_2}\\
			\frac{\gamma_1(\beta-\alpha)}{2}+(\gamma_1\alpha-\gamma_2\rho)\frac{B_{13}}{\gamma_2}+\alpha B_{33}
		\end{pmatrix}^\top
\end{equation*}
and
\[
		I^D=	-\frac{\alpha\gamma_1}{2}\mathbb E[x_0]+(\gamma_1\alpha-\gamma_2\rho)\frac{D_1}{2\gamma_2}+\frac{\alpha}{2} D_3.
\]
%\item[iii)] The optimal state dynamics reads 
%\begin{equation}\label{eq:optimal-state}
%	\begin{split}
%		d {\mathcal X}_s=&~\left(\mathcal H {\mathcal X}_s+\overline{\mathcal H}\mathbb E[{\mathcal X}_s]+\mathcal G\right)\,ds+\mathcal D_s\,dW_s+\mathcal K\,dZ_s,\qquad s\in[t,T),
%	\end{split}
%\end{equation}
%where the initial value is given by
%\begin{equation}
%	{\mathcal X}_t=\begin{pmatrix}
%		X_{t-}-\Delta Z_t&Y_{t-}+\gamma_2 \Delta Z_t &C_{t-}
%	\end{pmatrix}^\top.
%\end{equation}
\end{itemize}
\end{theorem}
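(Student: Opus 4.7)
The plan is to prove the theorem by a verification argument of complete-squares type, taking as granted the global solvability of the Riccati system \eqref{eq:tilde-B} together with the auxiliary linear equations \eqref{eq:tilde-A}, \eqref{eq:tilde-D} and the BSDE \eqref{eqn:F}, which is the content of Section \ref{sec-ric}. Under the standing assumption the processes $A$, $B$, $D$ are bounded on $[0,T]$ and $F$ is square integrable, so the candidate value function \eqref{value1} and candidate control \eqref{opt1}--\eqref{opt2} are well-defined objects that remain to be matched with $V$ and the minimizer.

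My first step is to decompose the state as $\mathcal X_s = \bar{\mathcal X}_s + \widehat{\mathcal X}_s$ with $\bar{\mathcal X}_s := \mathbb E[\mathcal X_s]$ and $\widehat{\mathcal X}_s := \mathcal X_s - \bar{\mathcal X}_s$. The mean evolves deterministically via
\begin{equation*}
d\bar{\mathcal X}_s = \big((\mathcal H + \overline{\mathcal H})\bar{\mathcal X}_s + \mathcal G\big)ds + \mathcal K\, d\mathbb E[Z_s],
\end{equation*}
while the centered part satisfies $d\widehat{\mathcal X}_s = \mathcal H \widehat{\mathcal X}_s\, ds + \mathcal D_s\, dW_s + \mathcal K\, d(Z_s - \mathbb E[Z_s])$. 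Accordingly, the candidate value function rewrites as $V(s, \mu_s) = \mathbb E[\widehat{\mathcal X}_s^\top A_s \widehat{\mathcal X}_s] + \bar{\mathcal X}_s^\top B_s \bar{\mathcal X}_s + D_s^\top \bar{\mathcal X}_s + \mathbb E[F_s]$. This decomposition sidesteps the need for a measure-valued It\^o formula on the full law and allows me to apply classical c\`adl\`ag semimartingale It\^o calculus piecewise to the mean part, the centered quadratic part, and the BSDE part.

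I then apply It\^o's formula to each summand of $V(\cdot, \mu_\cdot)$ on $[t, T]$ along an arbitrary $Z \in \mathscr A_t$. Using the ODE systems \eqref{eq:tilde-A}, \eqref{eq:tilde-B}, \eqref{eq:tilde-D} and the BSDE \eqref{eqn:F}, the absolutely continuous drift terms are designed to collapse into $-\lambda\mathbb E[X_s^2]\,ds$ plus a nonnegative complete square $\Phi_s^c(Z)\,ds$ whose vanishing identifies exactly the drift and diffusion coefficients of $\widetilde Z$ in \eqref{opt2}. The quadratic-variation and jump contributions of $Z$ combine, via the identities $A_{11}=\gamma_2 A_{21}$, $A_{12}=\gamma_2 A_{22}+\tfrac{1}{2}$, $A_{13}=\gamma_2A_{23}$ and their $B$-counterparts, into $Y_{s-}dZ_s+\tfrac{\gamma_2}{2}d[Z]_s+\sigma_s d[Z,W]_s$ plus a nonnegative quadratic form in each inner jump $\Delta Z_s$ for $s\in(t,T)$. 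The terminal conditions $A_{11,T}=B_{11,T}=\gamma_2/2$, $A_{13,T}=A_{33,T}=B_{13,T}=B_{33,T}=0$, $D_T=0$, $F_T=0$ together with the liquidation constraint $X_T=0$ (enforced by $\Delta \widetilde Z_T = X_{T-}$) let $\mathbb E[V(T,\mu_T)]$ be absorbed exactly into the terminal quadratic-variation cost $\tfrac{\gamma_2}{2}\mathbb E[(\Delta Z_T)^2]$. Completing squares in the initial jump $\Delta Z_t$ against the quadratic forms defined by $A_t$ and $B_t$ (and by the linear term $D_t$) yields \eqref{opt1}. Putting everything together produces the identity
\begin{equation*}
J(t,Z) = V(t,\mu) + \mathbb E\bigg[\int_t^T \Phi_s(Z)\,ds + \sum_{s \in [t,T)} \Psi_s(\Delta Z_s)\bigg],
\end{equation*}
with $\Phi_s,\Psi_s\geq 0$ vanishing only along $\widetilde Z$, establishing simultaneously $V(t,\mu) = \inf_{Z\in\mathscr A_t} J(t,Z)$ and the characterization of the minimizer in parts (i) and (ii).

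I expect the main obstacle to be the bookkeeping of jumps. A general admissible $Z$ may jump at any $\mathbb F$-stopping time, and every such jump contributes simultaneously to $[Z]$, to $Y_{s-}\Delta Z_s$, and to the change of $(\widehat{\mathcal X},\bar{\mathcal X})$ through $\mathcal K\,\Delta Z_s$. The structural relations $A_{11}=\gamma_2 A_{21}$, $A_{12}=\gamma_2 A_{22}+\tfrac{1}{2}$ etc., which first appeared in the discrete-time derivation in the appendix, are precisely what is needed so that all inner jumps are penalized by a strictly positive quadratic form, forcing $\widetilde Z$ to be continuous on $(t,T)$ and explaining why jumps concentrate only at $t$ and $T$. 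A secondary obstacle is verifying admissibility of $\widetilde Z$, namely $\mathbb E[[\widetilde Z]_T]<\infty$ and $\mathbb E[\sup_{t\leq s\leq T}|\widetilde Z_s|^2]<\infty$; this follows from the boundedness of $A,B,D$, the square integrability of $x_0$, $\sigma$ and $F$, and standard SDE moment estimates applied to the closed-loop dynamics obtained by substituting \eqref{opt2} into \eqref{state_matrix}.
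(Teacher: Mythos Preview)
Your overall strategy---apply It\^o's formula to the candidate quadratic value function along an arbitrary admissible $Z$, use the Riccati/linear ODE systems to reorganize the drift into complete squares, and match terminal conditions---is exactly the paper's route in Section~\ref{sec-ver}. But you have one genuine misconception that would derail the execution.

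You expect the structural relations $A_{11}=\gamma_2 A_{21}$, $A_{12}=\gamma_2 A_{22}+\tfrac12$, $A_{13}=\gamma_2 A_{23}$ (and the $B,D$ analogues) to leave behind a \emph{strictly positive} quadratic penalty $\Psi_s(\Delta Z_s)$ on inner jumps, and then to recover the initial jump \eqref{opt1} by completing the square in $\Delta Z_t$. In fact those relations say precisely that $\mathcal K^\top A=\mathcal K^\top B=(0,-\tfrac12,0)$ and $D^\top\mathcal K=0$, which forces
\[
\Delta\mathcal X_s^\top A_s\Delta\mathcal X_s=-\tfrac{\gamma_2}{2}(\Delta Z_s)^2,\qquad 2(A_s\mathcal X_{s-})^\top\mathcal K\,dZ_s=-Y_{s-}\,dZ_s,
\]
and similarly the $A$- and $B$-contributions from the mean part cancel each other. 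Together with the continuous quadratic-variation terms, the jump and $dZ$-contributions from It\^o's formula reproduce \emph{exactly} $-(Y_{s-}\,dZ_s+\tfrac{\gamma_2}{2}\,d[Z]_s+\sigma_s\,d[Z,W]_s)$, with no residual $\Psi_s$. The identity you end up with is
\[
J(t,Z)=V(t,\mu)+\mathbb E\Big[\int_t^T\tfrac{1}{\tilde a}\big(I^A_s(\mathcal X_s-\bar\mu_s)\big)^2\,ds+\int_t^T\tfrac{1}{a}\big(I^B_s\bar\mu_s+I^D_s\big)^2\,ds\Big],
\]
with \emph{no} jump sum; see Proposition~\ref{lemma:verification}. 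Consequently there is nothing to complete the square against to produce \eqref{opt1}: the initial jump is not selected by a quadratic penalty on $\Delta Z_t$, but by the requirement that the \emph{integrand} vanish from time $t$ on. One checks directly (this is the second step in Section~\ref{sec-ver}) that the jump \eqref{opt1} sets $I^A_t(\mathcal X_t-\bar\mu_t)=0$ and $I^B_t\bar\mu_t+I^D_t=0$ at the post-jump state, using $I^A\mathcal K=\tilde a$ and $I^B\mathcal K=a$, and that the continuous dynamics \eqref{opt2} keeps both quantities identically zero on $(t,T)$. The absence of inner jumps for $\widetilde Z$ is then a feature of this particular construction, not a consequence of a jump penalty.
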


%There are two main challenges when proving the above theorem. The first is to prove that the Riccati-type ODE \eqref{eq:tilde-B} is well posed and has a unique global solution. This is achieved in Section 3. The second is to prove that the candidate optimal strategy is indeed optimal. The argument is given in Section 4. 

%%%%%%%%%%%%%%%%%%%%%%%%%%%%%%%%%%%%%%%
%%%%%%%%%%%%%%%%%%%%%%%%%%%%%%%%%%%%%%%
%%%%%%%%%%%%%%%%%%%%%%%%%%%%%%%%%%%%%%%

\section{Wellposedness of the Riccati Equation }\label{sec-ric}

In this section, we prove that the system \eqref{eq:tilde-B}  is well posed and has a unique global solution. 
Specifically, we prove the following result.
\begin{theorem}\label{thm:risk-averse}
	In addition to the {\bf standing assumption}, let us assume that $\alpha$ is small enough and that $\lambda,\gamma_2,\rho>0$. Then the matrix Riccati equation \eqref{eq:tilde-B} admits a unique solution 
\[	
	B \in L^\infty([0,T];\mathbb R^3)\cap C([0,T];\mathbb R^3). 
\]
\end{theorem}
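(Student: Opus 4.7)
My plan is to reduce the system \eqref{eq:tilde-B} to a standard symmetric matrix Riccati equation via an affine change of variables, and then invoke the classical wellposedness theorem of Kohlmann and Tang \cite{Kohlmann2003}. Because the structural constraints $B_{11}=\gamma_2 B_{21}$, $B_{12}=\gamma_2 B_{22}+\tfrac{1}{2}$, $B_{13}=\gamma_2 B_{23}$ already determine every entry of the symmetric matrix $B$ from the three scalars $B_{11}, B_{13}, B_{33}$, the first step is to repackage these three scalar ODEs as a $2\times 2$ symmetric matrix
\[
 P_t \;=\; \begin{pmatrix} B_{11,t} & B_{13,t} \\ B_{13,t} & B_{33,t} \end{pmatrix},
\]
and to rewrite \eqref{eq:tilde-B} as a single symmetric-matrix-valued ODE for $P$.

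The second step is the heart of the argument. I would look for constant matrices $\Phi$ and $\Psi$ so that the transformed variable $\widetilde P = \Phi^\top P\, \Phi + \Psi$ satisfies a standard symmetric LQ Riccati equation
\[
 \dot{\widetilde P}_t + \widetilde A^\top \widetilde P_t + \widetilde P_t \widetilde A \;-\; \widetilde P_t \widetilde S \widetilde P_t \;+\; \widetilde Q \;=\; 0, \qquad \widetilde P_T \;=\; \Phi^\top P_T \Phi + \Psi,
\]
with constant coefficient matrices $\widetilde A$, $\widetilde S \succeq 0$ and $\widetilde Q \succeq 0$. The common denominator $\gamma_2^2(\gamma_2\rho-\gamma_1\alpha+\lambda)$ in all three quadratic terms of \eqref{eq:tilde-B} strongly suggests that a single affine transformation suffices, and that the quadratic part should factor as an outer product responsible for the $\widetilde P \widetilde S \widetilde P$ term. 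The linear cross-terms involving $(\beta-\alpha)$ and $\tfrac{\gamma_1(\beta-\alpha)}{\gamma_2}$ in the drift are to be absorbed into $\widetilde A$ via the shift $\Psi$, while the constant inhomogeneities (e.g.~$-\lambda$, $\gamma_1\gamma_2\alpha$, $\gamma_1\gamma_2(\beta-\alpha)$) will be matched against $\widetilde Q$.

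The third step is to verify, for $\alpha$ small, that the transformed equation satisfies the hypotheses of Kohlmann and Tang. The standing assumption $\gamma_2\rho-\gamma_1\alpha+\lambda>0$ guarantees that the "effective control weight" stays positive, while the additional assumption $\lambda\rho\gamma_2>0$ provides the positivity of $\widetilde Q$ at the leading order $\alpha=0$. The mean-field feedback then contributes perturbations of order $\alpha$ to $\widetilde A$, $\widetilde S$ and $\widetilde Q$, which for sufficiently small $\alpha$ do not destroy the semidefiniteness required for the Kohlmann--Tang theorem. Invoking that result yields a unique symmetric $\widetilde P \in L^\infty([0,T];\mathbb{S}^2) \cap C([0,T];\mathbb{S}^2)$, and inverting the transformation gives $B\in L^\infty([0,T];\mathbb R^3)\cap C([0,T];\mathbb R^3)$.

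The main obstacle is the second step: identifying the correct $(\Phi,\Psi)$. The raw right-hand sides in \eqref{eq:tilde-B} do not decouple cleanly into "drift" and "quadratic" parts of standard LQ form, and the natural ansatz will pin down $\Phi$ and $\Psi$ only through a system of algebraic compatibility conditions in the constants $(\gamma_1,\gamma_2,\alpha,\beta,\rho,\lambda)$; these need to be solvable under the standing assumption. A secondary difficulty is to ensure that the terminal datum $(B_{11,T},B_{13,T},B_{33,T})=(\gamma_2/2,0,0)$ is mapped to a $\widetilde P_T$ compatible with the positivity/definiteness requirements of Kohlmann--Tang, which is the reason smallness of $\alpha$ (rather than merely $\gamma_2\rho-\gamma_1\alpha+\lambda>0$) enters the hypothesis. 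Once the transformation and these sign conditions are in place, existence, uniqueness and the claimed regularity follow directly from the cited wellposedness result.
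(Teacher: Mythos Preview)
Your proposal is correct and follows essentially the same route as the paper: rewrite \eqref{eq:tilde-B} as a $2\times2$ symmetric matrix Riccati equation for $P=\begin{pmatrix}B_{11}&B_{13}\\B_{13}&B_{33}\end{pmatrix}$, apply an affine shift to bring it into standard form, and then invoke Kohlmann--Tang. The paper's transformation is in fact simpler than your general ansatz $\widetilde P=\Phi^\top P\Phi+\Psi$: a pure additive shift $\widetilde{\mathcal P}=\mathcal P+\widetilde\Lambda$ with a rank-one matrix $\widetilde\Lambda=\Lambda\begin{pmatrix}\alpha\\ \beta-\alpha\end{pmatrix}\begin{pmatrix}\alpha&\beta-\alpha\end{pmatrix}$ depending on a single scalar $\Lambda$ suffices, and the bulk of the work is then a case analysis showing that $\Lambda$ can be chosen so that the new zeroth-order term $\widetilde{\mathcal M}$ is positive semidefinite for small $\alpha$.
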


To prove Theorem \ref{thm:risk-averse}, it will be convenient to introduce the matrix-valued processes
\begin{equation*}
	\begin{aligned}
		\mathcal P&=\left(\begin{matrix}
			B_{11}&B_{13}\\
			B_{13}&B_{33}
		\end{matrix}\right),\quad	
		\mathcal N_1 =\left(\begin{matrix}
			\frac{\gamma_1\alpha}{\gamma_2}+\frac{(\gamma_1\alpha-\gamma_2\rho)(\gamma_1\alpha-2\lambda)}{2\gamma_2(\gamma_2\rho-\gamma_1\alpha+\lambda)}&\alpha+\frac{\alpha(\gamma_1\alpha-2\lambda)}{2(\gamma_2\rho-\gamma_1\alpha+\lambda)}\\
			\frac{\gamma_1(\beta-\alpha)}{\gamma_2}+\frac{(\gamma_1\alpha-\gamma_2\rho)\gamma_1(\beta-\alpha)}{2\gamma_2(\gamma_2\rho-\gamma_1\alpha+\lambda)}&\beta-\alpha+\frac{\gamma_1\alpha(\beta-\alpha)}{2(\gamma_2\rho-\gamma_1\alpha+\lambda)}
		\end{matrix}\right),\quad
		\mathcal N_2=\begin{pmatrix}
			\gamma_1\alpha-\gamma_2\rho\\
			\gamma_2\alpha
		\end{pmatrix},\\
		\mathcal N_0&=\frac{1}{\gamma_2^2(\gamma_2\rho-\gamma_1\alpha+\lambda)}, \qquad \mathcal M=-\begin{pmatrix}
			\frac{\gamma_1^2\alpha^2-4\lambda\gamma_2\rho}{4(\gamma_2\rho-\gamma_1\alpha+\lambda)}&\frac{\gamma_1(\beta-\alpha)(\gamma_1\alpha-2\lambda)}{4(\gamma_2\rho-\gamma_1\alpha+\lambda)}\\
			\frac{\gamma_1(\beta-\alpha)(\gamma_1\alpha-2\lambda)}{4(\gamma_2\rho-\gamma_1\alpha+\lambda)}&\frac{\gamma_1^2(\beta-\alpha)^2}{4(\gamma_2\rho-\gamma_1\alpha+\lambda)},
		\end{pmatrix},
		\qquad \mathcal G=\begin{pmatrix}
			\gamma_2 & 0\\
			0 & 0
		\end{pmatrix},
	\end{aligned}
\end{equation*}
so that the system \eqref{eq:tilde-B} can be rewritten in the matrix form as:
\begin{equation}\label{eq:multi-Riccati}
	\left\{
	\begin{aligned}
		\dot{\mathcal P}_t&=\left(\mathcal P_t\mathcal N_2\mathcal N_0\mathcal N_2^\top\mathcal P_t+\mathcal N_1\mathcal P_t+\mathcal P_t\mathcal N_1^\top -\mathcal M\right),\quad t\in[0,T)\\
		\mathcal P_T&=\mathcal G.
	\end{aligned}\right.
\end{equation}
The matrix-valued Riccati equation \eqref{eq:multi-Riccati} does not satisfy the requirements of \cite[Proposition 2.1, 2.2]{Kohlmann2003} as $\cal M$ is not positive semi-definite. To overcome this problem we employ a sophisticated transformation to bring the equation into standard Riccati-form. To this end, we define%it will convenient to introduce the matrix-valued processes
\[
			\widetilde{\mathcal P}=\mathcal P+\widetilde\Lambda,
\]   
%and introduce the matrix-valued process 
%\[
%	\widetilde{\mathcal P}=\mathcal P+\widetilde\Lambda,
%\]
	where the matrix $\widetilde \Lambda$ is given by 
	\begin{equation*}
		\widetilde{\Lambda}=
		\begin{pmatrix}	\lambda_1&\lambda_2\\
			\lambda_2& \lambda_3			\end{pmatrix}=
		\begin{pmatrix}	\Lambda\alpha^2&\Lambda\alpha(\beta-\alpha)\\
			\Lambda\alpha(\beta-\alpha)& \Lambda(\beta-\alpha)^2		\end{pmatrix}
	\end{equation*}
	for some constant $\Lambda$ that will be determined in what follows. The process $\widetilde{\mathcal P}$ satisfies the dynamics 
	\begin{equation}\label{eq:Riccati-tilde-app}
		\begin{split}
			\dot{\widetilde{\mathcal P}}_t = & ~ \widetilde{\mathcal P}_t\mathcal N_2\mathcal N_0\mathcal N_2^\top\widetilde{\mathcal P}_t + \widetilde{\mathcal N_1}\widetilde{\mathcal P}_t+\widetilde{\mathcal P}_t\widetilde{\mathcal N_1}^\top -\widetilde{\mathcal M},\quad t\in[0,T) \\
			\widetilde{\mathcal P}_T = & ~ \mathcal G+\widetilde{\Lambda},
		\end{split}
	\end{equation}
	where the matrices $\widetilde{\mathcal N_1}$ and $\widetilde{\mathcal M}$ are given by, respectively, 
	\begin{equation*}
		\widetilde{\mathcal N_1}=\mathcal N_1-	\widetilde{\Lambda}\mathcal N_2\mathcal N_0\mathcal N_2^\top \qquad \mbox{and} \qquad
%	\end{equation*}
%	and
%	\begin{equation*}
		\widetilde{\mathcal M}=-\widetilde{\Lambda}\mathcal N_2\mathcal N_0\mathcal N_2^\top\widetilde{\Lambda} +\mathcal N_1\widetilde{\Lambda} +\widetilde{\Lambda}\mathcal N_1^\top+\mathcal M.
	\end{equation*}
It is enough to prove that the above matrix-valued ODE has a unique solution for a suitable $\Lambda \in \mathbb R$. 

\vspace{2mm}

\textsc{Proof of Theorem \ref{thm:risk-averse}}.	We are going to show that the equation \eqref{eq:Riccati-tilde-app} satisfies the assumptions of \cite[Proposition 2.1, 2.2]{Kohlmann2003}, that is that the matrix $\widetilde{\cal M}$ is positive semidefinite for a suitably chosen constant $\Lambda$. The entries of $\widetilde{\cal M}$ are given by, respectively, 
		\begin{align*}
	%	\begin{split}
			\widetilde{\mathcal M}_{11}=&
			-\frac{	 \Big\{	\lambda_1(\gamma_1\alpha-\gamma_2\rho)+\lambda_2\gamma_2\alpha		\Big\}^2	}{\gamma_2^2(\gamma_2\rho-\gamma_1\alpha+\lambda)} + 2\lambda_1\left(\frac{\gamma_1\alpha}{\gamma_2}+\frac{(\gamma_1\alpha-\gamma_2\rho)(\gamma_1\alpha-2\lambda)}{2\gamma_2(\gamma_2\rho-\gamma_1\alpha+\lambda)} \right) \\
			&+2\lambda_2\left(		\alpha+\frac{\alpha(\gamma_1\alpha-2\lambda)}{2(\gamma_2\rho-\gamma_1\alpha+\lambda)}				\right) -\frac{\gamma_1^2\alpha^2-4\lambda\gamma_2\rho}{4(\gamma_2\rho-\gamma_1\alpha+\lambda)},     \\
%		\end{split}
%	\end{align*}
%	\begin{equation*}
%		\begin{split}
			\widetilde{\mathcal M}_{12} = & ~ \widetilde {\mathcal M}_{21}\\
			=&~ -\frac{	\Big\{	 \lambda_1(	\gamma_1\alpha-\gamma_2\rho	)+\lambda_2\gamma_2\alpha  				\Big\} \Big\{		\lambda_2(	\gamma_1\alpha-\gamma_2\rho		)+\lambda_3\gamma_2\alpha			\Big\}		}{ \gamma_2^2(\gamma_2\rho-\gamma_1\alpha+\lambda) }\\ &~+\lambda_2\left(\frac{\gamma_1\alpha}{\gamma_2}+\frac{(\gamma_1\alpha-\gamma_2\rho)(\gamma_1\alpha-2\lambda)}{2\gamma_2(\gamma_2\rho-\gamma_1\alpha+\lambda)} \right) 
			+\lambda_3\left(		\alpha+\frac{\alpha(\gamma_1\alpha-2\lambda)}{2(\gamma_2\rho-\gamma_1\alpha+\lambda)}				\right) \\
			&~+ \lambda_1\left(\frac{\gamma_1(\beta-\alpha)}{\gamma_2}+\frac{(\gamma_1\alpha-\gamma_2\rho)\gamma_1(\beta-\alpha)}{2\gamma_2(\gamma_2\rho-\gamma_1\alpha+\lambda)} \right) +\lambda_2\left(	\beta-\alpha+\frac{\gamma_1\alpha(\beta-\alpha)}{2(\gamma_2\rho-\gamma_1\alpha+\lambda)}					\right)\\
			&~-           \frac{\gamma_1(\beta-\alpha)(\gamma_1\alpha-2\lambda)}{4(\gamma_2\rho-\gamma_1\alpha+\lambda)},\\
%		\end{split}
%	\end{equation*}
%	and
%	\begin{align*}
%		\begin{split}
			\widetilde{\mathcal M}_{22}= &~-\frac{ \Big\{\lambda_2(\gamma_1\alpha-\gamma_2\rho)+\lambda_3\gamma_2\alpha		\Big\}^2 }{\gamma_2^2(\gamma_2\rho-\gamma_1\alpha+\lambda)}+2\lambda_2\left(\frac{\gamma_1(\beta-\alpha)}{\gamma_2}+\frac{(\gamma_1\alpha-\gamma_2\rho)\gamma_1(\beta-\alpha)}{2\gamma_2(\gamma_2\rho-\gamma_1\alpha+\lambda)} \right)  \\
			&~  +2\lambda_3\left(	\beta-\alpha+\frac{	\gamma_1\alpha(\beta-\alpha)	}{2(\gamma_2\rho-\gamma_1\alpha+\lambda)}			\right)- \frac{	\gamma_1^2(\beta-\alpha)^2	}{ 4(\gamma_2\rho-\gamma_1\alpha+\lambda)			}.
	%	\end{split}
	\end{align*}
	
	In terms of the functions 
		\begin{equation*}
		\begin{split}
			f(\Lambda) := &~-\frac{	\alpha^2\Big(	\gamma_1\alpha-\gamma_2\rho+\gamma_2(\beta-\alpha)	\Big)^2	}{\gamma_2^2(\gamma_2\rho-\gamma_1\alpha+\lambda)} \Lambda^2   + 2\left(\frac{\gamma_1\alpha}{\gamma_2}+\frac{\gamma_1\alpha(\gamma_1\alpha-\gamma_2\rho)}{2\gamma_2(\gamma_2\rho-\gamma_1\alpha+\lambda)} +\beta-\alpha \right. \\
			& ~~ \left. +\frac{\gamma_1\alpha(\beta-\alpha)}{2(\gamma_2\rho-\gamma_1\alpha+\lambda)}				\right)\Lambda
			+ \frac{\gamma_1^2}{4(\gamma_2\rho-\gamma_1\alpha+\lambda)}, \\
			g(\Lambda) :=&~\lambda\left(	\frac{\gamma_1\alpha-\gamma_2\rho}{\gamma_2(\gamma_2\rho-\gamma_1\alpha+\lambda)}+\frac{\beta-\alpha}{\gamma_2\rho-\gamma_1\alpha+\lambda}\right)\Lambda,
		\end{split}
	\end{equation*}
the matrix $\widetilde{\cal M}$ can be written as
%	
%	
%	
%	With our choice of $\Lambda,$ we have that $\widetilde{\mathcal M}$ can be simplified as
	\begin{equation*}
		\begin{pmatrix}
			 f(\Lambda)\alpha^2-2\alpha^2g(\Lambda)+\frac{\lambda\gamma_2\rho}{\gamma_2\rho-\gamma_1\alpha+\lambda}&f(\Lambda)\alpha(\beta-\alpha)-\alpha(\beta-\alpha) g(\Lambda)+\frac{\lambda\gamma_1(\beta-\alpha)}{2(\gamma_2\rho-\gamma_1\alpha+\lambda)}\\
			f(\Lambda)\alpha(\beta-\alpha)-\alpha(\beta-\alpha) g(\Lambda)+\frac{\lambda\gamma_1(\beta-\alpha)}{2(\gamma_2\rho-\gamma_1\alpha+\lambda)}&f(\Lambda)(\beta-\alpha)^2
		\end{pmatrix}.
	\end{equation*}
%	where
%	\begin{equation*}
%		\begin{split}
%			f:\Lambda\mapsto&~-\frac{	\alpha^2\Big(	\gamma_1\alpha-\gamma_2\rho+\gamma_2(\beta-\alpha)	\Big)^2	}{\gamma_2^2(\gamma_2\rho-\gamma_1\alpha+\lambda)} \Lambda^2   + 2\left(\frac{\gamma_1\alpha}{\gamma_2}+\frac{\gamma_1\alpha(\gamma_1\alpha-\gamma_2\rho)}{2\gamma_2(\gamma_2\rho-\gamma_1\alpha+\lambda)} +\beta-\alpha+\frac{\gamma_1\alpha(\beta-\alpha)}{2(\gamma_2\rho-\gamma_1\alpha+\lambda)}				\right)\Lambda\\
%			&~-\frac{\gamma_1^2}{4(\gamma_2\rho-\gamma_1\alpha+\lambda)},\qquad\textrm{and}\\
%			g:\Lambda\mapsto&~\lambda\left(	\frac{\gamma_1\alpha-\gamma_2\rho}{\gamma_2(\gamma_2\rho-\gamma_1\alpha+\lambda)}+\frac{\beta-\alpha}{\gamma_2\rho-\gamma_1\alpha+\lambda}\right)\Lambda.
%		\end{split}
%	\end{equation*}

	Straightforward calculations show that 
	\begin{equation*}
		\begin{split}
			& ~ \textrm{Det}[\widetilde{\mathcal M}] \\
			=&~(\beta-\alpha)^2\left(f(\Lambda)\frac{\lambda\gamma_2\rho}{\gamma_2\rho-\gamma_1\alpha+\lambda}-\frac{\lambda^2\gamma_1^2}{4(\gamma_2\rho-\gamma_1\alpha+\lambda)^2}-\alpha^2g^2(\Lambda)-(f(\Lambda)-g(\Lambda))\frac{\lambda\gamma_1\alpha}{\gamma_2\rho-\gamma_1\alpha+\lambda}\right)\\
			=&~\frac{\lambda\gamma_2\rho(\beta-\alpha)^2}{\gamma_2\rho-\gamma_1\alpha+\lambda}\left(f(\Lambda)-\frac{\lambda\gamma_1^2}{4\gamma_2\rho(\gamma_2\rho-\gamma_1\alpha+\lambda)}\right)+O(\alpha).
		\end{split}
	\end{equation*}
	% If $\alpha=0$, $\widetilde Q_{11}=a+\frac{\lambda\gamma_2\rho}{\gamma_2\rho+\lambda}$, $\widetilde Q_{22}=2\Lambda\beta^3-\frac{\gamma_1^2\beta^2}{4(\gamma_2\rho+\lambda)},~ \textrm{Det}[\widetilde Q]=\beta^2\left((2\Lambda\beta-\frac{\gamma_1^2}{4(\gamma_2\rho+\lambda)})\frac{\lambda\gamma_2\rho}{\gamma_2\rho+\lambda}-\frac{\lambda^2\gamma_1^2}{4(\gamma_2\rho+\lambda)^2}\right)$, then $\widetilde Q_{22},\textrm{Det}[\widetilde Q]>0$ when $\Lambda$ is large enough. 
	The proof of the positive semi-definiteness of $\widetilde{\cal M}$ is now split into the following two cases.
	
	\begin{itemize}
	\item \textbf{Case 1. $\gamma_1\alpha-\gamma_2\rho+\gamma_2(\beta-\alpha)\leq0.$} In this case, 
	\begin{equation}\label{eq:g<0}
		g(\Lambda)\leq 0\quad \textrm{ for all } \quad \Lambda\geq 0
	\end{equation}
	and we put 
	\[
		h_1(\Lambda):=f(\Lambda)-\frac{\lambda\gamma_1^2}{4\gamma_2\rho(\gamma_2\rho-\gamma_1\alpha+\lambda)}.
	\] 

	\vspace{4mm}	
	
	\textbf{Case 1.1. $\alpha(\gamma_1\alpha-\gamma_2\rho+\gamma_2(\beta-\alpha))=0$.} In this case either $\alpha=0$ or $\gamma_1\alpha-\gamma_2\rho+\gamma_2(\beta-\alpha)=0$ and hence $h_1$ is linear with a positive leading coefficient. Thus, choosing $\Lambda=\Lambda_0>0$ large enough, $h_1(\Lambda_0)>0$ and thus $f(\Lambda_0)>0$ as well. Moreover, 
	\begin{equation*}
		\begin{split}
			\widetilde {\mathcal M}_{11}=f(\Lambda_0)\alpha^2-2g(\Lambda_0)\alpha^2+\frac{\lambda\gamma_2\rho}{\gamma_2\rho-\gamma_1\alpha+\lambda}>0,\quad\widetilde {\cal M}_{22}=f(\Lambda_0)(\beta-\alpha)^2>0\\
		\end{split}
	\end{equation*}
	and
	\begin{equation*}
		\textrm{Det}[\widetilde {\cal M}]=\frac{\lambda\gamma_2\rho(\beta-\alpha)^2}{\gamma_2\rho-\gamma_1\alpha+\lambda}\left(f(\Lambda_0)-\frac{\lambda\gamma_1^2}{4\gamma_2\rho(\gamma_2\rho-\gamma_1\alpha+\lambda)}\right)+O(\alpha)>0
	\end{equation*}
	by choosing $\Lambda_0$ large enough. Hence, in this case our matrix is positive semi-definite.
	
	\vspace{4mm}	
	
	\textbf{Case 1.2. $\alpha(\gamma_1\alpha-\gamma_2\rho+\gamma_2(\beta-\alpha))<0$.} In this case, $h_1$ is a quadratic function with a maximum point $\Lambda_0>0$.  
	If $\alpha$ is small enough, then the discriminant of $h_1$ is positive since 
	\begin{equation*}
		\begin{split}
			&~\left(	2(\beta-\alpha) +\frac{\gamma_1\alpha\{\gamma_2\rho-\gamma_1\alpha+\gamma_2(\beta-\alpha)+2\lambda\}}{\gamma_2(\gamma_2\rho-\gamma_1\alpha+\lambda)}				\right)^2\\
			&~-4\frac{	\alpha^2\Big(	\gamma_1\alpha-\gamma_2\rho+\gamma_2(\beta-\alpha)	\Big)^2	}{\gamma_2^2(\gamma_2\rho-\gamma_1\alpha+\lambda)}\cdot\left(\frac{\gamma_1^2}{4(\gamma_2\rho-\gamma_1\alpha+\lambda)}+\frac{\lambda\gamma_1^2}{4\gamma_2\rho(\gamma_2\rho-\gamma_1\alpha+\lambda)} \right)\\
			=&~\left(	2(\beta-\alpha) +\frac{\gamma_1\alpha\{\gamma_2\rho-\gamma_1\alpha+\gamma_2(\beta-\alpha)+2\lambda\}}{\gamma_2(\gamma_2\rho-\gamma_1\alpha+\lambda)}				\right)^2\\
			&~-\frac{	\gamma_1^2\alpha^2\Big(	\gamma_1\alpha-\gamma_2\rho+\gamma_2(\beta-\alpha)	\Big)^2	}{\gamma_2^2(\gamma_2\rho-\gamma_1\alpha+\lambda)^2}-\frac{	\lambda\gamma_1^2\alpha^2\Big(	\gamma_1\alpha-\gamma_2\rho+\gamma_2(\beta-\alpha)	\Big)^2	}{\rho\gamma_2^3(\gamma_2\rho-\gamma_1\alpha+\lambda)^2}\\
			%		>&\left(	2(\beta-\alpha) +\frac{\gamma_1\alpha\{\gamma_2\rho-\gamma_1\alpha+\gamma_2(\beta-\alpha)\}}{\gamma_2(\gamma_2\rho-\gamma_1\alpha+\lambda)}				\right)^2\\
			%		&-\frac{	\gamma_1^2\alpha^2\Big(	\gamma_2\rho-\gamma_1\alpha+\gamma_2(\beta-\alpha)	\Big)^2	}{\gamma_2^2(\gamma_2\rho-\gamma_1\alpha+\lambda)^2}-\frac{	\lambda\gamma_1^2\alpha^2\Big(	\gamma_1\alpha-\gamma_2\rho+\gamma_2(\beta-\alpha)	\Big)^2	}{\rho\gamma_2^3(\gamma_2\rho-\gamma_1\alpha+\lambda)^2}\\
			>&~4(\beta-\alpha)^2-o(\alpha)\\
			>&~0.
		\end{split}
	\end{equation*}	
	%when $\alpha$ is small. 
	As a result, $f(\Lambda_0)-\frac{\lambda\gamma_1^2}{4\gamma_2\rho(\gamma_2\rho-\gamma_1\alpha+\lambda)}>0,$ which implies that $f(\Lambda_0)>0$.  Thus,  
	\begin{equation*}
		\begin{split}
			\widetilde {\mathcal M}_{11}=f(\Lambda_0)\alpha^2-2g(\Lambda_0)\alpha^2+\frac{\lambda\gamma_2\rho}{\gamma_2\rho-\gamma_1\alpha+\lambda}>0,\quad\widetilde {\cal M}_{22}=f(\Lambda_0)(\beta-\alpha)^2>0\\
		\end{split}
	\end{equation*}
	and
	\begin{equation*}
		\textrm{Det}[\widetilde {\cal M}]=\frac{\lambda\gamma_2\rho(\beta-\alpha)^2}{\gamma_2\rho-\gamma_1\alpha+\lambda}\left(f(\Lambda_0)-\frac{\lambda\gamma_1^2}{4\gamma_2\rho(\gamma_2\rho-\gamma_1\alpha+\lambda)}\right)+O(\alpha)>0
	\end{equation*}	
	for small $\alpha.$ Hence, in this case, too, the matrix $\widetilde{\mathcal M}$ is positive semidefinite. 
	
	\item \textbf{Case 2: $\gamma_1\alpha-\gamma_2\rho+\gamma_2(\beta-\alpha)>0.$} In this case, we put
	\begin{equation*}
		\begin{split}
			h_2(\Lambda)
			:=&~f(\Lambda)-2g(\Lambda)\\
			=&~-\frac{	\alpha^2\Big(	\gamma_1\alpha-\gamma_2\rho+\gamma_2(\beta-\alpha)	\Big)^2	}{\gamma_2^2(\gamma_2\rho-\gamma_1\alpha+\lambda)} \Lambda^2 \\
			& ~~  + \left(\frac{2\gamma_2(\gamma_2\rho-\gamma_1\alpha)(\beta-\alpha)+\gamma_1\alpha(\gamma_2\rho-\gamma_1\alpha)+\gamma_1\gamma_2\alpha(\beta-\alpha)+2\gamma_2\rho\lambda}{\gamma_2(\gamma_2\rho-\gamma_1\alpha+\lambda)} 		\right)\Lambda\\
			& ~ -\frac{\gamma_1^2}{4(\gamma_2\rho-\gamma_1\alpha+\lambda)}.\\
		\end{split}
	\end{equation*}
	The discriminant of the quadratic function $h_2$ is positive since 
	\begin{equation*}
		\begin{split}
			&~\left(\frac{2\gamma_2(\gamma_2\rho-\gamma_1\alpha)(\beta-\alpha)+\gamma_1\alpha(\gamma_2\rho-\gamma_1\alpha)+\gamma_1\gamma_2\alpha(\beta-\alpha)+2\gamma_2\rho\lambda}{\gamma_2(\gamma_2\rho-\gamma_1\alpha+\lambda)} \right)^2\\
			&~-4\frac{	\alpha^2\Big(	\gamma_1\alpha-\gamma_2\rho+\gamma_2(\beta-\alpha)	\Big)^2	}{\gamma_2^2(\gamma_2\rho-\gamma_1\alpha+\lambda)}\cdot\frac{\gamma_1^2}{4(\gamma_2\rho-\gamma_1\alpha+\lambda)}\\
			=&~\left(\frac{2\gamma_2(\gamma_2\rho-\gamma_1\alpha)(\beta-\alpha)+\gamma_1\alpha(\gamma_2\rho-\gamma_1\alpha)+\gamma_1\gamma_2\alpha(\beta-\alpha)+2\gamma_2\rho\lambda}{\gamma_2(\gamma_2\rho-\gamma_1\alpha+\lambda)} \right)^2\\
			&~-\frac{	\Big(\gamma_1\alpha	(\gamma_2\rho-\gamma_1\alpha)+\gamma_1\gamma_2\alpha(\beta-\alpha)	\Big)^2	}{\gamma_2^2(\gamma_2\rho-\gamma_1\alpha+\lambda)^2}
			> 0.
		\end{split}
	\end{equation*}	
	Let us denote the maximum point by $\Lambda_1$. Then $\Lambda_1>0$ and so $h_2(\Lambda_1)>0.$ To show that the determinant of $\widetilde M$ is positive  we first show that $$f(\Lambda_1)-\frac{\lambda\gamma_1^2}{4\gamma_2\rho(\gamma_2\rho-\gamma_1\alpha+\lambda)}>0$$ for $\alpha$ small enough. Indeed,
	\begin{equation*}\label{Lambda-1}
		\begin{split}
			&~f(\Lambda_1)-\frac{\lambda\gamma_1^2}{4\gamma_2\rho(\gamma_2\rho-\gamma_1\alpha+\lambda)}\\
			=&~\frac{	1	}{4\alpha^2\Big(	\gamma_1\alpha-\gamma_2\rho+\gamma_2(\beta-\alpha)	\Big)^2(\gamma_2\rho-\gamma_1\alpha+\lambda)}\cdot\\
			&~\Bigg(\Big\{2\gamma_2(\beta-\alpha)(\gamma_2\rho-\gamma_1\alpha)+2\lambda(\gamma_1\alpha-\gamma_2\rho+\gamma_2(\beta-\alpha))+\gamma_1\alpha(\gamma_2\rho-\gamma_1\alpha+\gamma_2(\beta-\alpha))+2\gamma_2\rho\lambda			\Big\}^2\\
			&~\quad-4\lambda^2(\gamma_1\alpha-\gamma_2\rho+\gamma_2(\beta-\alpha))^2-\alpha^2\gamma_1^2\Big(	\gamma_1\alpha-\gamma_2\rho+\gamma_2(\beta-\alpha)	\Big)^2\left(1+\frac{\lambda}{\gamma_2\rho}\right)\Bigg)\\
			>&~\frac{	1	}{4\alpha^2\Big(	\gamma_1\alpha-\gamma_2\rho+\gamma_2(\beta-\alpha)	\Big)^2(\gamma_2\rho-\gamma_1\alpha+\lambda)}\cdot\left(4\gamma_2^2\rho^2\lambda^2-o(\alpha)\right)>0\\
		\end{split}
	\end{equation*}	
	for $\alpha$ small enough. In this case, %Therefore, by letting $\alpha$ be small enough we have that
	\begin{equation*}
		\begin{split}
			\widetilde {\cal M}_{11}=h_2(\Lambda_1)\alpha^2+\frac{\lambda\gamma_2\rho}{\gamma_2\rho-\gamma_1\alpha+\lambda}>0,\quad\widetilde {\cal M}_{22}=f(\Lambda_1)(\beta-\alpha)^2>0\\
		\end{split}
	\end{equation*}
	and
	\begin{equation*}
		\textrm{Det}[\widetilde {\cal M}]=\frac{\lambda\gamma_2\rho(\beta-\alpha)^2}{\gamma_2\rho-\gamma_1\alpha+\lambda}\left(f(\Lambda_1)-\frac{\lambda\gamma_1^2}{4\gamma_2\rho(\gamma_2\rho-\gamma_1\alpha+\lambda)}\right)+O(\alpha)>0,
	\end{equation*}	
	and so $\widetilde{\cal M}$ is positive semidefinite. 
	\end{itemize}
	In conclusion, we can always find a constant $\Lambda>0$ such that $\widetilde {\cal M}$ is positive semidefinite. Since the terminal value $$\begin{pmatrix}	\frac{\gamma_2}{2}+\Lambda\alpha^2&\Lambda\alpha(\beta-\alpha)\\
		\Lambda\alpha(\beta-\alpha)& \Lambda(\beta-\alpha)^2		\end{pmatrix}$$ is positive definite, all the coefficients in  \eqref{eq:Riccati-tilde-app}  satisfy the requirements in \cite[Proposition 2.1, 2.2]{Kohlmann2003}. As a result, the system \eqref{eq:Riccati-tilde-app} has a unique solution in $L^\infty( [0,T];\mathbb S^2 )\cap C([0,T];\mathbb S^2)$. %which implies that \eqref{eq:multi-Riccati} has a unique solution in $L^\infty( [0,T];\mathbb S^2 )\cap C([0,T])$.
\hfill $\Box$

\begin{remark}
	In the case of risk-neutral investors, i.e.~if $\lambda=0$, the assumption that $\alpha$ is small enough can be dropped. Indeed, in the risk-neutral case, the matrix $\widetilde{\cal M}$ can be written as
	\[
	\left(\begin{matrix}
		f(\Lambda)\alpha^2 & f(\Lambda)\alpha(\beta-\alpha)\\
		f(\Lambda)\alpha(\beta-\alpha) & f(\Lambda)(\beta-\alpha)^2 
	\end{matrix}\right),
	\] 
	where
	$$f(\Lambda):=-\frac{	\alpha^2\Big(	\gamma_1\alpha-\gamma_2\rho+\gamma_2(\beta-\alpha)	\Big)^2	}{2\gamma_2^2(\gamma_2\rho-\gamma_1\alpha)} \Lambda^2  +\left(	2(\beta-\alpha)+\frac{\gamma_1\alpha}{\gamma_2} +\frac{\gamma_1\alpha(\beta-\alpha)}{(\gamma_2\rho-\gamma_1\alpha)}				\right)\Lambda -\frac{\gamma_1^2}{2(\gamma_2\rho-\gamma_1\alpha)}.$$

	Since the determinant of $\widetilde{\mathcal M}$ is zero, it is sufficient to prove that $f(\Lambda)>0$, for some $\Lambda$. This can be seen as follows. If $\alpha(\gamma_1\alpha-\gamma_2\rho+\gamma_2(\beta-\alpha))=0$, then $f(\Lambda)>0$ by choosing a $\Lambda$ large enough.  If $\alpha(\gamma_1\alpha-\gamma_2\rho+\gamma_2(\beta-\alpha))\neq 0$, then the maximum point of the quadratic function $f$ is strictly positive and the discriminant of  $f$ is positive since 
	\begin{equation*}
		\begin{split}
			&\left(	2(\beta-\alpha)+\frac{\gamma_1\alpha}{\gamma_2} +\frac{\gamma_1\alpha(\beta-\alpha)}{(\gamma_2\rho-\gamma_1\alpha)}				\right)^2-4\frac{	\alpha^2\Big(	\gamma_1\alpha-\gamma_2\rho+\gamma_2(\beta-\alpha)	\Big)^2	}{2\gamma_2^2(\gamma_2\rho-\gamma_1\alpha)}\cdot\frac{\gamma_1^2}{2(\gamma_2\rho-\gamma_1\alpha)} \\
			=&	\left(	2(\beta-\alpha)+\frac{	\gamma_1\alpha\Big(	\gamma_2\rho-\gamma_1\alpha+\gamma_2(\beta-\alpha)	\Big)	}{\gamma_2(\gamma_2\rho-\gamma_1\alpha)}	\right)^2-\frac{	\gamma_1^2\alpha^2\Big(	\gamma_1\alpha-\gamma_2\rho+\gamma_2(\beta-\alpha)	\Big)^2	}{\gamma_2^2(\gamma_2\rho-\gamma_1\alpha)^2}\\
			>&	\left(	2(\beta-\alpha)+\frac{	\gamma^2_1\alpha^2\Big(	\gamma_2\rho-\gamma_1\alpha+\gamma_2(\beta-\alpha)	\Big)^2	}{\gamma_2(\gamma_2\rho-\gamma_1\alpha)}	\right)-\frac{	\gamma_1^2\alpha^2\Big(	\gamma_2\rho-\gamma_1\alpha+\gamma_2(\beta-\alpha)	\Big)^2	}{\gamma_2^2(\gamma_2\rho-\gamma_1\alpha)^2}\\
			>&~0.
		\end{split}
	\end{equation*}	
	Hence there exists a $\Lambda>0$ such that $f(\Lambda)>0.$ 
	
\end{remark}
% \begin{corollary}
% The system	\eqref{eq:tilde-D} admits a unique solution $D_1,D_3\in L^\infty( [0,T];\mathbb R )\cap C[0,T]$. 
% \end{corollary}
 %%%%%%%%%%%%%%%%%%%%%%%%%%%%%%%%%%%%%%%%%
 %%%%%%%%%%%%%%%%%%%%%%%%%%%%%%%%%%%%%%%%%
 %%%%%%%%%%%%%%%%%%%%%%%%%%%%%%%%%%%%%%%%%

\section{The verification theorem}\label{sec-ver}

To verify that the strategy given by equations \eqref{opt1} and \eqref{opt2}  is indeed optimal we first prove that the cost functional can be written as a sum of two complete square terms and a correction term that we will identify as the value function.  

\begin{proposition}\label{lemma:verification}
	Let the \textbf{standing assumption} hold.	Then the cost functional can be rewritten as
	\begin{equation*}
		\begin{split}
			J(t, Z)
			%:=&~	\mathbb E\left[	\int_t^T \left( Y_{s-}\,dZ_s+\frac{\gamma_{2}}{2}\,d[Z_s] +\sigma_s\,d[Z,W]_s \right)	+\int_t^T \lambda (X_s)^2\,ds	\right]\\
			=&~\mathbb E\Big[\int_{t}^{T}\frac{1}{\tilde a}\left(I^A_s(\mathcal X_s-\bar{\mu}_{s})\right)^2\,ds+\int_{t}^{T}\frac{1}{a}\left(I^B_s\bar{\mu}_{s}+I^D_s\right)^2\,ds\Big]\\
			&~+\textnormal{Var}({\mu}_{t-})(A_{t})+\bar{\mu}_{t-}^\top B_t\bar{\mu}_{t-}+D^\top_t\bar{\mu}_{t-}+\mathbb E[F_t],
		\end{split}	
	\end{equation*}
where $\mu_\cdot$ is the law of $\mathcal X_\cdot$.
	In particular, the cost functional reaches its global minimum if  
	\[
	\int_{t}^{T}\left(I^A_s(\mathcal X_s-\bar{\mu}_{s})\right)^2\,ds=\int_{t}^{T}\left(I^B_s\bar{\mu}_{s}+I^D_s\right)^2\,ds=0,\quad a.s.
	\]
	In this case, the value function is indeed given by \eqref{value2}.
%	\[
%		V(t,\mathcal X)=V(t,\mu)=\text{Var}(\bar{\mu}_{})(A_{t})+\bar{\mu}_{}^\top B_t\bar{\mu}_{}+D^\top_t\bar{\mu}_{}+\mathbb E[F_t],
%	\]	
%	where we recall $\mathcal X$ is the initial state with law $\mu$.
%		 where $\mathcal X$ is the initial state with law $\mu$.
%	 Moreover, the value function of the original optimization at time $0$ is $V(0,\mathcal X)=V(0,\mu)=Var(\bar{\mu}_{})(A_{0})+\bar{\mu}_{}^\top B_0\bar{\mu}_{}+D^\top_0\bar{\mu}_{}+F_0$. 
\end{proposition}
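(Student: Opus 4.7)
The plan is to establish the identity by a direct verification-style computation: apply a jump-aware Itô formula to the candidate value function $V(s,\mu_s)$ along the controlled state process on $[t,T]$, subtract from the cost functional, and reorganize the result as a sum of complete squares plus a boundary term. Because $V$ depends on $\mu$ only through the mean $\bar\mu$ and the variance $\text{Var}(\mu)(A_\cdot)$, the natural first move is to split
\[
V(s,\mu_s) = \mathbb E\bigl[(\mathcal X_s - \bar\mu_s)^\top A_s(\mathcal X_s - \bar\mu_s)\bigr] + \bar\mu_s^\top B_s\bar\mu_s + D_s^\top\bar\mu_s + \mathbb E[F_s]
\]
and track the dynamics of the centered process $\mathcal X_s - \bar\mu_s$ and of the mean $\bar\mu_s$ separately. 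From \eqref{state_matrix}, $\bar\mu_s$ has finite-variation dynamics driven by $(\mathcal H+\overline{\mathcal H})\bar\mu_s + \mathcal G$ and by $\mathcal K\,d\mathbb E[Z_s]$, while $\mathcal X_s - \bar\mu_s$ inherits the $\mathcal D_s\,dW_s$ diffusion and the centered jump driver $\mathcal K(dZ_s - d\mathbb E[Z_s])$. This split precisely matches the two Riccati systems: $A$ governs the centered channel, while $(B,D)$ govern the mean channel.

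The core step is to apply the Itô formula (including the quadratic-variation correction induced by the semimartingale $Z$) to both the variance and the mean part of $V(s,\mu_s)$, add the cost integrand $Y_{s-}dZ_s + \tfrac{\gamma_2}{2}d[Z]_s + \sigma_s\,d[Z,W]_s + \lambda X_s^2\,ds$, and group terms by the order of $dZ_s$. The $(dZ_s)^2$ coefficients collapse to $\tilde a = \gamma_2\rho+\lambda$ in the centered channel and $a = \gamma_2\rho-\gamma_1\alpha+\lambda$ in the mean channel; the $dZ_s$ linear coefficients are exactly $-2I^A_s(\mathcal X_s-\bar\mu_s)/\tilde a$ and $-2(I^B_s\bar\mu_s + I^D_s)/a$ (this is where the structural identities $A_{11}=\gamma_2A_{21}$, $A_{12}=\gamma_2A_{22}+\tfrac12$ and the analogues for $B, D$ are indispensable, as they cancel the cross-contributions from $Y_{s-}dZ_s$ against the jumps of the quadratic form). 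Completing the square in $dZ_s - d\mathbb E[Z_s]$ and in $d\mathbb E[Z_s]$ yields the two desired integrals $\tilde a^{-1}(I^A_s(\mathcal X_s-\bar\mu_s))^2 ds$ and $a^{-1}(I^B_s\bar\mu_s + I^D_s)^2 ds$, and the leftover drifts are by construction exactly the ODE/BSDE right-hand sides of \eqref{eq:tilde-A}, \eqref{eq:tilde-B}, \eqref{eq:tilde-D}, \eqref{eqn:F}, so they cancel.

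To finish, one integrates from $t-$ to $T$ and takes expectation. The $dW_s$-integrals vanish in expectation once integrability of the integrands is verified, which follows from admissibility ($\mathbb E[[Z]_T]<\infty$, $\mathbb E[\sup_s Z_s^2]<\infty$), boundedness of $A,B,D$ from Theorem \ref{thm:risk-averse}, and square-integrability of $\sigma$ and $F$. The terminal jump $\Delta Z_T = X_{T-}$ enforces $X_T = 0$, and combined with the prescribed terminal conditions $A_{11,T}=B_{11,T}=\gamma_2/2$, $A_{13,T}=B_{13,T}=A_{33,T}=B_{33,T}=0$, $D_T=0$, $F_T=0$ this gives $V(T,\mu_T)=0$. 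Rearranging delivers the claimed decomposition $J(t,Z) = (\text{squares}) + V(t,\mu_{t-})$; non-negativity of the squares then yields the lower bound, and verifying that the strategy from \eqref{opt1}--\eqref{opt2} sets both squares to zero pathwise gives optimality and identifies $V(t,\mu)$ with the value function.

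The principal obstacle is the precise handling of the jumps of $Z$: the left-limit $Y_{s-}$ in the cost has to be matched against the jumps of both the variance form and the mean form of $V$, and the associated $(\Delta Z_s)^2$ contributions must be absorbed into the same $\tilde a$ and $a$ coefficients without creating spurious cross-terms. The structural relations among the entries of $A$ and $B$ (and of $D$) are what make this matching exact; without them the completion of squares would fail at interior jumps and at the terminal block trade. A secondary technical point is ensuring the applicability of the flow-of-measures Itô formula to the mean part, for which the fact that $\bar\mu_s$ is a deterministic finite-variation path makes the argument routine.
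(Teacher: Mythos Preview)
Your outline has the right architecture---split into centered and mean channels, apply It\^o, match against the cost---but the mechanism you describe for producing the two squares is wrong, and as written the argument does not close.

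You claim the $(dZ_s)^2$ coefficients collapse to $\tilde a$ and $a$, and that one then completes the square in $dZ_s-d\mathbb E[Z_s]$ and in $d\mathbb E[Z_s]$. This is not what happens. The structural identities $A_{11}=\gamma_2 A_{21}$, $A_{12}=\gamma_2 A_{22}+\tfrac12$, $A_{13}=\gamma_2 A_{23}$ (and their $B$- and $D$-analogues) say precisely that $\mathcal K^\top A=\mathcal K^\top B=(0,-\tfrac12,0)$ and $D^\top\mathcal K=0$, whence $\mathcal K^\top A\mathcal K=-\gamma_2/2$. The quadratic-variation and jump-squared contributions from the It\^o expansion of $\mathbb E[\mathcal X^\top A\mathcal X]$ are therefore $-\tfrac{\gamma_2}{2}\,d[Z]_s-\sigma_s\,d[Z,W]_s$, which \emph{exactly cancel} $+\tfrac{\gamma_2}{2}\,d[Z]_s+\sigma_s\,d[Z,W]_s$ in the cost; likewise $2(A_s\mathcal X_{s-})^\top\mathcal K\,dZ_s=-Y_{s-}\,dZ_s$ cancels the linear cost term, and the $A$- and $B$-contributions in the mean channel cancel each other. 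The net coefficient of $d[Z]_s$ is \emph{zero}, not $\tilde a$ or $a$, so there is no square to complete in the control increment. Note also that the target integrals are $ds$-integrals; a genuine completion of squares in $dZ_s$ would produce $d[Z]$-type terms, not Lebesgue integrals.

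The correct mechanism, which the paper carries out, is that after all $dZ$-dependent terms cancel one is left with pure drift: $(\mathcal X_s-\bar\mu_s)^\top(\mathcal Q+\mathcal H^\top A_s+A_s\mathcal H+\dot A_s)(\mathcal X_s-\bar\mu_s)\,ds$ in the centered channel, and analogous quadratic/linear/constant expressions in $\bar\mu_s$ in the mean channel. The Riccati systems \eqref{eq:tilde-A}--\eqref{eqn:F} are precisely the statement that these drift matrices are rank one: $\mathcal Q+\mathcal H^\top A+A\mathcal H+\dot A=(I^A)^\top I^A/\tilde a$, $\mathcal Q+(\mathcal H+\overline{\mathcal H})^\top B+B(\mathcal H+\overline{\mathcal H})+\dot B=(I^B)^\top I^B/a$, and similarly for the $D$- and $F$-equations. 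The squares are thus quadratic forms in the \emph{state}, arising from an algebraic factorization encoded in the Riccati equations, not from completing a square in the control. Your write-up needs to be rebuilt around this cancellation--factorization structure.
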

\begin{proof}
For any strategy $Z\in\mathscr A_t$,  we first separate the cost of the jump at the terminal time from the cost functional. To this end, we write the cost functional as 
\begin{equation}\label{eq:cost-rewritten-1}
	\begin{split}
		J(t,Z)=&~	\mathbb E\left[ \int_{[t,T)} \left( Y_{s-}\,dZ_s+\frac{\gamma_{2}}{2}\,d[Z]_s +\sigma_s\,d[Z,W]_s \right)	+\int_t^T\lambda X^2_s\,ds	 \right]\\
		&~+ \mathbb E\left[-\left(	Y_{T-}-\frac{\gamma_2}{2}\Delta X_T	\right)\Delta X_T \right]\\
		=&~\mathbb E\left[	\int_{[t,T)} \left( Y_{s-}\,dZ_s+\frac{\gamma_{2}}{2}\,d[Z]_s+\sigma_s\,d[Z,W]_s \right)	+\int_t^T\lambda  X^2_s\,ds	 \right]\\
		&~+\mathbb E\left[ \frac{\gamma_2}{2} X^2_{T-} +X_{T-}Y_{T-}	\right] \quad (\textrm{ since }X_T=0)\\
		=&~\mathbb E\left[	\int_{[t,T)} \left( Y_{s-}\,dZ_s+\frac{\gamma_{2 }}{2}\,d[Z]_s+\sigma_s\,d[Z,W]_s \right)	+\int_t^T\lambda X^2_s\,ds	 \right]\\
		&~+\mathbb E\left[(\mathcal X_{T-}-\bar{\mu}_{T-})^\top A_{T}(\mathcal X_{T-}-\bar{\mu}_{T-})+\bar{\mu}_{T-}^\top B_T\bar{\mu}_{T-}+D^\top_T\bar{\mu}_{T-}+F_T \right].
	\end{split}
\end{equation}
Next, we are going to analyze the expected jump cost term-by-term. From this we will see that many terms cancel and then arrive at the desired representation of the cost functional.
\begin{itemize}
	\item We start with the term $\mathbb E\left[ \mathcal X_{T-} A_{T} \mathcal X_{T-} \right]$. Using It\^o's formula in \cite[Theorem 36]{Protter-2005}, 
\begin{equation*}
	\begin{split}
\int_{\mathbb R^3}x^\top A_{T}x\mu_{T-}(dx) =&~\mathbb E\Big[\int_{t}^{T-}2(A_s\mathcal X_{s-})^\top d\mathcal X_s+Tr(A_sd[\mathcal X,\mathcal X]^c_s)\\
&~+\sum_{t\leq s< T}\left(\mathcal X_s^\top A_s\mathcal X_s-\mathcal X_{s-}^\top A_s\mathcal X_{s-}-2(A_s\mathcal X_{s-})^\top\Delta\mathcal X_s\right)\Big]\\
&~+\int_{\mathbb R^3}x^\top A_{t}x\mu_{t-}(dx)+\int_t^T\int_{\mathbb R^3}x^\top
\dot{A}_{s}x\mu_{s}(dx)\,ds.
\end{split}
\end{equation*}
Note that
\begin{equation*}
	\begin{split}
		d[\mathcal X,\mathcal X]^c_s=\begin{pmatrix}
			d[Z^c,Z^c]_s &  -\sigma_s\,d[Z^c,W]_s-\gamma_2d[Z^c,Z^c]_s & 0 \\
			-\sigma_sd[Z^c,W]_s-\gamma_2d[Z^c,Z^c]_s & \sigma^2_s\,ds+\gamma_2^2\,d[Z^c,Z^c]_s+2\gamma_2\sigma_s\,d[Z^c,W]_s & 0\\
			0&0&0
			\end{pmatrix},
	\end{split}
\end{equation*}
which implies by the relationship between the entries of the matrix $A$ (cf.~the statement above \eqref{eq:tilde-A}) that
\[
	\int_{t}^{T-}\textnormal{Tr}( A_s\,d[\mathcal X^c,\mathcal X^c]_s )\,ds=\int_t^T\left( -\frac{\gamma_2}{2}\,d[Z^c,Z^c]_s-\sigma_s\,d[Z^c,W]_s+\sigma^2_s A_{22,s}\,ds\right).
\]

Taking this back into the above equation shows that
\begin{equation}\label{eq:E[XAX]}
	\begin{split}
& \int_{\mathbb R^3}x^\top A_{T}x\mu_{T-}(dx) \\ =&~\mathbb E\left[\int_{t}^{T}2(A_s\mathcal X_s)^\top\left(\mathcal H\mathcal X_s+\overline{\mathcal H}\mathbb E[\mathcal X_s]+\mathcal G\right)\,ds+\int_{t}^{T-}2(A_s\mathcal X_{s-})^\top\mathcal K\,dZ_s\right]\\
&~+\mathbb E\left[\int_t^T\left( -\frac{\gamma_2}{2}\,d[Z^c,Z^c]_s-\sigma_s\,d[Z^c,W]_s+\sigma^2_s A_{22,s}\,ds\right)\right]\\
&~+\mathbb E\left[\sum_{t\leq s< T}\left(\mathcal X_s^\top A_s\mathcal X_s-\mathcal X_{s-}^\top A_s\mathcal X_{s-}-2(A_s\mathcal X_{s-})^\top\mathcal K\Delta Z_s\right)\right]\\
&~+\int_{\mathbb R^3}x^\top A_{t}x\mu_{t-}(dx)+\mathbb E\left[\int_t^T\mathcal X_s^\top
\dot{A}_{s}\mathcal X_s\,ds\right].
	\end{split}
\end{equation}

\item Next, we consider the term $\bar{\mu}_{T-}^\top A_{T}\bar{\mu}_{T-} $. In view of \eqref{state_matrix} the expected value $\overline\mu$ follows the dynamics
\begin{equation}\label{ODE:mu-bar}
		d\overline\mu_s=\Big(	(\mathcal H+\overline{\mathcal H})\overline\mu_s	+\mathcal G	\Big)\,ds+\mathcal K\,d\mathbb E[Z_s].
	\end{equation}
Applying the chain rule to $\overline\mu^\top A\overline \mu$ from $t-$ to $T-$, it follows that
\begin{equation*}
	\begin{split}
	\bar{\mu}_{T-}^\top A_{T}\bar{\mu}_{T-} =&~\int_{t}^{T-}2(A_s\bar{\mu}_{s-})^\top d\overline\mu_s+ % \int_t^T \textnormal{Tr}(	A_s\,d[\overline\mu,\overline\mu]^c_s	)\\
	\sum_{t\leq s< T}\left(\bar{\mu}_s^\top A_s\bar{\mu}_s-\bar{\mu}_{s-}^\top A_s\bar{\mu}_{s-}-2(A_s\bar{\mu}_{s-})^\top\Delta\bar{\mu}_s\right)\\
		&+\bar{\mu}_{t-}^\mathrm TA_{t}\bar{\mu}_{t-} +\int_t^{T}\bar{\mu}_{s}^\mathrm T\dot{A}_{s}\bar{\mu}_{s}\,ds.
\end{split}
\end{equation*}
%Note that by \eqref{relation-A-D}
%\begin{equation*}
%	\begin{split}
%		Tr( A_s\,d[\overline\mu,\overline\mu]^c_s )=&~Tr\left(	\begin{pmatrix}
%			A_{11,s} & A_{12,s} & A_{13,s}\\
%			A_{21,s} & A_{22,s} & A_{23,s}\\
%			A_{31,s} & A_{32,s} & A_{33,s}
%			\end{pmatrix}\cdot\begin{pmatrix}
%			d[ \mathbb E[Z], \mathbb E[Z]  ]^c_s & -\gamma_2\,d[  \mathbb E[Z], \mathbb E[Z]  ]^c_s & 0\\
%			-\gamma_2\,d[ \mathbb E[Z],\mathbb E[Z]  ]^c_s & \gamma_2^2\,d[ \mathbb E[Z],\mathbb E[Z]  ]^c_s & 0\\
%			0 & 0 & 0
%			\end{pmatrix}
%						\right)\\
%						=&~Tr\begin{pmatrix}	(	A_{11,s}-\gamma_1A_{12,s})\,d[ \mathbb E[Z], \mathbb E[Z] ]^c_s    & * & *\\
%						* & (	-\gamma_2 A_{21,s}	+\gamma_2^2 A_{22,s}	)	\,d[ \mathbb E[Z], \mathbb E[Z] ]^c_s  & *\\
%					* & * & 0		\end{pmatrix}\\
%				=&~-\frac{\gamma_2}{2}\,d[  \mathbb E[Z],\mathbb E[Z]    ]^c_s.
%	\end{split}
%\end{equation*}

Taking \eqref{ODE:mu-bar} into the expression of $\overline\mu^\top_{T-}A_T\overline\mu_{T-}$ we arrive at
\begin{equation}\label{eq:mu-A-mu}
	\begin{split}		\overline\mu^\top_{T-}A_T\overline\mu_{T-}
		=&~ \bar{\mu}_{t-}^\top A_{t}\bar{\mu}_{t-} + \int_{t}^{T}2(A_s\bar{\mu}_s)^\top\left(\mathcal H\overline\mu_s+\overline{\mathcal H}\overline\mu_s+\mathcal G\right)\,ds+\int_{t}^{T-}2(A_s\bar{\mu}_{s-})^\top\mathcal K\,d\mathbb E[Z_s] \\
		%-\int_t^T\frac{\gamma_2}{2}\,d[ \mathbb E[Z],\mathbb E[Z]   ]^c_s    \\
		&~+\sum_{t\leq s< T}\Big(\bar{\mu}_s^\top A_s\bar{\mu}_s-\bar{\mu}_{s-}^\top A_s\bar{\mu}_{s-}-2(A_s\bar{\mu}_{s-})^\top\Delta\overline\mu_s\Big) +\int_t^{T}\bar{\mu}_{s}^\top\dot{A}_{s}\bar{\mu}_{s}\,ds.
	\end{split}
\end{equation}

\item Similarly to the last step it holds that
\begin{equation}\label{eq:mu-B-mu}
	\begin{split}		\overline\mu^\top_{T-}B_T\overline\mu_{T-}
		=& ~ \bar{\mu}_{t-}^\top B_{t}\bar{\mu}_{t-} + \int_{t}^{T}2(B_s\bar{\mu}_s)^\top\left(\mathcal H\overline\mu_s+\overline{\mathcal H}\overline\mu_s+\mathcal G\right)\,ds+\int_{t}^{T-}2(B_s\bar{\mu}_{s-})^\top\mathcal K\,d\mathbb E[Z_s] \\
		% -\int_t^T\frac{\gamma_2}{2}\,d[ \mathbb E[Z],\mathbb E[Z]   ]^c_s    \\
		&~+\sum_{t\leq s< T}\Big(\bar{\mu}_s^\top B_s\bar{\mu}_s-\bar{\mu}_{s-}^\top B_s\bar{\mu}_{s-}-2(B_s\bar{\mu}_{s-})^\top\Delta\overline\mu_s\Big) +\int_t^{T}\bar{\mu}_{s}^\top\dot{B}_{s}\bar{\mu}_{s}\,ds.
	\end{split}
\end{equation}

\item Applying the chain rule to $D^\top\overline\mu$, we see that
\begin{equation}\label{eq:Dmu-F}
	\begin{split}
		D_{T}^\top \bar{\mu}_{T-} %=&\mathbb E\Big[\int_{t}^{T-}D_s\cdot d\mathcal X_s+\sum_{t<s\leq T}\left((D_s^\mathrm T\bar{\mu}_s-D_s^\mathrm T\bar{\mu}_{s-})I_{\mu_s\neq\mu_{s-}}-D_s\cdot\Delta\mathcal X_s\right)\\
%		&\quad +\sum_{t<s\leq T}\left((D_s^\mathrm T\mathcal X_s-D_s^\mathrm T\mathcal X_{s-})I_{\mu_s=\mu_{s-}}\right)\Big]\\
%		&+D_{t}^\mathrm T\bar{\mu}_{t} +\int_t^{T-}\dot{D}^\mathrm T_{s}\bar{\mu}_{s}\,ds \\
		=&~\mathbb E\Big[\int_{t}^{T}D_s^\top\left(\mathcal H\mathcal X_s+\overline{\mathcal H}\mathbb E[\mathcal X_s]+\mathcal G\right)\,ds+\int_{t}^{T-}D_s^\top\mathcal K\,dZ_s\\
		&~+\sum_{t\leq s< T}\left((D_s^\top\mathcal X_s-D_s^\top\mathcal X_{s-})-D_s^\top\Delta \mathcal X_s\right)\Big]+D_{t}^\top\bar{\mu}_{t-} +\int_t^{T}\dot{D}^\top_{s}\bar{\mu}_{s}\,ds \\	
	     =&~\mathbb E\left[\int_{t}^{T}D_s^\top\left(\mathcal H\mathcal X_s+\overline{\mathcal H}\mathbb E[\mathcal X_s]+\mathcal G\right)\,ds+\int_{t}^{T-}D_s^\top\mathcal K\,dZ_s +D_{t}^\top\bar{\mu}_{t-} +\int_t^{T}\dot{D}^\top_{s}\bar{\mu}_{s}\,ds \right].
%	     	F_{T}=&~F_t+\int^T_t\dot{F}_s\,ds.
	\end{split}
\end{equation}
\end{itemize}

Next, we collect all the terms in \eqref{eq:E[XAX]}, \eqref{eq:mu-A-mu}, \eqref{eq:mu-B-mu} and \eqref{eq:Dmu-F} involving jumps. Their sum equals 
	\begin{equation}\label{jumpterms}
	\begin{split}
	&~\mathbb E\Big[\int_{t}^{T-}2(A_s\mathcal X_{s-})^\top\mathcal K\,dZ_s+\sum_{t\leq s< T}\left(\mathcal X_s^\top A_s\mathcal X_s-\mathcal X_{s-}^\top A_s\mathcal X_{s-}-2(A_s\mathcal X_{s-})^\top\mathcal K\Delta Z_s\right)\Big]\\
	&~-\mathbb E\Big[\int_{t}^{T-}2(A_s\bar{\mu}_{s-})^\top\mathcal K\,dZ_s+\sum_{t\leq s< T}\left(\bar{\mu}_s^\top A_s\bar{\mu}_s-\bar{\mu}_{s-}^\top A_s\bar{\mu}_{s-}-2(A_s\bar{\mu}_{s-})^\top\Delta\mathcal X_s\right)\Big]\\
	&~+\mathbb E\Big[\int_{t}^{T-}2(B_s\bar{\mu}_{s-})^\top \mathcal K\,dZ_s+\sum_{t\leq s< T}\left(\bar{\mu}_s^\top B_s\bar{\mu}_s-\bar{\mu}_{s-}^\top B_s\bar{\mu}_{s-}-2(B_s\bar{\mu}_{s-})^\top\Delta\mathcal X_s\right)\Big]\\
	&~+\mathbb E\left[\int_{t}^{T-}D_s^\top\mathcal K\,dZ_s\right].
	\end{split}	
\end{equation}

Since 
	\begin{equation}\label{equality:KA}
	\begin{split}
\mathcal K^\top A=\begin{pmatrix}
	-1&\gamma_2&0
\end{pmatrix}A=\begin{pmatrix}
0&-\frac{1}{2}&0
\end{pmatrix}=\mathcal K^\top B,
	\end{split}	
\end{equation}
we obtain that
\[
-\mathbb E\Big[\int_{t}^{T-}2(A_s\bar{\mu}_{s-})^\top\mathcal K\,dZ_s\Big]+\mathbb E\Big[\int_{t}^{T-}2(B_s\bar{\mu}_{s-})^\top\mathcal K\,dZ_s\Big]=0.
\]

Since $\Delta\bar{\mu}=\mathbb E[\mathcal K\Delta Z]$ we also obtain that
	\begin{equation*}
	\begin{split}
\mathbb E\left[(\bar{\mu}_s^\top B_s\bar{\mu}_s-\bar{\mu}_{s-}^\top B_s\bar{\mu}_{s-})-2(B_s\bar{\mu}_{s-})^\top \Delta\mathcal X_s\right]=&~\Delta\bar{\mu}_s^\top B_s\Delta\bar{\mu}_s\\
=&~\mathbb E[\Delta Z_s\mathcal K^\top B_s\Delta\bar{\mu}_s]\\
=&~\mathbb E[\Delta Z_s\mathcal K^\top A_s\Delta\bar{\mu}_s]\\
=&~\Delta\bar{\mu}_s^\top A_s\Delta\bar{\mu}_s\\
=&~\mathbb E\left[(\bar{\mu}_s^\top A_s\bar{\mu}_s-\bar{\mu}_{s-}^\top A_s\bar{\mu}_{s-})-2(A_s\bar{\mu}_{s-})^\top\Delta\mathcal X_s\right].
	\end{split}	
\end{equation*}
Using \eqref{equality:KA} again, we have that
	\begin{equation*}
	\begin{split}
2(A_s\mathcal X_{s-})^\top\mathcal K\,dZ_s & =2\mathcal K^\top A_s\mathcal X_{s-}\,dZ_s \\ & = \begin{pmatrix}
	0&-1&0 \end{pmatrix}\mathcal X_{s-}\,dZ_s \\ & =-Y_{s-}\,dZ_s.
	\end{split}
	\end{equation*}	
Moreover, the definition of $\mathcal K$ implies that
	\begin{equation*}
	\begin{split}
\mathcal X_s^\top A_s\mathcal X_s-\mathcal X_{s-}^\top A_s\mathcal X_{s-}-2(A_s\mathcal X_{s-})^\top\mathcal K\Delta Z_s & =\Delta\mathcal X_s^\top A_s\Delta\mathcal X_s \\ & =\Delta Z_s\mathcal K^\top A_s\Delta\mathcal X_s  \\ & =-\frac{1}{2}\Delta Z_s\Delta Y_s  \\ & =-\frac{\gamma_2}{2}(\Delta Z_s)^2,
	\end{split}	
\end{equation*}
and that
	\begin{equation*}
	\begin{split}
D^\top\mathcal K & =\begin{pmatrix} 
	D_1&D_2&D_3
\end{pmatrix}\begin{pmatrix}
	-1\\\gamma_2\\0
\end{pmatrix}=-D_1+\gamma_2D_2=0.\\
	\end{split}	
\end{equation*}

As a result, the jump terms \eqref{jumpterms} together with the term $\mathbb E\left[\int_t^T\left( -\frac{\gamma_2}{2}\,d[Z^c,Z^c]_s-\sigma_s\,d[Z^c,W]_s\right)\right]$ in \eqref{eq:E[XAX]} cancel with the first three terms in  \eqref{eq:cost-rewritten-1}.
Hence, taking \eqref{eq:E[XAX]}, \eqref{eq:mu-A-mu}, \eqref{eq:mu-B-mu} and \eqref{eq:Dmu-F} into \eqref{eq:cost-rewritten-1} yields that
\begin{equation*}
	\begin{split}
	& ~ J(t,Z) \\ =&~\mathbb E\left[\int_{t}^{T}2(A_s\mathcal X_s)^\top\left(\mathcal H\mathcal X_s+\overline{\mathcal H}\mathbb E[\mathcal X_s]+\mathcal G\right)\,ds+\int_{t}^{T}\mathcal D^\top_s A_s\mathcal D_s\,ds+\int_t^T\mathcal X_s^\top
		\dot{A}_{s}\mathcal X_s\,ds+\int_t^T\mathcal X^\top_s\mathcal Q \mathcal X_s\,ds\right]\\
		&~-\mathbb E\left[\int_{t}^{T}2(A_s\bar{\mu}_s)^\top\left(\mathcal H\mathcal X_s+\overline{\mathcal H}\mathbb E[\mathcal X_s]+\mathcal G\right)\,ds\right]-\int_t^{T}\bar{\mu}_{s}^\top\dot{A}_{s}\bar{\mu}_{s}\,ds \\
		&~+\mathbb E\left[\int_{t}^{T}2(B_s\bar{\mu}_s)^\top \left(\mathcal H\mathcal X_s+\overline{\mathcal H}\mathbb E[\mathcal X_s]+\mathcal G\right)\,ds\right]+\int_t^{T}\bar{\mu}_{s}^\top\dot{B}_{s}\bar{\mu}_{s}\,ds \\
		&~+\mathbb E\left[\int_{t}^{T}D_s^\top\left(\mathcal H\mathcal X_s+\overline{\mathcal H}\mathbb E[\mathcal X_s]+\mathcal G\right)\,ds\right] +\int_t^{T}\dot{D}^\top_{s}\bar{\mu}_{s}\,ds +\mathbb E\left[\int^T_t\dot{F}_s\,ds\right]\\
		&~+\textnormal{Var}(\mu_{t-})(A_t)+\overline\mu^\top_{t-}B_t\overline\mu_{t-}+D^\top_t\overline\mu_{t-}+\mathbb E[F_t],
		\end{split}
	\end{equation*}
where
\[
	\mathcal Q:=\begin{pmatrix}
		\lambda & 0 & 0\\
		0 & 0 & 0\\
		0& 0 & 0
	\end{pmatrix}.
\]

Recalling that $\overline\mu=\mathbb E[\mathcal X]$ and collecting the terms $\mathcal X^\top(\cdots)\mathcal X$, $\overline\mu^\top(\cdots)\overline\mu$, $(\cdots)\overline\mu$ and other terms, we have that
\begin{equation*}
	\begin{split}
%		=&~\mathbb E\Big[\int_{t}^{T}2\mathcal X_s^\mathrm T\mathcal H^\mathrm T A_s\mathcal X_s\,ds+\int_{t}^{T}\mathcal D^\mathrm TA_s\mathcal D\,ds+\int_t^T\mathcal X_s^\mathrm T
%		\dot{A}_{s}\mathcal X_s\,ds+\int_t^T\mathcal X^\mathrm T_s\mathcal Q \mathcal X_s\,ds\Big]\\
%		&~-\mathbb E\Big[\int_{t}^{T}2\mathcal X_s^\mathrm T\mathcal H^\mathrm TA_s\bar{\mu}_s\,ds\Big]-\int_t^{T}\bar{\mu}_{s}^\mathrm T\dot{A}_{s}\bar{\mu}_{s}\,ds \\
%		&~+\mathbb E\Big[\int_{t}^{T}\left(2\mathcal X_s^\mathrm T\mathcal H^\mathrm TB_s\bar{\mu}_s+2\bar{\mu}_s^\mathrm T\overline{\mathcal H}^\mathrm TB_s\bar{\mu}_s+2\mathcal G^\mathrm TB_s\bar{\mu}_{s}\right)\,ds\Big]+\int_t^{T}\bar{\mu}_{s}^\mathrm T\dot{B}_{s}\bar{\mu}_{s}\,ds \\
%		&~+\mathbb E\Big[\int_{t}^{T}\left(D_s^\mathrm T\mathcal H\mathcal X+D_s^\mathrm T\overline{\mathcal H}\bar{\mu}_{s}+D_s^\mathrm T\mathcal G\right)\,ds\Big]\\
%		&~+\int_t^{T}\dot{D}^\mathrm T_{s}\bar{\mu}_{s}\,ds +\int^T_t\dot{F}_s\,ds\\
		& ~ J(t,Z) \\
		=&~\mathbb E\left[\int_{t}^{T}\mathcal X_s^\top\left(\mathcal Q+2\mathcal H^\top A_s+\dot{A}_{s}\right)\mathcal X_s\,ds+\int_{t}^{T}\bar{\mu}_{s}^\top\left(-\dot{A}_{s}-2\mathcal H^\top A_s+\dot{B}_{s}+2\mathcal H^\top B_s+2\overline{\mathcal H}^\top B_s\right)\bar{\mu}_{s}\,ds\right.\\
		&~\left.+\int_{t}^{T}\left(2\mathcal G^\top B_s+\dot{D}_{s}^\top+D^\top_s\mathcal H +D^\top_s\overline{\mathcal H}\right)\bar{\mu}_{s}\,ds+\int_{t}^{T}\left(\mathcal D^\top_s A_s\mathcal D_s+D_s^\top\mathcal G+\dot{F}_s\right)\,ds\right]\\
		&~+\textnormal{Var}(\mu_{t-})(A_t)+\overline\mu^\top_{t-}B_t\overline\mu_{t-}+D^\top_t\overline\mu_{t-}+\mathbb E[F_t]\\
		=&~\mathbb E\left[\int_{t}^{T}(\mathcal X_s-\bar{\mu}_{s})^\top\left(\mathcal Q+\mathcal H^\top A_s+A_s\mathcal H+\dot{A}_{s}\right)(\mathcal X_s-\bar{\mu}_{s})\,ds\right.\\
		&~\left.+\int_{t}^{T}\bar{\mu}_{s}^\top\left(\mathcal Q+\dot{B}_{s}+\mathcal H^\top B_s+B_s\mathcal H+\overline{\mathcal H}^\top B_s+B_s\overline{\mathcal H}\right)\bar{\mu}_{s}\,ds\right.\\
		&~\left.+\int_{t}^{T}\left(2\mathcal G^\top B_s+\dot{D}_{s}^\top+D^\top_s\mathcal H +D^\top_s\overline{\mathcal H}\right)\bar{\mu}_{s}\,ds+\int_{t}^{T}\left(\mathcal D^\top_s A_s\mathcal D_s+D_s^\top\mathcal G+\dot{F}_s\right)\,ds\right]\\
		&~+\textnormal{Var}(\mu_{t-})(A_t)+\overline\mu^\top_{t-}B_t\overline\mu_{t-}+D^\top_t\overline\mu_{t-}+\mathbb E[F_t].
				\end{split}
	\end{equation*}

By \eqref{eq:A}, \eqref{eq:B}, \eqref{eq:D} and \eqref{eq:F} in the appendix it holds that 
\begin{equation*}
	\begin{split}
		\frac{ (I^A)^\top I^A }{\tilde a} & = ~\mathcal Q+\mathcal H^\top A+A\mathcal H+\dot{A}\\[2mm]
		\frac{(I^B)^\top I^B}{a}& = ~\mathcal Q+\dot{B}+\mathcal H^\top B+B\mathcal H+\overline{\mathcal H}^\top B+B\overline{\mathcal H}\\[2mm]
		\frac{2I^DI^B}{a} &= ~2\mathcal G^\top B_s+\dot{D}_{s}^\top+D^\top_s\mathcal H +D^\top_s\overline{\mathcal H}\\[2mm]
		\frac{(I^D)^2}{a} & = ~\mathcal D^\top A_s\mathcal D+D^\top\mathcal G+\dot{F}
	\end{split}
\end{equation*}
which gives us the desired result. 
%\begin{equation*}
%\begin{split}
%		&~J(t,Z)\\
%		=&~\mathbb E\left[\int_{t}^{T}\frac{1}{\tilde a}(\mathcal X_s-\bar{\mu}_{s})^\top(I^A_s)^\top I^A_s(\mathcal X_s-\bar{\mu}_{s})\,ds+\int_{t}^{T}\frac{1}{a}\bar{\mu}_{s}^\top(I^B_s)^\top I^B_s\bar{\mu}_{s}\,ds+\int_{t}^{T}\frac{2}{a}I^D_sI^B_s\bar{\mu}_{s}\,ds+\int_{t}^{T}\frac{1}{a}(I^D_s)^2\,ds\right]\\
%		&~+\textnormal{Var}(\mu_{t-})(A_t)+\overline\mu^\top_{t-}B_t\overline\mu_{t-}+D^\top_t\overline\mu_{t-}+\mathbb E[F_t]\\
%		=&~\mathbb E\left[\int_{t}^{T}\frac{1}{\tilde a}\left(I^A_s(\mathcal X_s-\bar{\mu}_{s})\right)^2\,ds+\int_{t}^{T}\frac{1}{a}\left(I^B_s\bar{\mu}_{s}+I^D_s\right)^2\,ds\right] \\ & ~ +\textnormal{Var}(\mu_{t-})(A_t)+\overline\mu^\top_{t-}B_t\overline\mu_{t-}+D^\top_t\overline\mu_{t-}+\mathbb E[F_t].
%	\end{split}
%\end{equation*}
\end{proof}
%Let $\widetilde Z$ be the (candidate optimal) trading strategy given in \eqref{opt2}. The state dynamics under $\widetilde Z$ reads
%\begin{equation}\label{eq:optimal-state}
%	\begin{split}
%		d\widetilde{\mathcal X}_s=&~\left(\mathcal H\widetilde{\mathcal X}_s+\overline{\mathcal H}\mathbb E[\widetilde{\mathcal X}_s]+\mathcal G\right)\,ds+\mathcal D_s\,dW_s+\mathcal K\,d\widetilde Z_s,\qquad s\in[t,T),
%	\end{split}
%\end{equation}
%where the initial values are given by
%\begin{equation}
%	\widetilde{\mathcal X}_t=\begin{pmatrix}
%		x-\xi^*&y+\gamma_2\xi^*&c
%	\end{pmatrix}^\top.
%\end{equation}
%and the strategy $\widetilde Z$ is in the feedback form
%\begin{equation}\label{eq:feedback-control}
%	\begin{split}
%		&~d\widetilde Z_s\\
%		=&\left(-\frac{\dot{I^A_s}}{\tilde a}(\widetilde{\mathcal X}_s-\mathbb E[\widetilde{\mathcal X}_s])-\frac{\dot{I^B}_s}{a}\mathbb E[\widetilde{\mathcal X}_s]-\frac{\dot{I^D}_s}{a}-\frac{I^A_s}{\tilde a}\mathcal H(\widetilde{\mathcal X}_s-\mathbb E[\widetilde{\mathcal X}_s])-\frac{I^B_s}{a}((\mathcal H+\overline{\mathcal H})\mathbb E[\widetilde{\mathcal X}_s]+\mathcal G)\right)\,ds\\
%		&-\frac{I^A}{\tilde a}\mathcal D_s\,dW_s,\qquad s\in(t,T).
%	\end{split}
%\end{equation}
Let $\widetilde{\mathcal X}$ be the state process driven by the strategy $\widetilde Z$ given by \eqref{opt1} and \eqref{opt2}. The next theorem shows that $\widetilde{\mathcal X}$ satisfies the equality in Proposition \ref{lemma:verification} thereby concluding the verification argument. 
%\begin{equation}\label{optimal-control-[0,T]}
%		Z^*_s=\left\{\begin{split}
%&~\xi^*,\qquad s=t,\\
%&~\widetilde Z_s,\qquad s\in(t,T),\\
%&~\widetilde X_{T-},\qquad s=T,
%	\end{split}\right.
%\end{equation}
%and that the optimal state process is given by
%\begin{equation}\label{optimal-position-[0,T]}
%			\mathcal X^*_s=\left\{\begin{split}
%		&~\mathcal X:=(x,y,c)^\top,\qquad s=t-,\\
%		&~\widetilde{\mathcal X}_s,\qquad\qquad\qquad\qquad s\in[t,T),\\
%		&~(0, \widetilde Y_{T-}+\gamma_2\widetilde X_{T-}, \widetilde C_{T-}  )^\top,\qquad s=T.
%	\end{split}\right.
%\end{equation}

\begin{theorem}
	The state process $\widetilde{\mathcal X}$ driven by the strategy \eqref{opt1} and \eqref{opt2} satisfies
	\begin{equation}\label{equality-optimal-state}
			I^A(\widetilde{\mathcal X}_s-\mathbb E[\widetilde{\mathcal X}_{s}])=0,\quad I^B\mathbb E[\widetilde{\mathcal X}_{s}]+I^D=0,\qquad s\in[t,T). 
	\end{equation}
%In particular, \eqref{optimal-control-[0,T]} and \eqref{optimal-position-[0,T]} is the optimal pair of control and state.
\end{theorem}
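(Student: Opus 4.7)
The plan is to split $\widetilde{\mathcal X}_s$ into its centered and mean parts, $\widetilde U_s := \widetilde{\mathcal X}_s - \mathbb E[\widetilde{\mathcal X}_s]$ and $\widetilde V_s := \mathbb E[\widetilde{\mathcal X}_s]$, and to verify the two identities in \eqref{equality-optimal-state} separately. The strategy \eqref{opt1}--\eqref{opt2} is engineered so that the jump at $t$ forces both $I^A_t\widetilde U_t$ and $I^B_t\widetilde V_t + I^D_t$ to vanish, while the continuous evolution on $(t,T)$ preserves both identities; this two-step structure is what I would follow.

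The whole argument hinges on two scalar identities, $I^A_s\mathcal K = \tilde a$ and $I^B_s\mathcal K = a$, which I would verify first by direct substitution of the formulas for $I^A$, $I^B$ and $\mathcal K$, using the relations $A_{11}=\gamma_2A_{21}$ and $B_{11}=\gamma_2B_{21}$ recorded just below \eqref{eq:tilde-A} and \eqref{eq:tilde-B}. In both cases the top two coordinates cancel and leave $\gamma_2\rho+\lambda=\tilde a$ and $\gamma_2\rho-\gamma_1\alpha+\lambda=a$, respectively. Granted these, the initial-jump step is immediate: since $\widetilde{\mathcal X}_t = \widetilde{\mathcal X}_{t-}+\mathcal K\,\Delta\widetilde Z_t$ and $\mathbb E[\widetilde{\mathcal X}_{t-}-\bar\mu]=0$, substituting \eqref{opt1} into $\widetilde V_t = \bar\mu + \mathcal K\,\mathbb E[\Delta\widetilde Z_t]$ and $\widetilde U_t = (\widetilde{\mathcal X}_{t-}-\bar\mu)+\mathcal K(\Delta\widetilde Z_t - \mathbb E[\Delta\widetilde Z_t])$, and multiplying on the left by $I^A_t$ and $I^B_t$, yields
\[
I^A_t\widetilde U_t = I^A_t(\widetilde{\mathcal X}_{t-}-\bar\mu) - \tilde a\cdot\tfrac{I^A_t}{\tilde a}(\widetilde{\mathcal X}_{t-}-\bar\mu) = 0, \qquad I^B_t\widetilde V_t + I^D_t = I^B_t\bar\mu + a\cdot\bigl(-\tfrac{I^B_t}{a}\bar\mu - \tfrac{I^D_t}{a}\bigr) + I^D_t = 0.
\]

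For the propagation on $(t,T)$, from \eqref{state_matrix} the centered and mean components obey
\[
d\widetilde U_s = \mathcal H\widetilde U_s\,ds + \mathcal D_s\,dW_s + \mathcal K\,d(\widetilde Z_s - \mathbb E[\widetilde Z_s]),\qquad d\widetilde V_s = \bigl((\mathcal H+\overline{\mathcal H})\widetilde V_s + \mathcal G\bigr)\,ds + \mathcal K\,d\mathbb E[\widetilde Z_s].
\]
Reading the drift and diffusion of $\widetilde Z$ off \eqref{opt2} and then taking expectations, the centered part of $\widetilde Z$ has drift $-\frac{\dot I^A_s}{\tilde a}\widetilde U_s - \frac{I^A_s}{\tilde a}\mathcal H\widetilde U_s$ and diffusion $-\frac{I^A_s}{\tilde a}\mathcal D_s$, while $\mathbb E[\widetilde Z]$ has drift $-\frac{\dot I^B_s}{a}\widetilde V_s - \frac{\dot I^D_s}{a} - \frac{I^B_s}{a}((\mathcal H+\overline{\mathcal H})\widetilde V_s + \mathcal G)$. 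Since $I^A, I^B, I^D$ are deterministic, the product rule gives $d(I^A_s\widetilde U_s) = \dot I^A_s\widetilde U_s\,ds + I^A_s\,d\widetilde U_s$ and $d(I^B_s\widetilde V_s+I^D_s) = (\dot I^B_s\widetilde V_s+\dot I^D_s)\,ds + I^B_s\,d\widetilde V_s$. Substituting the above dynamics and collapsing every occurrence of $I^A_s\mathcal K$ into $\tilde a$ (respectively $I^B_s\mathcal K$ into $a$) produces a telescoping cancellation of every drift and diffusion term, so both differentials vanish identically. Combined with the initial-jump step this gives \eqref{equality-optimal-state} on all of $[t,T)$.

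The main technical obstacle is the verification of the identities $I^A_s\mathcal K=\tilde a$ and $I^B_s\mathcal K=a$, together with the matching bookkeeping that confirms the drift of $\widetilde Z$ prescribed by \eqref{opt2} is precisely the one required for the product-rule cancellation; once these are in place, the rest of the argument is algebraic and automatic. A secondary point worth confirming (though not needed for \eqref{equality-optimal-state} itself) is that the resulting $\widetilde Z$ indeed belongs to $\mathscr A_t$, which follows from the boundedness of $A,B,D$ on $[0,T]$ and the square-integrability of $F$ and $\sigma$ via a standard Gr\"onwall argument applied to the linear SDEs for $\widetilde U$ and $\widetilde V$.
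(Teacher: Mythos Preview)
Your proposal is correct and follows essentially the same approach as the paper's own proof: verify the key scalar identities $I^A_s\mathcal K=\tilde a$ and $I^B_s\mathcal K=a$, use them to check that the initial jump \eqref{opt1} forces both quantities in \eqref{equality-optimal-state} to vanish at $s=t$, and then apply the product rule to $I^A_s\widetilde U_s$ and $I^B_s\widetilde V_s+I^D_s$ and observe that the dynamics \eqref{opt2} makes every drift and diffusion term cancel telescopically. One minor remark: the identities $I^A_s\mathcal K=\tilde a$ and $I^B_s\mathcal K=a$ follow directly from the explicit formulas for $I^A$, $I^B$ and $\mathcal K$ without invoking the structural relations $A_{11}=\gamma_2A_{21}$, $B_{11}=\gamma_2B_{21}$.
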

\begin{proof}
{It is straightforward to verify that $\widetilde Z$ determined by \eqref{opt1} and \eqref{opt2} is admissible.} We first prove that \eqref{equality-optimal-state} holds at $s=t$. From the definition of $\Delta \widetilde Z_t$ in \eqref{opt1}, we know that 
	\[
		\mathbb E[\Delta \widetilde Z_t]=-\frac{I^B}{a}\mathbb E[\mathcal X]-\frac{I^D}{a}, \quad
		\Delta \widetilde Z_t=-\frac{I^A}{\tilde a}(\mathcal X-\mathbb E[\mathcal X])+\mathbb E[\Delta  \widetilde Z_t],
	\]	
	where we recall that $\mathcal X:=\mathcal X_{t-}$.
	Since $I^A\mathcal K=\tilde a$ and $I^B\mathcal K=a,$ we have that
\begin{equation*}
	I^B_t\mathbb E[\widetilde{\mathcal X}_{t}]+I^D_t=I^B_t\mathbb E[\mathcal X+\mathcal K\Delta \widetilde Z_t]+I^D_t = I^B_t\mathbb E[\mathcal X]+a\mathbb E[\Delta \widetilde Z_t]+I^D_t=0
\end{equation*}
and
\begin{equation*}		
	I^A_t(\widetilde{\mathcal X}_t-\mathbb E[\widetilde{\mathcal X}_{t}])=I^A_t(\mathcal X+\mathcal K\Delta \widetilde Z_t-\mathbb E[\mathcal X+\mathcal K\Delta \widetilde Z_t])=I^A_t(\mathcal X-\mathbb E[\mathcal X])+\tilde a(\Delta \widetilde Z_t-\mathbb E[\Delta \widetilde Z_t])=0.
\end{equation*}

Next, we prove that $d\Big\{I^A_s(\widetilde{\mathcal X}_s-\mathbb E[\widetilde{\mathcal X}_{s}])\Big\}=d\Big\{I^B_s\mathbb E[\widetilde{\mathcal X}_{s}]+I^D_s\Big\}=0$ for all $s\in[t,T)$ from which \eqref{equality-optimal-state} follows. In fact, the state dynamics gives us that
\begin{equation*}\label{expectation-X}
	\begin{split}
		d\mathbb E[\widetilde{\mathcal X}_s]%=&~\left((\mathcal H+\overline{\mathcal H})\mathbb E[\widetilde{\mathcal X}_s]+\mathcal G+\mathcal K\left(-\frac{\dot{I^B}}{a}\mathbb E[\widetilde{\mathcal X}_s]-\frac{\dot{I^D}}{a}-\frac{I^B}{a}(\mathcal H+\overline{\mathcal H})\mathbb E[\widetilde{\mathcal X}_s])-\frac{I^B}{a}\mathcal G\right)\right)\,ds\\
		=&~\left((\mathcal H+\overline{\mathcal H})\mathbb E[\widetilde{\mathcal X}_s]+\mathcal G-\mathcal K\frac{\dot I^B_s}{a}\mathbb E[\widetilde{\mathcal X}_s]-\mathcal K\frac{\dot I^D_s}{a}-\mathcal K\frac{I^B_s}{a}(\mathcal H+\overline{\mathcal H})\mathbb E[\widetilde{\mathcal X}_s])-\mathcal K\frac{I^B_s}{a}\mathcal G\right)\,ds,
	\end{split}
\end{equation*}
and
\begin{equation*}
	\begin{split}
		d(\widetilde{\mathcal X}_s-\mathbb E[\widetilde{\mathcal X}_s])%=&~\left(\mathcal H(\widetilde{\mathcal X}_s-\mathbb E[\widetilde{\mathcal X}_s])+\mathcal K\left(-\frac{\dot{I^A}}{\tilde a}(\widetilde{\mathcal X}-\mathbb E[\widetilde{\mathcal X}])-\frac{I^A}{\tilde a}\mathcal H(\widetilde{\mathcal X}-\mathbb E[\widetilde{\mathcal X}])\right)\right)\,ds-\mathcal K\frac{I^A}{\tilde a}\mathcal D\,dW_s+\mathcal D\,dW_s\\
		=&~\left(\mathcal H(\widetilde{\mathcal X}_s-\mathbb E[\widetilde{\mathcal X}_s])-\mathcal K\frac{\dot I^A_s}{\tilde a}(\widetilde{\mathcal X}_s-\mathbb E[\widetilde{\mathcal X}_s])-\mathcal K\frac{I^A_s}{\tilde a}\mathcal H(\widetilde{\mathcal X}_s-\mathbb E[\widetilde{\mathcal X}_s])\right)\,ds-\mathcal K\frac{I^A_s}{\tilde a}\mathcal D_s\,dW_s+\mathcal D_s\,dW_s.
	\end{split}
\end{equation*}
Hence, the desired result follows from the following equalities:
\begin{equation*}
	\begin{split}
	d	\Big\{I^B_s\mathbb E[\widetilde{\mathcal X}_{s}]+I^D_s\Big\}=&~\dot I^B_s\mathbb E[\widetilde{\mathcal X}_{s}]\,ds+\dot I^D_s\,ds+I^B_sd\mathbb E[\widetilde{\mathcal X}_{s}]\\
		=&~\left(\dot I^B_s+I^B_s\left((\mathcal H+\overline{\mathcal H})-\mathcal K\frac{\dot I^B_s}{a}-\mathcal K\frac{I^B_s}{a}(\mathcal H+\overline{\mathcal H})\right)\right)\mathbb E[\widetilde{\mathcal X}_{s}]\,ds\\
		&~+\dot I^D_s\,ds+I^B_s\left(\mathcal G-\mathcal K\frac{I^B_s}{a}\mathcal G-\mathcal K\frac{\dot I^D_s}{a}\right)\,ds\\
		=&~\left(\dot I^B_s+I^B_s(\mathcal H+\overline{\mathcal H})-\dot I^B_s-I^B_s(\mathcal H+\overline{\mathcal H})\right)\mathbb E[\widetilde{\mathcal X}_{s}]\,ds\\
		&~+(\dot I^D_s+I^B_s\mathcal G-I^B_s\mathcal G-\dot I^D_s)\,ds\\
		=&~0,
	\end{split}
\end{equation*}
and
\begin{equation*}
	\begin{split}
		d\Big\{I^A_s(\widetilde{\mathcal X}_s-\mathbb E[\widetilde{\mathcal X}_{s}])\Big\}=&~dI^A_s(\widetilde{\mathcal X}_s-\mathbb E[\widetilde{\mathcal X}_{s}])+I^A_sd(\widetilde{\mathcal X}_s-\mathbb E[\widetilde{\mathcal X}_{s}])\\
		=&~\left(\dot I^A_s+I^A_s\left(\mathcal H-\mathcal K\frac{\dot I^A_s}{\tilde a}-\mathcal K\frac{I^A_s}{\tilde a}\mathcal H\right)\right)(\widetilde{\mathcal X}_s-\mathbb E[\widetilde{\mathcal X}_{s}])\,ds+I^A_s\left(-\mathcal K\frac{I^A_s}{\tilde a}\mathcal D_s+\mathcal D_s\right)\,dW_s\\
		=&~\left(\dot I^A_s+I^A_s\mathcal H-\dot I^A_s-I^A_s\mathcal H\right)(\widetilde{\mathcal X}_s-\mathbb E[\widetilde{\mathcal X}_{s}])\,ds+\left(-I^A_s\mathcal D_s+I^A_s\mathcal D_s\right)\,dW_s\\
		=&~0.
	\end{split}
\end{equation*}
\end{proof}

%%%%%%%%%%%%%%%%%%%%%%%%%%%%%%%%%
%%%%%%%%%%%%%%%%%%%%%%%%%%%%%%%%%
%%%%%%%%%%%%%%%%%%%%%%%%%%%%%%%%%

\section{Numerical examples}\label{sec-num}

This section provides numerical simulations that illustrate the dependence of the optimal inventory process on various model parameters. In all cases $x=1$, $c=y=0$, $\sigma=0.8$ and $T=1$. 
All trajectories were generated from the same Brownian path to guarantee that the trajectories are comparable. In addition to our optimal solution we display the optimal solution in the Obizhaeva-Wang model \cite{OW-2013}, which is the canonical reference point for our model. Setting $\alpha=\beta=\sigma=0$ our model reduces to the Obizhaeva-Wang model.  

Figure 1 displays the optimal position for two extreme choice of the market risk parameter. When the investor is highly risk averse the optimal holding in our models is relatively close to the one in the Obizhaeva-Wang model with added twist that in our model the investor may take short positions generating additional sell child order flow and buy the stock back while benefitting from the additional sell order flow. Overselling own positions should not be viewed as a fraudulent attempt to manipulate prices. Instead, the large investor rationally anticipates his/her impact on future order flow when making trading decisions and uses it to his/her advantage. Similar effects have previously been observed in the literature; see \cite{FHX1} and references therein for a more detailed discussion of different manipulation strategies in portfolio liquidation models. 

\begin{figure}[h]\label{fig1}
	\begin{minipage}[c]{0.5\textwidth}
		\centering
		\includegraphics[height=5cm,width=7.5cm]{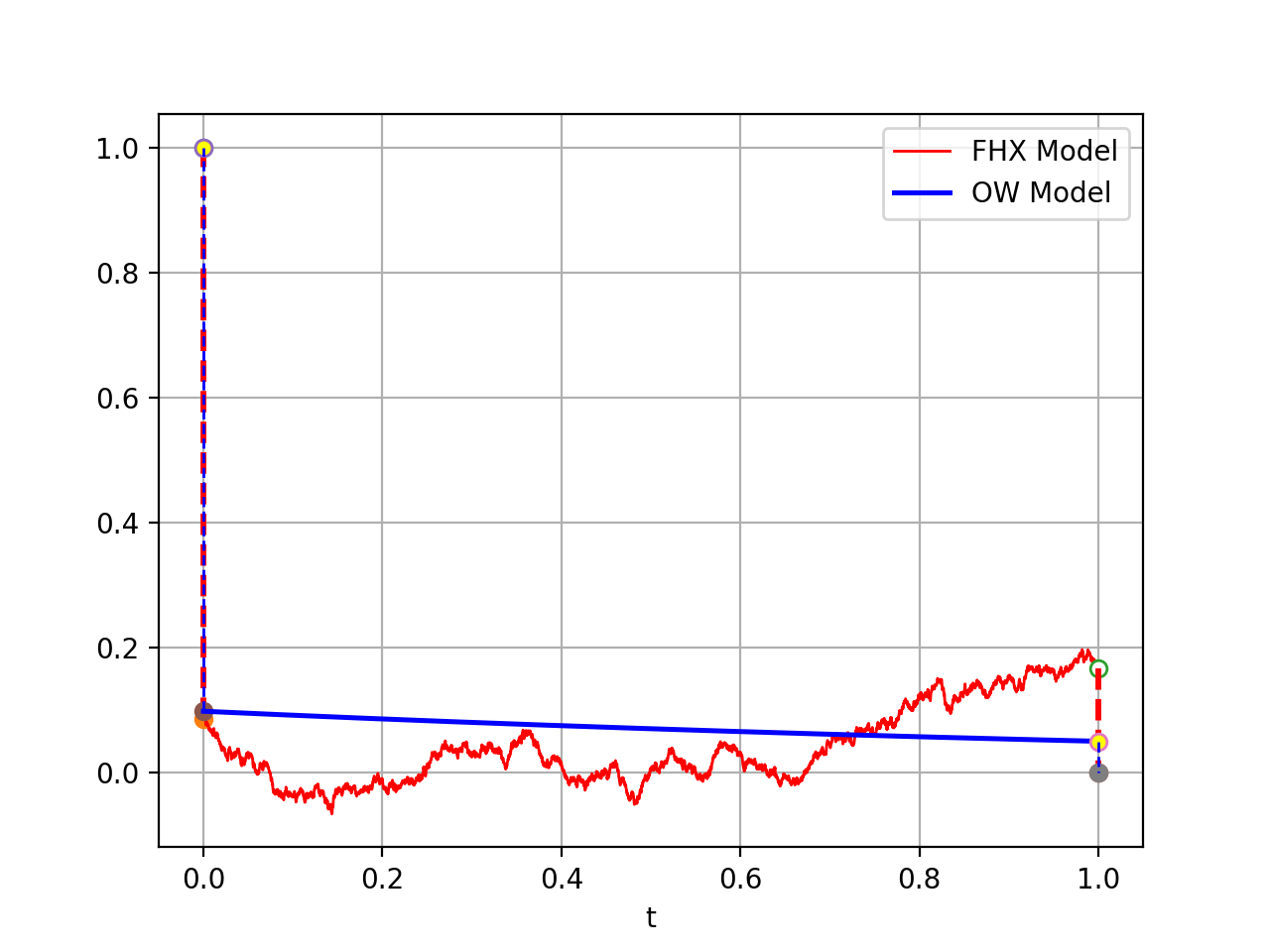}
	\end{minipage}
	\begin{minipage}[c]{0.5\textwidth}
		\centering
		\includegraphics[height=5cm,width=7.5cm]{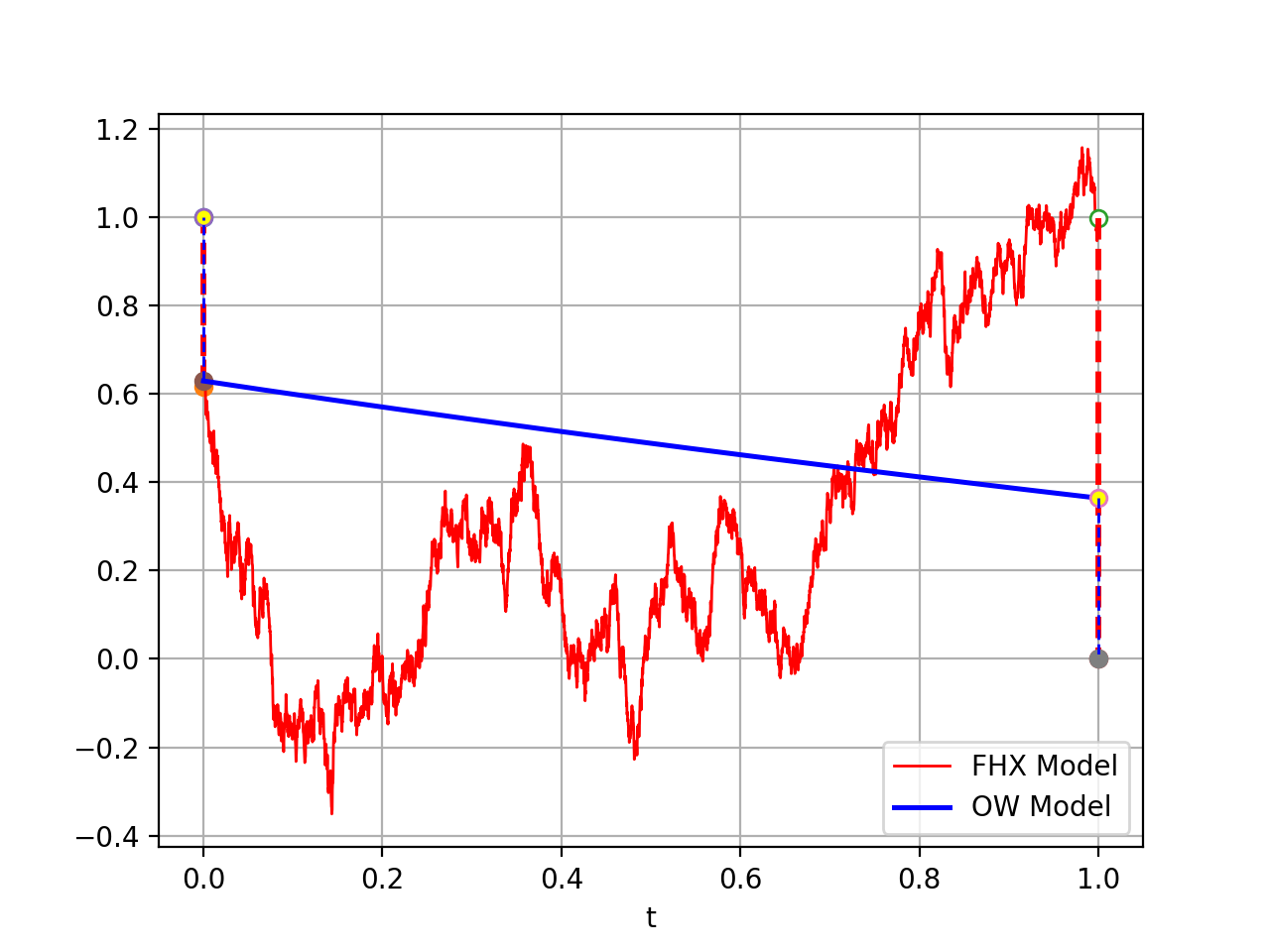}
	\end{minipage}
	\caption{Dependence of the optimal position on the risk parameter $\lambda$ for $\lambda = 1.5$ (left) and $\lambda = 0$(right). Other parameters are chosen as $\rho=0.7$, $\gamma_1 = 0.1$, $\gamma_2 = 0.5$, $\alpha = 0.5$, $\beta=1.1$. }
\end{figure}

When the investor is risk-neutral, then the variations in the optimal inventory process are much larger; portfolio holdings range from about $-0.4$ to $1.2$. This, too is very intuitive. Large portfolio holdings are  much ``cheaper'' for a risk-neutral than a risk-averse trader.

\begin{figure}[h]\label{fig2}
	\begin{minipage}[c]{0.5\textwidth}
		\centering
		\includegraphics[height=4.5cm,width=7.5cm]{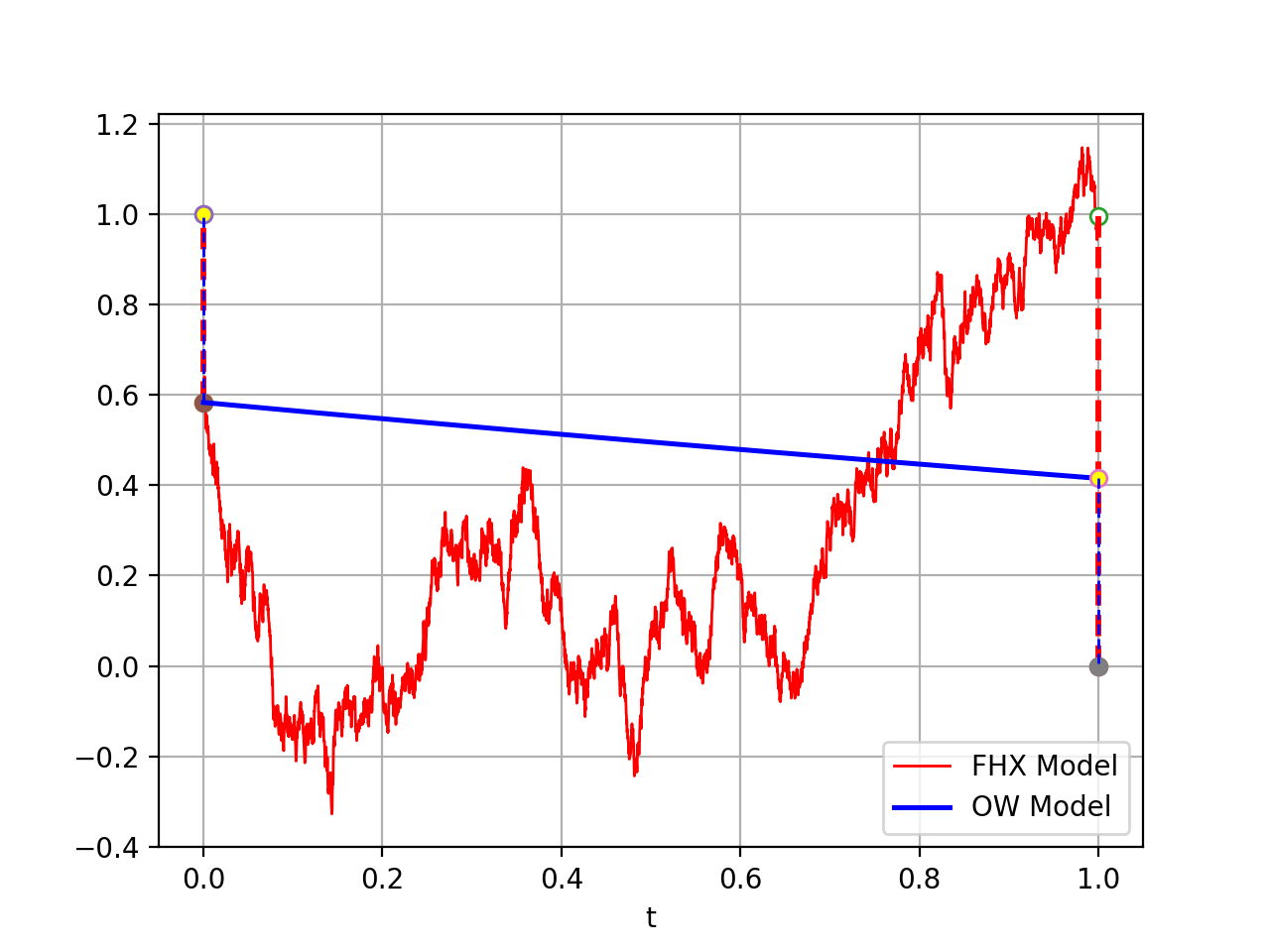}
	\end{minipage}
	\begin{minipage}[c]{0.5\textwidth}
		\centering
		\includegraphics[height=4.5cm,width=7.5cm]{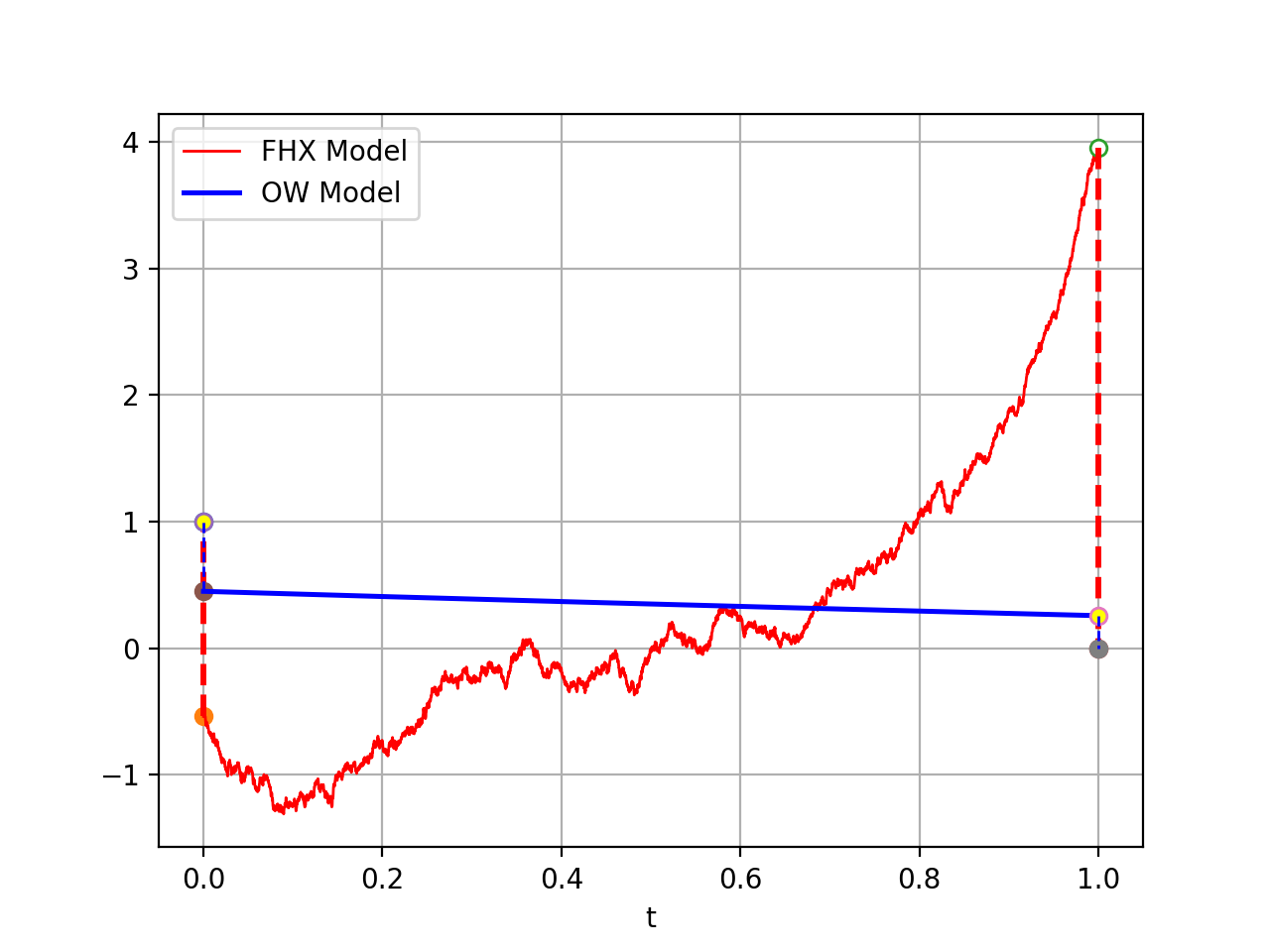}
	\end{minipage}
%\begin{minipage}[c]{0.5\textwidth}
%\centering
%\includegraphics[height=4.5cm,width=7.5cm]{figure5.png}
%\end{minipage}
	\caption{Dependence of the optimal position on the impact parameter $\alpha$ for $\alpha = 0$ (left) and $\alpha = 1.8$ (right). Other parameters are chosen as $\rho=0.4$, $\lambda = 0$, $\gamma_1 = 0.1$, $\gamma_2 = 0.5$, $\beta=3$.} 
\end{figure}

Figure 2 displays the optimal position for varying degrees of child order flow. Even in the absence of any feedback effect ($\alpha=0$) we see that it may be optimal to take short positions, that is, to drive the benchmark price down and then to close the position submitting a large order at a favorable price at the end of the trading period. This effect is much stronger in the presence of child order flow where the price decrease due to own selling may be very strong and may well outweigh the cost of block trade at the end of the trading period. Optimal positions for different transient market impact parameters are shown in Figure 3.

\begin{figure}[h]\label{fig3}
	\begin{minipage}[c]{0.5\textwidth}
		\centering
		\includegraphics[height=4.5cm,width=7.5cm]{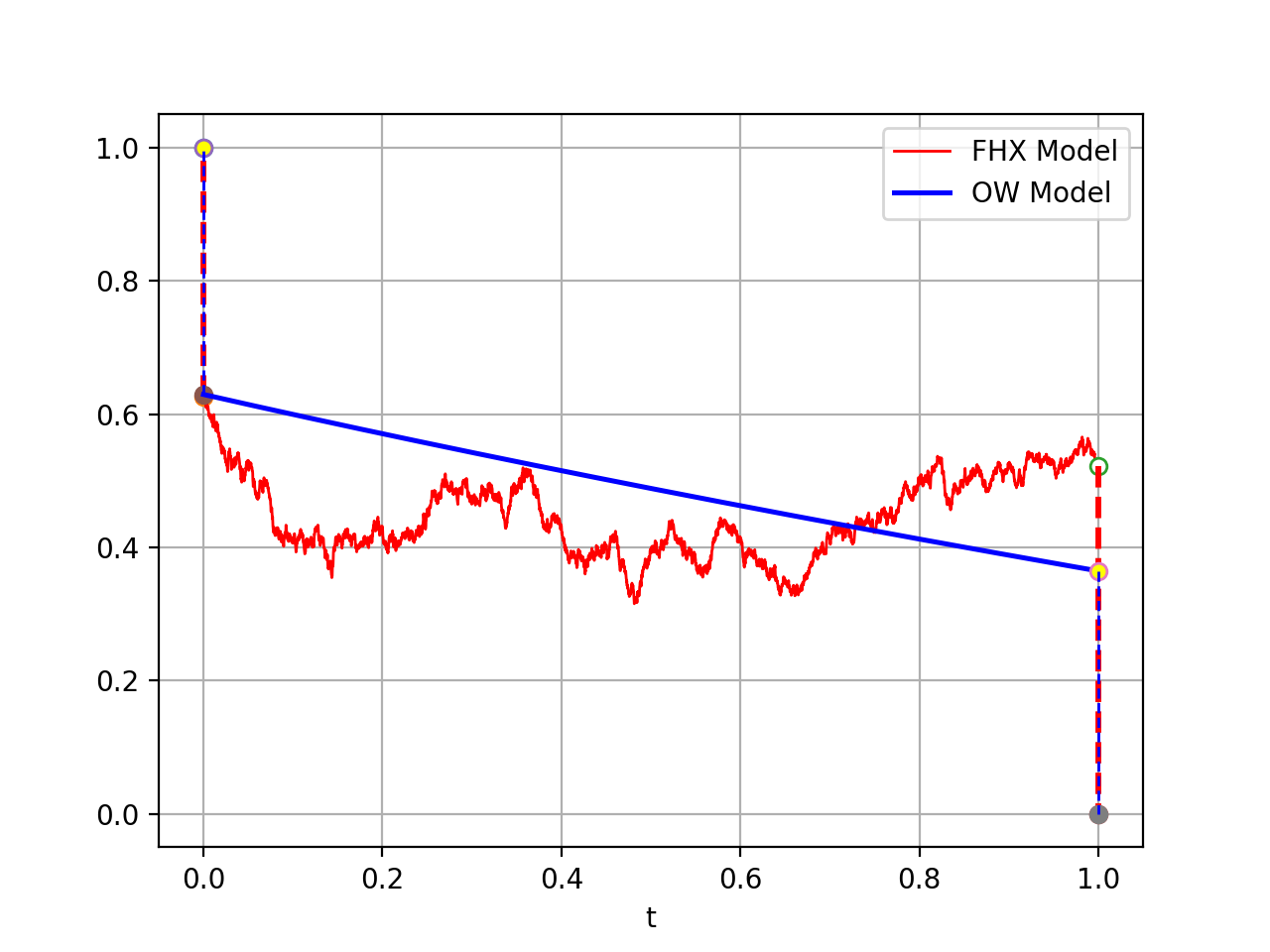}
	\end{minipage}
	\begin{minipage}[c]{0.5\textwidth}
		\centering
		\includegraphics[height=4.5cm,width=7.5cm]{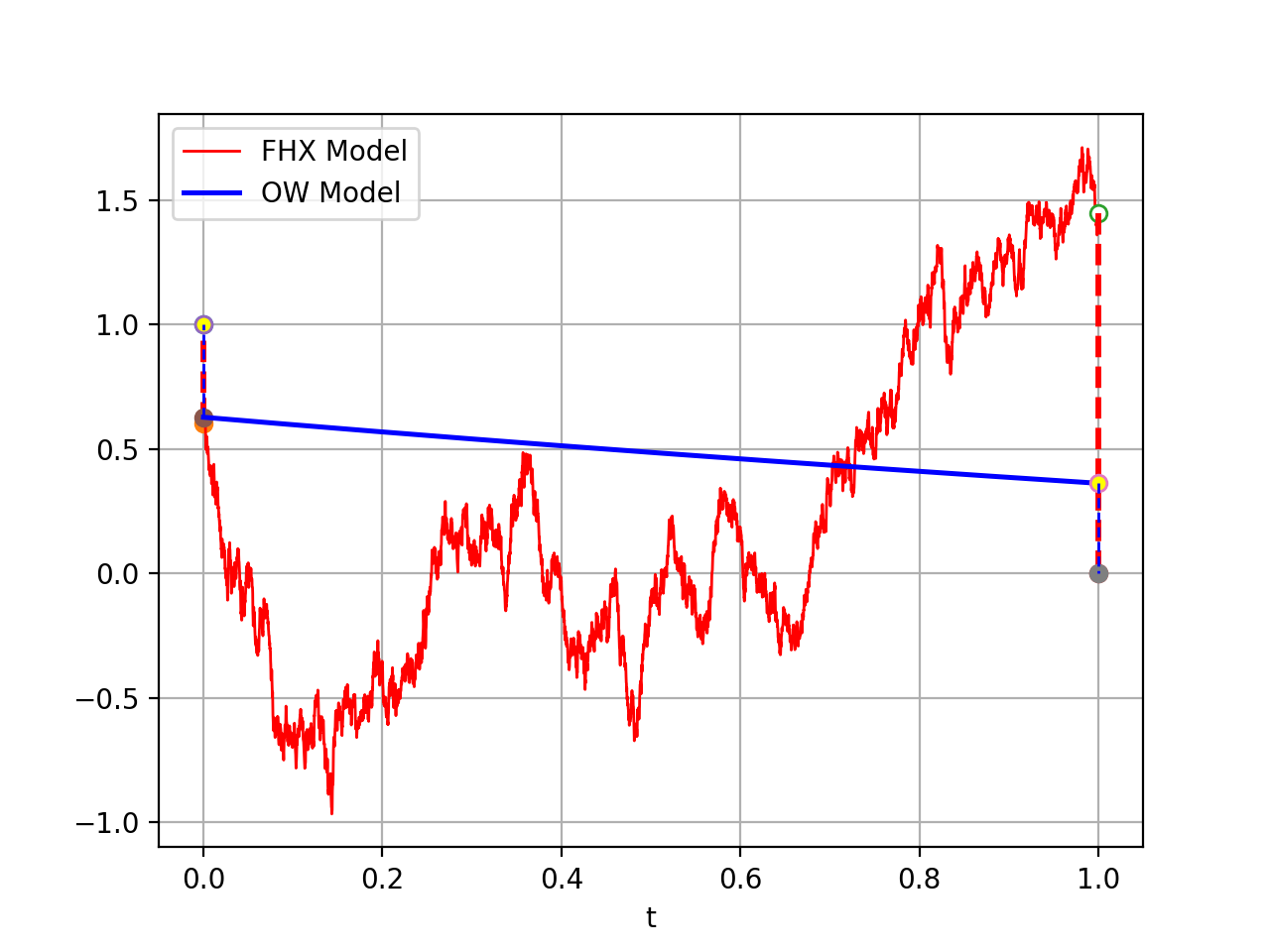}
	\end{minipage}
%	\begin{minipage}[c]{0.5\textwidth}
%	\centering
%	\includegraphics[height=4.5cm,width=7.5cm]{figure8.png}
%\end{minipage}
\caption{Dependence of the optimal position on the impact parameter $\gamma_2$ for $\gamma_2 = 2$ (left) and $\gamma_2 = 0.3$ (right). Other parameters are chosen as $\rho = 0.7$, $\lambda = 0$, $\gamma_1 = 0.1$, $\alpha = 0.5$, $\beta=1.1$.} 
\end{figure}

\section{Conclusion}\label{sec-con}

We considered a novel mean-field control problem with semimartingale strategies. We obtained a candidate value function by passing to the limit from a sequence of discrete time models. The value function can be described in terms of the solution to a fully coupled system of Riccati equations. A sophisticated transformation shows that the system has a unique solution and that the candidate optimal strategy is indeed optimal.  

Several avenues are open for future research. Let us just mention two. First, except the volatility of the spread all cost coefficients in our model are deterministic constants. Although there are many liquidation models where similar assumptions are made, these assumptions seem restrictive from a mathematical perspective. However, as far as we can tell there is no obvious way to extend the heuristics outlined in the appendix to the case of random coefficients. Second, we only considered a single-player model. While there is a substantial literature on $N$-player games and, more so on mean-field games (MFGs) with singular controls (see e.g. \cite{Campi-2020,Ferrari-2021,Cao-2020,DFFN-2022,Fu2019,FH-2017,GX-2020}), 
MFGs with semimartingale strategies have not yet been considered in the literature to the best of our knowledge.

%%%%%%%%%%%%%%%%%%%%%%%%%%%%%%%%%
%%%%%%%%%%%%%%%%%%%%%%%%%%%%%%%%%
%%%%%%%%%%%%%%%%%%%%%%%%%%%%%%%%%

\begin{appendix}
	
\section{Heuristic derivation of the optimal solution}
In this appendix we consider a discrete time model from which we heuristically derive the Riccati equations in terms of which we can represent both the value function and the optimal strategy in our continuous time model. The idea is to derive a recursive dynamics in a discrete-time setting and then to take formal limits as the time between two consecutive trading times tends to zero. 

\subsection{The discrete time model}
Let us assume that there are $N+1$ trading times $0,\Delta,2\Delta,\cdots,N\Delta$. Let the volume traded at time $i\Delta$ be denoted by $\xi_i$. The state immediately before this control is implemented is denoted by
\[
	\mathcal X_{i \Delta -} := (X_{i\Delta-},Y_{i\Delta-},C_{i\Delta-}). 
\]	
Let $(\epsilon_n)$ be a sequence of  i.i.d.~$N(0,\Delta)$-distributed random variables. In terms of the quantities
\[
	\mathcal L=\begin{pmatrix}0~ 1~ 0\end{pmatrix}^\top,\quad \mathcal R_{}=\frac{\gamma_{2}}{2},\quad \mathcal Q_\Delta=\begin{pmatrix}		
	\Delta\lambda_{}&0&0\\
	0&0&0\\
	0&0&0	
	\end{pmatrix}
\]
the discrete-time cost functional is given by
\[
J(\xi)=\mathbb E\left[	\sum_{i=0}^N \mathcal L^\top\mathcal X_{i\Delta-}\xi_i  +\mathcal R_{}\xi^2_i+\mathcal X_{i\Delta-}^\top\mathcal Q_{\Delta}\mathcal X_{i\Delta-}				\right],
\]
and the discrete time state dynamics in matrix form reads 
\begin{equation}\label{eq:mathcal X}
	\mathcal X_{(n+1)\Delta-}=\mathcal A_{}\mathcal X_{n\Delta-}+\overline{\mathcal A}_{ }\mathbb E[\mathcal X_{n\Delta-}]+\mathcal B_{ }\xi_n +\overline{\mathcal B}_{ }\mathbb E[\xi_n]+\mathcal C_{}+\mathcal D_{}\epsilon_{n+1},
\end{equation}
where
\begin{equation}\label{eq:A2}
	\begin{split}
		\mathcal A_{}=&~ \begin{pmatrix}		1&0&0\\
			0& 1-\Delta\rho_{}&-\Delta\gamma_{1}(\beta-\alpha)\\
			0& 0& 1-(\beta-\alpha)\Delta	
		\end{pmatrix},\quad   \overline{\mathcal A}_{}=  \begin{pmatrix}			
			0&0&0\\
			-\alpha\Delta\gamma_{1}&0&0\\
			-\alpha\Delta&0&0
		\end{pmatrix},\\
		\mathcal B_{}=&~\begin{pmatrix}-1  	& (1-\Delta\rho_{})\gamma_{2}& 0	\end{pmatrix}^\top,\qquad\qquad  \overline{\mathcal B}_{}=\begin{pmatrix} 0& \alpha\Delta\gamma_{1} & \alpha\Delta					\end{pmatrix}^\top,\\
		\mathcal C_{}=&~\begin{pmatrix}		 0&\alpha\Delta\gamma_{1}\mathbb E[x_0]& \alpha\Delta\mathbb E[x_0]		\end{pmatrix}^\top,\quad~~  \mathcal D_{}=\begin{pmatrix}	0&\sigma_{}&0				\end{pmatrix}^\top.
	\end{split}
\end{equation}
%and the cost functional can be rewritten as
%\[
%J(\xi)=\mathbb E\left[	\sum_{i=0}^N \mathcal L^\top\mathcal X_{i\Delta-}\xi_i  +\mathcal R_{}\xi^2_i+\mathcal X_{i\Delta-}^\top\mathcal Q_{\Delta}\mathcal X_{i\Delta-}				\right],
%\]
%where
%\[
%\mathcal L=\begin{pmatrix}0~ 1~ 0\end{pmatrix}^\top,\quad \mathcal R_{}=\frac{\gamma_{2}}{2},\quad \mathcal Q_{\Delta}=\begin{pmatrix}		
%	\Delta\lambda_{}&0&0\\
%	0&0&0\\
%	0&0&0	
%\end{pmatrix}.
%\]
%%Let $\xi_i=\phi_i(\mathcal X_{i\Delta-})$. The the cost functional can be rewritten in terms of the law of $\mathcal X$
%%\begin{equation}
%%	J(\phi)=\sum_{i=1}^N   \int_{\mathbb R^3} \Big\{ \mathcal L^\top x\phi_i(x)	+x^\top\mathcal Q_{i\Delta} x+\mathcal R_{i\Delta}\phi^2_{i}(x)	\Big\}	\mathbb P\circ\mathcal X^{-1}_{i\Delta-}(dx) .
%%\end{equation}
%%Define the value function as
%%\[
%%	V_n=\inf_\phi \sum_{i=n}^N  \int_{\mathbb R^3} \Big\{ \mathcal L^\top x\phi_i(x)	+x^\top\mathcal Q_{i\Delta} x+\mathcal R_{i\Delta}\phi^2_{i}(x)	\Big\}	\mathbb P\circ\mathcal X^{-1}_{i\Delta-}(dx).
%%\]
%%DPP \cite{Pham2016} implies that
%%\[
%%	V_n
%%\]
%%%%%%%%%%%%%%%%%%%%%%%%%%%%%%%%%
%%%%%%%%%%%%%%%%%%%%%%%%%%%%%%%%%
%%%%%%%%%%%%%%%%%%%%%%%%%%%%%%%%%
\subsubsection{The value function}

To derive a representation of the value function in discrete time we denote by $\mu$ the law of the random variable $\mathcal X:=\mathcal X_{n\Delta-}$ and set 
\[
	V_n(\mu):=\inf_{(\xi_i)_{i=n}^N}\mathbb E\left[	\sum_{i=n}^N \mathcal L^\top\mathcal X_{i\Delta-}\xi_i  +\mathcal R_{}\xi^2_i+\mathcal X_{i\Delta-}^\top\mathcal Q_{\Delta}\mathcal X_{i\Delta-}				\right].
\]

By the dynamic programming principle given in \cite[Corollary 4.1]{Cosso2019} we have that
\begin{equation}\label{DPP}
	V_n(\mu)=\inf_{\xi}\bigg\{  \mathbb E\left[  \mathcal L^\top\mathcal X_{}\xi  +\mathcal R_{}\xi^2+\mathcal X_{}^\top\mathcal Q_\Delta\mathcal X_{}\right]+  V_{n+1} \Big( \mathbb P\circ \mathcal X_{(n+1)\Delta-}^{(\xi),-1} \Big)   \bigg\},	
\end{equation}
where $\mathcal X^{(\xi)}$ denotes the state corresponding to the control $\xi$.  Let
\[
\xi=\widehat\xi+\delta, \quad\widehat{\mathcal X}=(	\widehat{ X}_{n\Delta-},\widehat Y_{n\Delta-},\widehat C_{n\Delta-}	)^\top=(	X_{n\Delta-}-\delta, Y_{n\Delta-}+\gamma_2\delta, C_{n\Delta-}		)^\top=\mathcal X+(-\delta,\gamma_2\delta,0)^\top.
\]    		
A straightforward calculation shows that
\begin{equation}\label{recursive-Vn}
	\begin{split}
		V_n(\mu)=&~ \inf_{\widehat\xi}  \Big\{	\mathbb E\left[		 \mathcal L^\top\widehat{\mathcal X}_{}\widehat\xi  +\mathcal R_{}\widehat\xi^2+\widehat{\mathcal X}_{}^\top\mathcal Q_\Delta\widehat{\mathcal X}_{}+  V_{n+1}(\mathbb P\circ \widehat{\mathcal X}_{(n+1)\Delta-}^{(\widehat\xi),-1} ) 		\right]							\Big\}\\
		&~+\mathbb E\left[	\delta Y_{n\Delta-}+\frac{\gamma_{2}}{2}\delta^2 +\lambda_{}\Delta\delta^2+2\lambda_{}\Delta\delta\widehat X_{n\Delta-}			\right]\\
		=&~V_n(\widehat\mu)+\mathbb E\left[	\delta Y_{n\Delta-}+\frac{\gamma_{2}}{2}\delta^2 +\lambda_{}\Delta\delta^2+2\lambda_{}\Delta\delta\widehat X_{n\Delta-}			\right]\\
		=&~V_n(\widehat\mu)+\mathbb E\left[		(2\lambda_{}\Delta\delta,\delta,0)\mathcal X		\right]	+\frac{\gamma_{2}}{2}\delta^2-\lambda_{}\Delta\delta^2\\
		=&~V_n(\widehat\mu)+	(2\lambda_{}\Delta\delta,\delta,0)\overline{\mu}	+\frac{\gamma_{2}}{2}\delta^2-\lambda_{}\Delta\delta^2,
	\end{split}
\end{equation}
where $\widehat{\mathcal X}$ follows the distribution $\widehat\mu$. 
%and $\overline\mu$ denotes the first moment of the distribution $\mu$, i.e. $\overline\mu:=\int_{\mathbb R^3}x\mu(dx)$. 
Let us now make the ansatz
\begin{equation}\label{ansatz}
	V_n(\mu)=		\textnormal{Var}(\mu)(A_{n})  + \overline\mu^\top B_{n}\overline\mu		+ D_{n}^\top\overline\mu+F_{n},
\end{equation}
where for each $n$, $A_{n},B_{n}\in\mathbb S^3$, $D_{n}\in\mathbb R^3$ and $F_{n}\in\mathbb R$ are to be determined. Along with equation \eqref{recursive-Vn} the ansatz yields that
\begin{equation*}
	\begin{split}
		&~\mathbb E\left[	\mathcal X^\top A_{n}\mathcal X		\right]		-\mathbb E[\mathcal X^\top]  A_n\mathbb E[\mathcal X]  +\mathbb E[\mathcal X^\top]  B_n\mathbb E[\mathcal X]		+ D^\top_n\mathbb E[\mathcal X]+F_n\\
		=&~\mathbb E\left[	\widehat{\mathcal X}^\top A_n\widehat{\mathcal X}		\right]		-\mathbb E[\widehat{\mathcal X}^\top] A_n\mathbb E[\widehat{\mathcal X}]  +\mathbb E[\widehat{\mathcal X}^\top] B_n\mathbb E[\widehat{\mathcal X}]		+D^\top_n\mathbb E[\widehat{\mathcal X}]+F_n\\
		&~+(2\lambda_{}\Delta\delta,\delta,0)\overline\mu	+ \frac{\gamma_{2}}{2}\delta^2-\lambda_{}\Delta\delta^2.
	\end{split}
\end{equation*}

Dividing by $\delta$ on both sides and letting $\delta\rightarrow 0$, we get that
\[
	(-1,\gamma_{2},0)B_n\overline\mu  +\frac{1}{2}D_n^\top(-1,\gamma_{2},0)^\top		+(	\lambda\Delta,\frac{1}{2},0	)\overline\mu =0.
\]
The fact that this equation needs to hold for all $\bar \mu$ suggests that the coefficients multiplying the entries of the vector $\overline\mu$ are all equal to zero and so are the entries of the second summand. This yields that
\begin{equation}\label{relation-Bn-Dn}
	\left\{\begin{split}
		&~- B_{n,11}+\gamma_{2} B_{n,21}+\lambda_{}\Delta=0\\
		&~- B_{n,12}+\gamma_{2} B_{n,22}+\frac{1}{2}=0\\
		&~- B_{n,13}+\gamma_{2}  B_{n,23}=0\\
		&~- D_{n,1}+\gamma_{2} D_{n,2}=0.
	\end{split}\right.
\end{equation}
We conjecture that the entries of the matrix $A_n$ satisfy the same equations as those of $B_n$, i.e.
\begin{equation}\label{relation-An}
	\left\{\begin{split}
		&~- A_{n,11}+\gamma_{2} A_{n,21}+\lambda_{}\Delta=0\\
		&~- A_{n,12}+\gamma_{2} A_{n,22}+\frac{1}{2}=0\\
		&~- A_{n,13}+\gamma_{2}  A_{n,23}=0.
	\end{split}\right.
\end{equation}

From the equations \eqref{DPP} and \eqref{ansatz}, we will derive recursive equations for the coefficients $ A_n$, $ B_n$, $ D_n$ and $ F_n$ from which we will then derive the candidate continuous time dynamics.

%%%%%%%%%%%%%%%%%%%%%%%%%%%%%%%%%
%%%%%%%%%%%%%%%%%%%%%%%%%%%%%%%%%
%%%%%%%%%%%%%%%%%%%%%%%%%%%%%%%%%

\subsubsection{The optimal strategy for the discrete time model}\label{sec:app-optimal_strategy}

To derive the desired recursion, we first need to determine the candidate optimal strategy for the discrete time model. To this end, we first use \eqref{eq:mathcal X} and \eqref{eq:A2} to conclude that
\begin{equation}
	\begin{split}
		\mathcal X_{(n+1)\Delta-}-\mathbb E\left[	\mathcal X_{(n+1)\Delta-}	\right]=\mathcal A_{}\left(\mathcal X_{n\Delta-}-\mathbb E[\mathcal X_{n\Delta-}]\right)+\mathcal B_{}\left(\xi_n -\mathbb E[\xi_n]\right)+\mathcal D_{}\epsilon_{n+1}.
	\end{split}
\end{equation}
Hence,
\begin{equation*}
	\begin{split}
		& \text{Var} \Big(	\mathbb P\circ\mathcal X^{(\xi),-1}_{(n+1)\Delta-}\Big)(A_{n+1} ) \\
		= & ~ \mathbb E\left[	\left(\mathcal X^{(\xi)}_{(n+1)\Delta-}-\mathbb E\left[		\mathcal X^{(\xi)}_{(n+1)\Delta-}	\right]\right)^\top A_{n+1}\left(\mathcal X_{(n+1)\Delta-}	-	\mathbb E\left[	\mathcal X^{(\xi)}_{(n+1)\Delta-}	\right]	\right)			\right]        \\
		=&~\mathbb E\Big[	\Big\{	\mathcal A_{}\left(\mathcal X_{n\Delta-}-\mathbb E[\mathcal X_{n\Delta-}]\right)+\mathcal B_{}\left(\xi_n -\mathbb E[\xi_n]\right)+\mathcal D_{}\epsilon_{n+1}			\Big\}^\top A_{n+1} \\
		&~\cdot\Big\{	\mathcal A_{}\left(\mathcal X_{n\Delta-}-\mathbb E[\mathcal X_{n\Delta-}]\right)+\mathcal B_{}\left(\xi_n -\mathbb E[\xi_n]\right)+\mathcal D_{}\epsilon_{n+1}				\Big\}				\Big]\\
		=&~\mathbb E\Big[	\Big\{			\mathcal A_{}\left(\mathcal X_{n\Delta-}-\mathbb E[\mathcal X_{n\Delta-}]\right)+\mathcal B_{}\left(\xi_n -\mathbb E[\xi_n]\right)			\Big\}^\top A_{n+1}\Big\{		\mathcal A_{}\left(\mathcal X_{n\Delta-}-\mathbb E[\mathcal X_{n\Delta-}]\right)+\mathcal B_{}\left(\xi_n -\mathbb E[\xi_n]\right)		\Big\}			\Big]\\
		&~	+ \Delta\mathcal D^\top_{} A_{n+1}\mathcal D_{}  .
	\end{split}
\end{equation*}
Moreover, 
\begin{equation*}
	\begin{split}
		\overline{	\mathbb P\circ\mathcal X^{(\xi),-1}_{(n+1)\Delta-}  } = \mathbb E\left[		\mathcal X^{(\xi)}_{(n+1)\Delta-} 	\right]=(	\mathcal A_{}+\overline{\mathcal A}_{ }		)\overline\mu +(\mathcal B_{ }+\overline{\mathcal B}_{ })\mathbb E[\xi]  +\mathcal C_{ }.
	\end{split}
\end{equation*}

From the ansatz \eqref{ansatz} we get that
\begin{equation*}
	\begin{split}
		&~V_{n+1} \Big( \mathbb P\circ\mathcal X^{(\xi),-1}_{(n+1)\Delta-} \Big)\\
		%=&~\text{Var}\Big( \mathbb P\circ\mathcal X^{(\xi),-1}_{(n+1)\Delta-} )(A_{n+1} \Big) +\overline{	\mathbb P\circ\mathcal X^{(\xi),-1}_{(n+1)\Delta-}	}^\top B_{n+1} \overline{\mathbb P\circ\mathcal X^{(\xi),-1}_{(n+1)\Delta-}}  + D_{n+1}^\top \overline{	\mathbb P\circ\mathcal X^{(\xi),-1}_{(n+1)\Delta-}			}+F_{n+1}\\
		=&~\mathbb E\Big[	\Big\{			\mathcal A_{}\left(\mathcal X_{n\Delta-}-\overline\mu\right)+\mathcal B_{ }\left(\xi -\mathbb E[\xi]\right)			\Big\}^\top A_{n+1}\Big\{		\mathcal A_{}\left(\mathcal X_{n\Delta-}-\overline\mu\right)+\mathcal B_{ }\left(\xi-\mathbb E[\xi]\right)		\Big\}			\Big]\\
		&~	+ \Delta\mathcal D^\top_{} A_{n+1}\mathcal D_{} + \Big\{(	\mathcal A_{}+\overline{\mathcal A}_{}		)\overline\mu +(\mathcal B_{}+\overline{\mathcal B}_{})\mathbb E[\xi]  +\mathcal C_{}		\Big\}^\top B_{n+1} \Big\{(	\mathcal A_{}+\overline{\mathcal A}_{}		)\overline\mu +(\mathcal B_{}+\overline{\mathcal B}_{})\mathbb E[\xi]  +\mathcal C_{}		\Big\}\\
		&~+D_{n+1}^\top \Big\{(	\mathcal A_{}+\overline{\mathcal A}_{}		)\overline\mu +(\mathcal B_{}+\overline{\mathcal B}_{})\mathbb E[\xi]  +\mathcal C_{}		\Big\}	+ F_{n+1}.	
	\end{split}
\end{equation*}
Thus, the DPP \eqref{DPP} implies that
\begin{equation*}
	\begin{split}
		V_n(\mu)
		=&~\inf_\xi 	\Bigg\{	\mathbb E\left[	 \mathcal L^\top\mathcal X_{n\Delta-}\xi  +\mathcal R_{}\xi^2+\mathcal X_{n \Delta-}^\top\mathcal Q_{\Delta}\mathcal X_{n\Delta-}		\right]		\\
		&~+\mathbb E\Big[	\Big\{			\mathcal A_{}\left(\mathcal X_{n\Delta-}-\overline\mu\right)+\mathcal B_{}\left(\xi -\mathbb E[\xi]\right)			\Big\}^\top A_{n+1}\Big\{		\mathcal A_{}\left(\mathcal X_{n\Delta-}-\overline\mu\right)+\mathcal B_{}\left(\xi-\mathbb E[\xi]\right)		\Big\}			\Big]	+ \Delta_{}\mathcal D^\top A_{n+1}\mathcal D_{}	\\
		&~	+ \Big\{(	\mathcal A_{}+\overline{\mathcal A}_{}		)\overline\mu +(\mathcal B_{}+\overline{\mathcal B}_{})\mathbb E[\xi]  +\mathcal C_{}		\Big\}^\top B_{n+1} \Big\{(	\mathcal A_{}+\overline{\mathcal A}_{}		)\overline\mu +(\mathcal B_{}+\overline{\mathcal B}_{})\mathbb E[\xi]  +\mathcal C_{}		\Big\}\\
		&~+D_{n+1}^\top \Big\{(	\mathcal A_{ }+\overline{\mathcal A}_{}		)\overline\mu +(\mathcal B_{}+\overline{\mathcal B}_{})\mathbb E[\xi]  +\mathcal C_{}		\Big\}	+ F_{n+1}\Bigg\}\\
		:= &~ \inf_\xi \mathcal J(\xi).
	\end{split}
\end{equation*}
Let $\xi^*$ be the candidate optimal strategy and let $\widetilde\xi$ be an arbitrary strategy. Then
\begin{equation*}
	\begin{split}
		\mathcal J(\xi^*+\varepsilon\widetilde\xi)-\mathcal J(\xi^*)
		=&~\varepsilon\mathbb E\bigg\{		\mathcal L^\top \mathcal X_{n\Delta-}\widetilde\xi 
		+2\mathcal R_{}\xi^*\widetilde\xi	\\
		&~\quad		+2\Big( \mathcal A_{}(\mathcal X_{n\Delta-}-\overline\mu) +\mathcal B_{}(\xi^*-\mathbb E[\xi^*])		\Big)^\top A_{n+1}\mathcal B_{}( \widetilde\xi-\mathbb E[\widetilde\xi] )							\\
		&~\quad+ 2\Big(	(\mathcal A_{}+\overline{\mathcal A}_{})\overline\mu+(\mathcal B_{}+\overline{\mathcal B}_{})\mathbb E[\xi^*]	+\mathcal C_{}			\Big)^\top B_{n+1}(	\mathcal B_{}+\overline{\mathcal B}_{}		)\mathbb E[\widetilde\xi]\\
		&~\quad+D_{n+1}^\top(\mathcal B_{}+\overline{\mathcal B}_{})\mathbb E[\widetilde\xi] \bigg\}+O(\varepsilon^2)\\ 
		=&~\varepsilon \mathbb E\bigg\{	 \Big[	\mathcal L^\top \mathcal X_{n\Delta-}  +2\mathcal R_{}\xi^* +2\Big( \mathcal A_{}(\mathcal X_{n\Delta-}-\overline\mu) +\mathcal B_{}(\xi^*-\mathbb E[\xi^*])	\Big)^\top A_{n+1}\mathcal B_{} \\
		&~\quad +2\Big(	( \mathcal A_{}+\overline{\mathcal A}_{} )\overline\mu+(\mathcal B_{}+\overline{\mathcal B}_{})\mathbb E[\xi^*] +\mathcal C_{}	\Big)^\top B_{n+1}( \mathcal B_{}+\overline{\mathcal B}_{} ) +D_{n+1}(\mathcal B_{}+\overline{\mathcal B}_{})		\Big]	\widetilde\xi			\bigg\}\\
		&~\quad+O(\varepsilon^2).
	\end{split}
\end{equation*}
Since $\liminf_{\varepsilon\rightarrow 0} \frac{ \mathcal J(\xi^*+\varepsilon\widetilde\xi) -\mathcal J(\xi^*) }{\varepsilon}\geq 0$ for any $\widetilde\xi$, we conclude that
\begin{equation}\label{eq:xi*}
	\begin{split}
		&~\mathcal L^\top \mathcal X_{n\Delta-}  +2\mathcal R_{}\xi^* +2\Big( \mathcal A_{}(\mathcal X_{n\Delta-}-\overline\mu) +\mathcal B_{}(\xi^*-\mathbb E[\xi^*])	\Big)^\top A_{n+1}\mathcal B_{} \\
		&~+2\Big(	( \mathcal A_{}+\overline{\mathcal A}_{} )\overline\mu+(\mathcal B_{}+\overline{\mathcal B}_{})\mathbb E[\xi^*] +\mathcal C_{}	\Big)^\top B_{n+1}( \mathcal B_{}+\overline{\mathcal B}_{} )\\
		&~+D_{n+1}^\top(	\mathcal B_{}+\overline{\mathcal B}_{}		) \\
		= & ~ 0.
	\end{split}
\end{equation}
Taking expectations on both sides of \eqref{eq:xi*}, we get that
\begin{equation}\label{eq:E[xi*]}
	\begin{split}
		\mathbb E[\xi^*]
		= &~-\bigg\{	2\mathcal R_{}  +2(\mathcal B_{}+\overline{\mathcal B}_{}	)^\top B_{n+1}(	\mathcal B_{}+\overline{\mathcal B}_{}		)			\bigg\}^{-1} \bigg\{		\mathcal L^\top + 2(	\mathcal B_{}+\overline{\mathcal B}_{}		)^\top B_{n+1}(		\mathcal A_{}+\overline{\mathcal A}_{}	)	\bigg\}\overline\mu\\
		&~-\bigg\{	2\mathcal R_{}  +2(\mathcal B_{}+\overline{\mathcal B}_{}	)^\top B_{n+1}(	\mathcal B_{}+\overline{\mathcal B}_{}		)			\bigg\}^{-1} \bigg\{	2\mathcal C_{}^\top B_{n+1}(	\mathcal B_{}+\overline{\mathcal B}_{}	)		+D_{n+1}^\top(\mathcal B_{}+\overline{\mathcal B}_{})		\bigg\}.
	\end{split}
\end{equation}
Combining \eqref{eq:xi*} and \eqref{eq:E[xi*]} yields that
\begin{equation}\label{eq:xi*-E[xi*]}
	\begin{split}
		\xi^*-\mathbb E[\xi^*]= -\Big\{		2\mathcal R_{}	 +2\mathcal B_{}^\top A_{n+1}\mathcal B_{}	\Big\}^{-1}  \left(	\mathcal L^\top+2\mathcal B_{}^\top A_{n+1}\mathcal A_{}		\right)(\mathcal X_{n\Delta-}-\overline\mu),
	\end{split}
\end{equation}
and from \eqref{eq:E[xi*]} and \eqref{eq:xi*-E[xi*]} we have that
\begin{equation*}\label{eq:xi*-2}
	\begin{split}
		\xi^*=&~-\Big\{		2\mathcal R_{}	 +2\mathcal B_{}^\top A_{n+1}\mathcal B_{}	\Big\}^{-1}  \left(	\mathcal L^\top+2\mathcal B_{}^\top A_{n+1}\mathcal A_{}		\right)(\mathcal X_{n\Delta-}-\overline\mu)\\
		&~-\bigg\{	2\mathcal R_{}  +2(\mathcal B_{}+\overline{\mathcal B}_{}	)^\top B_{n+1}(	\mathcal B_{}+\overline{\mathcal B}_{}		)			\bigg\}^{-1} \bigg\{		\mathcal L^\top + 2(	\mathcal B_{}+\overline{\mathcal B}_{}		)^\top B_{n+1}(		\mathcal A_{}+\overline{\mathcal A}_{}	)	\bigg\}\overline\mu\\
		&~-\bigg\{	2\mathcal R_{}  +2(\mathcal B_{}+\overline{\mathcal B}_{}	)^\top B_{n+1}(	\mathcal B_{}+\overline{\mathcal B}_{}		)			\bigg\}^{-1} \bigg\{	2\mathcal C_{}^\top B_{n+1}(	\mathcal B_{}+\overline{\mathcal B}_{}	)		+D_{n+1}^\top(\mathcal B_{}+\overline{\mathcal B}_{})		\bigg\}.
	\end{split}
\end{equation*}
In terms of the notation
\begin{equation}\label{eq:IABDn}
	%\left\{
	\begin{split}
		\tilde a_n=&~	\mathcal R_{}	 +\mathcal B_{}^\top A_{n+1}\mathcal B_{},	  \qquad \qquad \quad
		a_n=~\mathcal R_{}  +(\mathcal B_{}+\overline{\mathcal B}_{}	)^\top B_{n+1}(	\mathcal B_{}+\overline{\mathcal B}_{}		),			\\
		I^{A}_n=&\frac{1}{2}\mathcal L^\top+\mathcal B_{}^\top A_{n+1}\mathcal A_{}, \qquad \qquad 	
		I^{B}_{n} =	\frac{1}{2}\mathcal L^\top +(	\mathcal B_{}+\overline{\mathcal B}_{}		)^\top B_{n+1}(		\mathcal A_{}+\overline{\mathcal A}_{}	),\\
		I^D_{n}=&~	\mathcal C_{}^\top B_{n+1}(	\mathcal B_{}+\overline{\mathcal B}_{}	)		+\frac{1}{2}D_{n+1}^\top(\mathcal B_{}+\overline{\mathcal B}_{}),
	\end{split}%\right.
\end{equation}
the optimal strategy at time $n\Delta$ as a function of the initial state $\mathcal X_{n\Delta-}$ can be written as
\begin{equation}\label{eq:xi*n}
	\begin{split}
	\xi^*_n=\xi^*_n(\mathcal X_{n\Delta-})=&~-\frac{I_{n}^A}{\tilde a_n}(\mathcal X_{n\Delta-}-\overline\mu)-\frac{I_{n}^B}{a_n}\overline\mu-\frac{I_{n}^D}{a_n}.
	\end{split}
\end{equation}
%where we use the expression $\xi^*_n(\mathcal X_{n\Delta-})$ to emphasize that $\xi^*$ is the strategy implemented at time point $n\Delta$, and that $\xi^*$ depends on $\mathcal X_{n\Delta-}$. 

%In what follows we derive a recursive dynamics for the processes $A_n, B_n, D_n$ and $F_n$ that allow us to conjecture the corresponding continuous time processes $(A_t,B_t,D_t,F_t)$. In terms of the resulting continuous time dynamics we can then derive both the candidate value function 
%\[
%	V(t,\mu)=		\text{Var}(\mu)(A_t)  + \overline\mu^\top B_t\overline\mu		+ D_t^\top\overline\mu+F_t,
%\]
%and the candidate optimal solution in continuous time. 

%%%%%%%%%%%%%%%%%%%%%%%%%%%%%%%
%%%%%%%%%%%%%%%%%%%%%%%%%%%%%%%
%%%%%%%%%%%%%%%%%%%%%%%%%%%%%%%

\subsection {Heuristic derivation of $(A_t,B_t,D_t,F_t)$}\label{sec:heuristic-Riccati}

To obtain the continuous time dynamics of the coefficient processes $(A,B,D,F)$ we are now going to derive a recursive dynamics for the processes $A_n, B_n, D_n$ and $F_n$ and then formal derivatives. 

Taking the equation \eqref{eq:xi*n} back into the cost function $\mathcal J(\cdot)$ we get that
\begin{align*}
%	\begin{split}
		\mathcal J(\xi^*_n)=&~		\mathbb E\Big\{		\mathcal L^\top(\mathcal X_{n\Delta-}-\overline\mu)\xi^*_n +\mathcal L^\top\overline\mu\xi^*_n +\mathcal R_{}(\xi^*_n)^2+(\mathcal X_{n\Delta-}-\overline\mu)^\top\mathcal Q_{\Delta}(\mathcal X_{n\Delta-}-\overline\mu) +\overline\mu^\top\mathcal Q_{\Delta}\overline\mu	\Big\}\\
		&~+\mathbb E\Big\{	(		\mathcal X_{n\Delta-}-\overline\mu		)^\top\mathcal A_{}^\top A_{n+1}\mathcal A_{}	(	\mathcal X_{n\Delta-}-\overline\mu		)	+ 2(	\mathcal X_{n\Delta-}-\overline\mu		)^\top\mathcal A_{}^\top A_{n+1}\mathcal B_{} (\xi^*_n-\mathbb E[\xi^*_n])		\\
		&~+\mathcal B_{}^\top A_{n+1}\mathcal B_{}(\xi^*_n-\mathbb E[\xi^*_n])^2	+\Delta\mathcal D^\top_{} A_{n+1}\mathcal D_{}			\Big\}\\
		&~+\overline\mu^\top(	\mathcal A_{}+\overline{\mathcal A}_{}			)^\top B_{n+1}(	\mathcal A_{}+\overline{\mathcal A}_{}		)\overline\mu	+ 2\overline\mu^\top (\mathcal A_{}+\overline{\mathcal A}_{})^\top B_{n+1}(\mathcal B_{}+\overline{\mathcal B}_{})\mathbb E[\xi^*_n]\\
		&~+2( \mathcal B_{}+\overline{\mathcal B}_{}			)^\top B_{n+1}\mathcal C_{}\mathbb E[\xi^*_n]+(\mathcal B_{}+\overline{\mathcal B}_{})^\top B_{n+1}(	\mathcal B_{}+\overline{\mathcal B}_{}	)\mathbb E[\xi^*_n]^2\\
		&~+2\overline\mu^\top (		\mathcal A_{}+\overline{\mathcal A}_{}		)^\top B_{n+1}\mathcal C_{}	+ \mathcal C^\top_{} B_{n+1}\mathcal C_{}\\
		&~+D^\top_{n+1}( \mathcal A_{}+\overline{\mathcal A}_{}		)\overline\mu	+ D_{n+1}^\top(		\mathcal B_{}+\overline{\mathcal B}_{}	)\mathbb E[\xi^*_n]  + D_{n+1}^\top\mathcal C_{} + F_{n+1}\\
		=&~\text{Var}(\mu)\Big(	\mathcal Q_\Delta+\mathcal A^\top A_{n+1}\mathcal A-(I^A_n)^\top \widetilde a_n^{-1} I^A_n			\Big)+\overline\mu^\top \Big(	\mathcal Q_\Delta+(\mathcal A+\overline{\mathcal A})^\top B_{n+1}( \mathcal A+\overline{\mathcal A} ) -(I_n^B)^\top a_n^{-1} I_n^B			\Big)\overline\mu \\
		&~+\Big( -2a_n^{-1} I_n^D I_n^B + ( 2\mathcal C^\top B_{n+1}+D_{n+1}^\top  )(\mathcal A+\overline{\mathcal A})			\Big)\overline\mu\\
		&~-\frac{(I_n^D)^2}{a_n}+\Delta\mathcal D^\top A_{n+1}\mathcal D +\mathcal C^\top B_{n+1}\mathcal C + D^\top_{n+1}\mathcal C+F_{n+1}.
	%	=&~L_1+L_{2,3}+L_{4,5,6}+L_7,
%	\end{split}
\end{align*}
Comparing the coefficients of $\text{Var}(\mu)(\cdots)$, $\overline\mu^\top (\cdots)\overline\mu$, $(\cdots)\overline\mu$ and the remaining terms respectively, we see that
 $A_n$, $B_n$, $D_n$ and $F_n$ satisfy the following recursive equations:
\begin{equation}\label{eq:An}
	\left\{\begin{split}
		A_n=&~\mathcal Q_{\Delta}+\mathcal A_{}^\top A_{n+1}\mathcal A_{}-(I^A_n)^\top\tilde a_n^{-1} I_n^A,	\\	%\mathcal Q_{}+\mathcal A_{}^\top A_{n+1}\mathcal A_{} -\frac{1}{2}(	\mathcal L+2\mathcal A_{}^\top A_{n+1}\mathcal B_{}	)( 2\mathcal R_{}+2\mathcal B_{}^\top A_{n+1}\mathcal B_{} )^{-1}(\mathcal L^\top+2\mathcal B^\top_{} A_{n+1}\mathcal A_{})
		B_n=&~\mathcal Q_{\Delta}+(\mathcal A+\overline{\mathcal A}  )^\top B_{n+1}( \mathcal A+\overline{\mathcal A}  ) -( I_n^B  )^\top a_n^{-1} I_n^B,\\
%		=&~\mathcal Q_{}+(	\mathcal A_{}+\overline{\mathcal A}_{}	)^\top B_{n+1}(	 \mathcal A_{}+\overline{\mathcal A}_{} )\\
%		&~-\frac{1}{4}\Big(	\mathcal L+2(	\mathcal A_{}+\overline{\mathcal A}_{}	)^\top B_{n+1}(	\mathcal B_{}+\overline{\mathcal B}_{}	)		\Big)\Big\{		\mathcal R_{}+(\mathcal B_{}+\overline{\mathcal B}_{})^\top B_{n+1}(\mathcal B_{}+\overline{\mathcal B}_{})		\Big\}^{-1}\\
%		&~\quad\times \Big( \mathcal L^\top+2(	\mathcal B_{}+\overline{\mathcal B}_{}	)^\top B_{n+1}(\mathcal A_{}+\overline{\mathcal A}_{})		\Big).
		D_n^\top=&~-2 a_n^{-1} I_n^D I_n^B + (  2\mathcal C^\top B_{n+1}+D^\top_{n+1 } )(\mathcal A+\overline{\mathcal A}),\\
		%=&~-\Big\{	2\mathcal R_{}+2(\mathcal B_{}+\overline{\mathcal B}_{})^\top B_{n+1}(	\mathcal B_{}+\overline{\mathcal B}_{})		\Big\}^{-1}  \Big\{	(2\mathcal C^\top_{} B_{n+1} +D_{n+1}^\top	)(	\mathcal B_{} +\overline{\mathcal B}_{}  )		\Big\}\\
%		&~\times \Big\{	\mathcal L^\top +2(\mathcal B_{}+\overline{\mathcal B}_{}	)^\top B_{n+1} (\mathcal A_{}+\overline{\mathcal A}_{}			)						\Big\} +(	2\mathcal C^\top_{}B_{n+1}+D_{n+1}^\top	)( \mathcal A_{}+\overline{\mathcal A}_{}		)
		F_n=&~-\frac{( I_n^D )^2}{4a_n} + \Delta \mathcal D^\top A_{n+1}\mathcal D + \mathcal C^\top B_{n+1} \mathcal C + D_{n+1}^\top\mathcal C+F_{n+1}.
		%&~-\frac{1}{2}\Big\{	2\mathcal R_{}+2(\mathcal B_{}+\overline{\mathcal B}_{})^\top B_{n+1}(	\mathcal B_{}+\overline{\mathcal B}_{})		\Big\}^{-1} \Big\{	(2\mathcal C^\top_{} B_{n+1} +D_{n+1}^\top	)(	\mathcal B_{} +\overline{\mathcal B}_{}  )		\Big\}^2\\
%		&~+ \Delta\mathcal D^\top_{} A_{n+1} \mathcal D_{}+\mathcal C_{}^\top B_{n+1} \mathcal C_{}   + D_{n+1}^\top \mathcal C_{} + F_{n+1}.
	\end{split}\right.
\end{equation}

Let $(A,B,D,F)$ be the formal limit of $(A_n,B_n,D_n,F_n)$ as $\Delta \to 0$. Using \eqref{relation-Bn-Dn} and \eqref{relation-An}, the driver of $(A,B,D,F)$ is formally obtained by taking the limit
$$
\lim_{\Delta\rightarrow 0} 	\frac{\Phi_{n+1}-\Phi_n}{\Delta},\qquad \Phi=A,B,D,F.
$$
%where $(A_n,B_n,D_n,F_n)$ stands for $(A_{n\Delta}, B_{n\Delta}, D_{n\Delta},F_{n\Delta})$. 
In terms of the notation
\begin{equation*}
	\begin{split}
		J^A=&~\gamma_1A_{11}+\gamma_2A_{13},\quad \widetilde J^A=\gamma_1 A_{13}+\gamma_2 A_{33},\quad  \widehat J^A=-\rho A_{13}+\frac{\gamma_1(\beta-\alpha)}{2},\quad \\ J^B=&~\gamma_1B_{11}+\gamma_2B_{13},\quad   \widetilde J^B=\gamma_1 B_{13}+\gamma_2 B_{33},\quad \widehat J^B=-\rho B_{13}+\frac{\gamma_1(\beta-\alpha)}{2},\quad \breve J^B=-\rho B_{11}+\frac{\alpha\gamma_1}{2},\\
		J^D=&~\gamma_1 D_1+\gamma_2 D_3,\quad \breve J^D=-\rho D_1-\alpha\gamma_1\mathbb E[x_0],
	\end{split}
\end{equation*}
we obtain
\begin{itemize}
\item the following matrix-valued ODE for $A$: 
\begin{small}
\begin{equation}\label{eq:A}
	\left\{\begin{split}
		-\frac{dA_t}{dt}=&~\begin{pmatrix}
			0& -\rho \frac{A_{11,t}}{\gamma_2} & -\frac{\beta-\alpha}{\gamma_2} J^A_t\\
			-\rho \frac{A_{11,t}}{\gamma_2}&-\rho \frac{2A_{11,t}-\gamma_2}{\gamma_2^2}&  -\frac{\beta-\alpha}{\gamma_2^2}J^A_t+\frac{\widehat J^A_t}{\gamma_2}\\
			-\frac{\beta-\alpha}{\gamma_2} J^A_t&-\frac{\beta-\alpha}{\gamma_2^2}J^A_t+\frac{\widehat J^A_t}{\gamma_2}&-\frac{2(\beta-\alpha)}{\gamma_2}\widetilde J^A_t	\end{pmatrix}\\
		&-\frac{1}{\lambda+\gamma_2\rho}\begin{pmatrix}
			-\rho A_{11,t}-\lambda\\
			-\rho \frac{A_{11,t}}{\gamma_2}+\rho\\
			\frac{\gamma_1(\beta-\alpha)}{2}-\rho A_{13,t}
		\end{pmatrix}\begin{pmatrix}
		-\rho A_{11,t}-\lambda\\
		-\rho \frac{A_{11,t}}{\gamma_2}+\rho\\
		\frac{\gamma_1(\beta-\alpha)}{2}-\rho A_{13,t}
	\end{pmatrix}^\top\\
%	\begin{pmatrix}
%			-\rho A_{11,t}-\lambda&-\rho \frac{A_{11,t}}{\gamma_2}+\rho&\frac{\gamma_1(\beta-\alpha)}{2}-\rho A_{13,t}
%		\end{pmatrix}\\
	&~+\begin{pmatrix}
			\lambda& 0&0\\
			0&0&0\\
			0&0&0
		\end{pmatrix}\\
		A_T=&\begin{pmatrix}
			\frac{\gamma_2}{2}& \frac{1}{2}&0\\
			\frac{1}{2}&0&0\\
			0&0&0
		\end{pmatrix},
	\end{split}\right.
\end{equation}
\end{small}
\item the following matrix-valued Riccati equation for $B$:
\begin{small}
\begin{equation}\label{eq:B}
	\left\{\begin{split}
		&-\frac{dB_t}{dt}=
		\begin{pmatrix}
			-\frac{2\alpha}{\gamma_2}J^B_t& 	-\frac{\alpha}{\gamma_2^2}J^B_t+\frac{1}{\gamma_2}\breve J^B_t & -\frac{\beta-\alpha}{\gamma_2}J^B_t-\frac{\alpha}{\gamma_2}\widetilde J^B_t \\
		-\frac{\alpha}{\gamma_2^2}J^B_t+\frac{1}{\gamma_2}\breve J^B_t&-\rho\frac{2B_{11,t}-\gamma_2}{\gamma_2^2}&  -\frac{(\beta-\alpha)}{\gamma_2^2}J^B_t+\frac{1}{\gamma_2}\widehat J^B_t \\
		-\frac{\beta-\alpha}{\gamma_2}J^B_t-\frac{\alpha}{\gamma_2}\widetilde J^B_t&-\frac{(\beta-\alpha)}{\gamma_2^2}J^B_t+\frac{1}{\gamma_2}\widehat J^B_t&-\frac{2(\beta-\alpha)}{\gamma_2}\widetilde J^B_t	\end{pmatrix}\\
		&\qquad-\frac{1}{\lambda+\gamma_2\rho-\alpha\gamma_1}\begin{pmatrix}
			\frac{\alpha}{\gamma_2}J^B_t-\breve J^B_t-\lambda\\
			\frac{\alpha}{\gamma_2^2}J^B_t+\frac{1}{\gamma_2}\breve J^B_t+\frac{\rho\gamma_2-\alpha\gamma_1}{\gamma_2}\\
			\widehat J^B_t+\frac{\alpha}{\gamma_2}\widetilde J^B_t
		\end{pmatrix}\begin{pmatrix}
\frac{\alpha}{\gamma_2}J^B_t-\breve J^B_t-\lambda\\
			\frac{\alpha}{\gamma_2^2}J^B_t+\frac{1}{\gamma_2}\breve J^B_t+\frac{\rho\gamma_2-\alpha\gamma_1}{\gamma_2}\\
			\widehat J^B_t+\frac{\alpha}{\gamma_2}\widetilde J^B_t
		\end{pmatrix}^{\top}\\
		&\qquad+\begin{pmatrix}
			\lambda& 0&0\\
			0&0&0\\
			0&0&0
		\end{pmatrix}\\
		&B_T=\begin{pmatrix}
			\frac{\gamma_2}{2}& \frac{1}{2}&0\\
			\frac{1}{2}&0&0\\
			0&0&0
		\end{pmatrix},
	\end{split}\right.
\end{equation}
\end{small}
\item the following vector-valued ODE for $D$:
	\begin{equation}\label{eq:D}
		\left\{\begin{split}
			-\frac{dD^{}_t}{dt}=&~\begin{pmatrix}
				\frac{2\alpha\mathbb E[x_0]}{\gamma_2}J^B_t-\frac{\alpha}{\gamma_2} J^D_t\\
				\frac{2\alpha\mathbb E[x_0]}{\gamma_2^2}J^B_t+\frac{1}{\gamma_2}\breve J^D_t\\
				\frac{2\alpha\mathbb E[x_0]}{\gamma_2}\widetilde J^B_t-\frac{\beta-\alpha}{\gamma_2}J^D_t	\end{pmatrix}\\
			&-\frac{1}{\lambda+\gamma_2\rho-\alpha\gamma_1}\left(	\frac{\alpha}{\gamma_2}J^D_t+\breve J^D_t\right)\begin{pmatrix}
				\frac{\alpha}{\gamma_2}J^B_t+\breve J^B_t-\lambda\\
				\frac{\alpha}{\gamma_2^2}J^B_t+\frac{1}{\gamma_2}\breve J^B_t+\rho\\
				\frac{\alpha}{\gamma_2}\widetilde J^B_t+\breve J^B_t
			\end{pmatrix}\\
			D^{}_T=&\begin{pmatrix}
				0\\
				0\\
				0
			\end{pmatrix},
		\end{split}\right.
	\end{equation}
\item and the following BSDE for $F$: 
	\begin{equation}\label{eq:F}
	\left\{\begin{split}
		-d F_t=&~\bigg\{\sigma^2_t \frac{2A_{11,t}-\gamma_2}{2\gamma_2^2}+\alpha\gamma_1\mathbb E[x_0]\frac{D_{1,t}}{\gamma_2}+\alpha\mathbb E[x_0]D_{3,t}\\
		&-\frac{1}{4(\lambda+\gamma_2\rho-\alpha\gamma_1)}\left(-\alpha\gamma_1\mathbb E[x_0]+(\gamma_1\alpha-\gamma_2\rho)\frac{D_{1,t}}{\gamma_2}+\alpha D_{3,t}\right)^2\bigg\}\,dt-Z^F_t\,dW_t\\
		F_T=&0.
	\end{split}\right.
\end{equation}
\end{itemize}
In view of \eqref{relation-Bn-Dn} and \eqref{relation-An} the above systems reduce to \eqref{eq:tilde-A}-\eqref{eqn:F}.

% $A$, $B$ and $D$ admit the following structure
%\begin{equation}\label{relation-A-D}
%	\left\{
%	\begin{split}
%		&~-A_{11}+\gamma_{2}A_{21}=0\\
%		&~-A_{12}+\gamma_{2}A_{22}+\frac{1}{2}=0\\
%		&~-A_{13}+\gamma_{2}A_{23}=0\\
%		&~-B_{11}+\gamma_{2}B_{21}=0\\
%		&~-B_{12}+\gamma_{2}B_{22}+\frac{1}{2}=0\\
%		&~-B_{13}+\gamma_{2}B_{23}=0\\
%		&~-D_{1}+\gamma_{2}D_{2}=0.
%	\end{split}\right.
%\end{equation}
%%%%%%%%%%%%%%%%%%%%%%%%%%%%%%%%%
%%%%%%%%%%%%%%%%%%%%%%%%%%%%%%%%%
%%%%%%%%%%%%%%%%%%%%%%%%%%%%%%%%%

\subsection{Heuristic derivation of the optimal strategy in continuous time}

In this section, we construct a continuous-time candidate optimal strategy by taking limits of the discrete time model. Intuitively, and in view of the results established in \cite{Hkiv,HN-2014} we expect the optimal strategy to jump only at the initial and the terminal time, and to follow an SDE on the open interval $(0,T)$.   

%%%%%%%%%%%%%%%%%%%%%%%%%%%%%%%%%
%%%%%%%%%%%%%%%%%%%%%%%%%%%%%%%%%

\subsubsection{The jumps}

The final jump size is $X_{T-}$ in order to close the open position at $T$. To determine the initial jump we first deduce from \eqref{eq:IABDn} that \begin{equation*}
	\begin{split}
		\tilde a_n=&~\Delta(\lambda+\gamma_2\rho) + O(\Delta^2),\\
		a_n=&~\Delta(	\lambda+\rho\gamma_2-\alpha\gamma_1	) + O(\Delta^2),\\
		I^A_n=&~\Delta \left( -(\rho A_{11,n\Delta}+\lambda)+O(\Delta), ( \rho-\frac{A_{11,n\Delta}}{\gamma_2}  )+O(\Delta), (	\frac{\gamma_1(\beta-\alpha)}{2} -\rho A_{13,n\Delta}			)+O(\Delta)  \right),\\
		I_n^B=&~\Delta\left( \begin{matrix}( \frac{\alpha\gamma_1}{\gamma_2}-\rho  )B_{11,n\Delta} +\alpha B_{13,n\Delta}-\lambda+\frac{\alpha\gamma_1}{2}+O(\Delta)\\		\frac{\alpha\gamma_1-\gamma_2\rho}{\gamma_2^2}B_{11,n\Delta}+\alpha \frac{B_{13,n\Delta}}{\gamma_2}+\rho-\frac{\alpha\gamma_1}{2\gamma_2}	+O(\Delta)\\
		\frac{\gamma_1(\beta-\alpha)}{2}+(\gamma_1\alpha-\gamma_2\rho)\frac{B_{13,n\Delta}}{\gamma_2}+\alpha B_{33,n\Delta}	+ O(\Delta)
		\end{matrix}			\right)^\top,\\
		I_n^D=&~-\Delta\left( \frac{\alpha\gamma_1}{2}\mathbb E[x_0]  + \frac{(\alpha\gamma_1-\rho\gamma_2)}{2\gamma_2} D_{1,n\Delta} + \frac{\alpha\Delta D_{3,n\Delta}}{2} + O(\Delta) \right).
	\end{split}
\end{equation*}
Thus, letting $n\Delta=t$ and $n\rightarrow\infty$ in \eqref{eq:xi*n}, we see  that
\begin{equation}\label{limit:xi*}
	\begin{split}
		\xi^*_n	=&~-\frac{I_{n}^A}{\tilde a_n}(\mathcal X_{n\Delta-}-\overline\mu)-\frac{I_{n}^B}{a_n}\overline\mu-\frac{I_{n}^D}{a_n} \\ 
		\rightarrow & ~ -\frac{I_{t}^A}{\tilde a}(\mathcal X_{t-}-\overline\mu)-\frac{I_{t}^B}{a}\overline\mu-\frac{I_{t}^D}{a}:=\Delta Z_t,
	\end{split}
\end{equation}
where 
\begin{equation}
	\begin{split}
		\tilde a=&\gamma_2\rho+\lambda, \qquad \qquad \qquad \qquad 
		a=\gamma_2\rho-\gamma_1\alpha+\lambda,\\
		I^A=&\begin{pmatrix}
			-\rho A_{11}-\lambda\\
			-\rho \frac{A_{11}}{\gamma_2}+\rho\\
			\frac{\gamma_1(\beta-\alpha)}{2}-\rho A_{13}
		\end{pmatrix}^\top, \quad
		I^B=\begin{pmatrix}
			\frac{\alpha\gamma_1-\gamma_2\rho}{\gamma_2}B_{11}+\alpha B_{13}-\lambda+\frac{\alpha\gamma_1}{2}\\
			\frac{\alpha\gamma_1-\gamma_2\rho}{\gamma_2^2}B_{11}+\alpha \frac{B_{13}}{\gamma_2}+\rho-\frac{\alpha\gamma_1}{2\gamma_2}\\
			\frac{\gamma_1(\beta-\alpha)}{2}+(\gamma_1\alpha-\gamma_2\rho)\frac{B_{13}}{\gamma_2}+\alpha B_{33}
		\end{pmatrix}^\top,\\
		I^D=&	-\frac{\alpha\gamma_1}{2}\mathbb E[x_0]+(\gamma_1\alpha-\gamma_2\rho)\frac{D_1}{2\gamma_2}+\frac{\alpha}{2} D_3.
	\end{split}
\end{equation}
The limit $\Delta Z_t$ in \eqref{limit:xi*} is the candidate initial jump when starting with a position ${\cal X}_{t-}$ at time $t \in [0,T)$. 

%%%%%%%%%%%%%%%%%%%%%%%%%%%%%%%%%
%%%%%%%%%%%%%%%%%%%%%%%%%%%%%%%%%

\subsubsection{The candidate strategy on $(t,T)$}

Next, we derive recursive dynamics for the discrete-time optimal strategy from which we deduce a candidate optimal continuous-time strategy. The strategy at time $(n+1)\Delta$ satisfies
\begin{align*}
%	\begin{split}
		&~\xi^*_{n+1}\Big(\mathcal X_{(n+1)\Delta-}\Big)\\
		=&~-\frac{I^A_{n+1}}{\tilde a_{n+1}}\left(\mathcal X_{(n+1)\Delta-}-\mathbb E[\mathcal X_{(n+1)\Delta-}]\right)-\frac{I^B_{n+1}}{a_{n+1}}\mathbb E[\mathcal X_{(n+1)\Delta-}]-\frac{I^D_{n+1}}{a_{n+1}}\\
		=&~-\frac{I^A_{n+1}}{\tilde a_{n+1}}\Big\{		\mathcal A_{}\left(\mathcal X_{n\Delta-}-\mathbb E[\mathcal X_{n\Delta-}]\right)+\mathcal B_{}\left(\xi^*_n -\mathbb E[\xi^*_n]\right)+\mathcal D_{}\epsilon_{n+1}\Big\}\\
		&~-\frac{I^B_{n+1}}{a_{n+1}}\Big\{
		(\mathcal A_{}+\overline{\mathcal A}_{}		)\mathbb E[\mathcal X_{n\Delta-}] +(\mathcal B_{}+\overline{\mathcal B}_{})\mathbb E[\xi^*_n]  +\mathcal C_{}\Big\}-\frac{I^D_{n+1}}{a_{n+1}}\\	
		=&~-\frac{I^A_{n+1}}{\tilde a_{n+1}}\left(\mathcal X_{n\Delta-}-\mathbb E[\mathcal X_{n\Delta-}]\right)-\frac{I^B_{n+1}}{a_{n+1}}\mathbb E[\mathcal X_{n\Delta-}]-\frac{I^D_{n+1}}{a_{n+1}}\\
		&~-\frac{I^A_{n+1}}{\tilde a_{n+1}}\mathcal K\left(\xi^*_n-\mathbb E[\xi^*_n]\right)-\frac{I^B_{n+1}}{a_{n+1}}\mathcal K\mathbb E[\xi^*_n]-\frac{I^A_{n+1}}{\tilde a_{n+1}}\mathcal D_{}\epsilon_{n+1}-\frac{I^B_{n+1}}{a_{n+1}}\mathcal C_{}\\
		&~-\frac{I^A_{n+1}}{\tilde a_{n+1}}(\mathcal A_{}-\textit{I}_{3\times3})\left(\mathcal X_{n\Delta-}-\mathbb E[\mathcal X_{n\Delta-}]\right)-\frac{I^B_{n+1}}{a_{n+1}}(\mathcal A_{}+\overline{\mathcal A}_{}-\textit{I}_{3\times3})\mathbb E[\mathcal X_{n\Delta-}]\\
		&~-\frac{I^A_{n+1}}{\tilde a_{n+1}}(\mathcal B_{}-\mathcal K)\left(\xi^*_n-\mathbb E[\xi^*_n]\right)-\frac{I^B_{n+1}}{a_{n+1}}(\mathcal B_{}+\overline{\mathcal B}_{}-\mathcal K)\mathbb E[\xi^*_n],
	\end{align*}
where $I_{3\times 3}$ is the identity matrix of order $3$. 
In view of \eqref{eq:xi*n} the first line in the third equality equals $\xi^*_{n+1}(\mathcal X_{n\Delta-})$. Thus, 
\begin{align*}
&~	\xi^*_{n+1}(\mathcal X_{(n+1)\Delta-})\\
		=&~\xi^*_{n+1}(\mathcal X_{n\Delta-})-\frac{I^A_{n+1}}{\tilde a_{n+1}}\mathcal K\left(\xi^*_n-\mathbb E[\xi^*_n]\right)-\frac{I^B_{n+1}}{a_{n+1}}\mathcal K\mathbb E[\xi^*_n]-\frac{I^A_{n+1}}{\tilde a_{n+1}}\mathcal D\epsilon_{n+1}-\frac{I^B_{n+1}}{a_{n+1}}\mathcal C_{}\\
		&~-\frac{I^A_{n+1}}{\tilde a_{n+1}}(\mathcal A_{}-\textit{I}_{3\times3})\left(\mathcal X_{n\Delta-}+\mathcal K\xi^*_n-\mathbb E[\mathcal X_{n\Delta-}+\mathcal K\xi^*_n]\right)-\frac{I^B_{n+1}}{a_{n+1}}(\mathcal A_{}+\overline{\mathcal A}_{}-\textit{I}_{3\times3})\mathbb E[\mathcal X_{n\Delta-}+\mathcal K\xi^*_n]\\
		&~-\frac{I^A_{n+1}}{\tilde a_{n+1}}(\mathcal B_{}-\mathcal K-(\mathcal A_{}-\textit{I}_{3\times3})\mathcal K)\left(\xi^*_n-\mathbb E[\xi^*_n]\right)\\
		&~-\frac{I^B_{n+1}}{a_{n+1}}(\mathcal B_{}+\overline{\mathcal B}_{}-\mathcal K-(\mathcal A_{}+\overline{\mathcal A}_{}-\textit{I}_{3\times3})\mathcal K)\mathbb E[\xi^*_n].
%	\end{split}
\end{align*}
Direct computations show that
\begin{equation}
\left\{	\begin{split}
	    &I^A_{n+1}\mathcal K=\tilde a_{n+1},\quad I^B_{n+1}\mathcal K=a_{n+1},\\
		&I^A_{n+1}\mathcal K=\tilde a_{n+1},\quad I^B_{n+1}\mathcal K=a_{n+1},\\
		&\mathcal B_{}-\mathcal K-(\mathcal A_{}-\textit{I}_{3\times3})\mathcal K=0_{3\times1},\\
		&\mathcal B_{}+\overline{\mathcal B}_{}-\mathcal K-(\mathcal A_{}+\overline{\mathcal A}_{}-\textit{I}_{3\times3})\mathcal K=0_{3\times1},
	\end{split}\right.
	\end{equation}
where $0_{3\times 1}$ is $3\times 1$ vector with zero entries. 
Hence we have that
\begin{equation}\label{eq:xi-n+1-star}
	\begin{split}
		&~\xi^*_{n+1}\Big(\mathcal X_{(n+1)\Delta-}\Big)\\
	    =&~\xi^*_{n+1}(\mathcal X_{n\Delta+}-\mathcal K\xi^*_n)-\xi^*_{n}(\mathcal X_{n\Delta+}-\mathcal K\xi^*_n)+O(\Delta^2)\\
	    &~-\frac{I^A_{n+1}}{\tilde a_{n+1}}(\mathcal A_{}-\textit{I}_{3\times3})\left(\mathcal X_{n\Delta+}-\mathbb E[\mathcal X_{n\Delta+}]\right)-\frac{I^B_{n+1}}{a_{n+1}}(\mathcal A_{}+\overline{\mathcal A}_{}-\textit{I}_{3\times3})\mathbb E[\mathcal X_{n\Delta+}]-\frac{I^B_{n+1}}{a_{n+1}}\mathcal C_{}\\
	    &~-\frac{I^A_{n+1}}{\tilde a_{n+1}}\mathcal D_{}\epsilon_{n+1}\\
		:=&~\mathbb{I}+\mathbb{II}+\mathbb{III},
	\end{split}
\end{equation}
where we use $\mathcal X_{n\Delta+}$ to denote the state at $n\Delta$ after $\xi^*_n$ is implemented. 
Let $\widetilde Z$ be the candidate optimal strategy, and let $\delta\widetilde Z_{(n+1)\Delta} := \xi^*_{n+1}(\mathcal X_{(n+1)\Delta-})$. \eqref{eq:xi-n+1-star} can be written as 
\[
	\delta\widetilde Z_{(n+1)\Delta} =  \Big( \frac{\mathbb I + \mathbb{II}} {\Delta} \Big) \Delta - \frac{I^A_{n+1}}{\tilde a_{n+1}}\mathcal D_{} \epsilon_{n+1}. % \delta W_{n+1}.
\]
This suggests to study the limit of $\frac{\mathbb{I}}{\Delta}$ and limit of $\frac{\mathbb{II}}{\Delta}$ as $\Delta \to 0$. By \eqref{eq:xi*n}, we have that
\begin{equation}\label{prelimit-deltaxi}
	\begin{split}
	\frac{\mathbb{I}}{\Delta}=&\frac{\xi^*_{n+1}(\mathcal X_{n\Delta+}-\mathcal K\xi^*_n)-\xi^*_{n}(\mathcal X_{n\Delta+}-\mathcal K\xi^*_n)}{\Delta}+\frac{O(\Delta^2)}{\Delta}\\
	=&-\frac{I^A_{n+1}-I^A_{n}}{\Delta\tilde a_{n+1}}\left(\mathcal X_{n\Delta+}-\mathcal K\xi^*_n-\mathbb E[\mathcal X_{n\Delta+}-\mathcal K\xi^*_n]\right) \\
	& -\frac{I^A_{n}}{\Delta}\left(\frac{1}{\tilde a_{n+1}}-\frac{1}{\tilde a_{n}}\right)\left(\mathcal X_{n\Delta+}-\mathcal K\xi^*_n-\mathbb E[\mathcal X_{n\Delta+}-\mathcal K\xi^*_n]\right)\\
	&-\frac{I^B_{n+1}-I^B_{n}}{\Delta a_{n+1}}\mathbb E[\mathcal X_{n\Delta+}-\mathcal K\xi^*_n]-\frac{I^B_{n}}{\Delta}\left(\frac{1}{a_{n+1}}-\frac{1}{a_{n}}\right)\mathbb E[\mathcal X_{n\Delta+}-\mathcal K\xi^*_n]\\
	&-\frac{I^D_{n+1}-I^D_{n}}{\Delta a_{n+1}}-\frac{I^D_{n}}{\Delta}\left(\frac{1}{a_{n+1}}-\frac{1}{a_{n}}\right)+\frac{O(\Delta^2)}{\Delta}.
	\end{split}
\end{equation}

Since $\tilde a_{n+1},\tilde a_{n}, a_{n+1}, a_{n}=O(\Delta),$ we get that
\begin{equation*}
	\begin{split}
	&\frac{\tilde a_{n+1}-\tilde a_n}{\Delta^3}=\rho^2\frac{A_{11,n+1}-A_{11,n}}{\Delta}\rightarrow\rho^2\frac{\dot A_{11,t}}{dt}\\
	&\frac{ a_{n+1}- a_n}{\Delta^3}=\frac{(\gamma_1\alpha-\gamma_2\rho)^2}{\gamma_2^2}\frac{B_{11,n+1}-B_{11,n}}{\Delta}+\frac{2\alpha(\gamma_1\alpha-\gamma_2\rho)}{\gamma_2}\frac{B_{13,n+1}-B_{13,n}}{\Delta}+\alpha^2\frac{B_{33,n+1}-B_{33,n}}{\Delta}	\\
	&\qquad\qquad\quad\rightarrow\frac{(\gamma_1\alpha-\gamma_2\rho)^2}{\gamma_2^2}\dot B_{11,t} +\frac{2\alpha(\gamma_1\alpha-\gamma_2\rho)}{\gamma_2}\dot B_{13,t}+\alpha^2\dot B_{33,t},
	\end{split}
\end{equation*}
from which we deduce that 
\begin{equation*}
	\begin{split}
		&\frac{1}{\tilde a_{n+1}}-\frac{1}{\tilde a_{n}}=-\frac{\tilde a_{n+1}-\tilde a_{n}}{\tilde a_{n+1}\tilde a_{n}}=-\frac{\Delta^2}{\tilde a_{n+1}\tilde a_{n}}\cdot\frac{\tilde a_{n+1}-\tilde a_{n}}{\Delta^3}\Delta\rightarrow 0,\\
		&\frac{1}{a_{n+1}}-\frac{1}{a_{n}}=-\frac{a_{n+1}-a_{n}}{a_{n+1}a_{n}}=-\frac{\Delta^2}{a_{n+1}a_{n}}\cdot\frac{a_{n+1}-a_{n}}{\Delta^3}\Delta\rightarrow 0.
	\end{split}
\end{equation*}
%Thus, the second term in each line of \eqref{prelimit-deltaxi} goes to $0$.
Furthermore, we have that
\begin{equation*}
	\begin{split}
		&\frac{(I^A_{n+1} -I^A_{n})\mathcal K}{\Delta^3}=\rho^2\frac{A_{11,n+1}-A_{11,n}}{\Delta}\rightarrow\rho^2\dot A_{11,t}\\
		&\frac{(I^B_{n+1} -I^B_{n})\mathcal K}{\Delta^3}=\frac{(\gamma_1\alpha-\gamma_2\rho)^2}{\gamma_2^2}\frac{B_{11,n+1}-B_{11,n}}{\Delta}+\frac{2\alpha(\gamma_1\alpha-\gamma_2\rho)}{\gamma_2}\frac{B_{13,n+1}-B_{13,n}}{\Delta}+\alpha^2\frac{B_{33,n+1}-B_{33,n}}{\Delta}	\\
		&\qquad\qquad\qquad\quad\rightarrow\frac{(\gamma_1\alpha-\gamma_2\rho)^2}{\gamma_2^2}\dot B_{11,t} +\frac{2\alpha(\gamma_1\alpha-\gamma_2\rho)}{\gamma_2}\dot B_{13,t}+\alpha^2\dot B_{33,t}.
	\end{split}
\end{equation*}
Altogether, we conclude that
\begin{equation*}\label{convergence-I1/Delta}
	\begin{split}
		\frac{\mathbb{I}}{\Delta}\rightarrow&~  -\frac{\dot I^A_t}{\tilde a}({\mathcal X_t}-\mathbb E[{\mathcal X_t}])-\frac{\dot I^B_t}{a}\mathbb E[{\mathcal X_t}]-\frac{\dot I^D_t}{a}.
	\end{split}
\end{equation*}
The limit of $\frac{\mathbb{II}}{\Delta}$ as $\Delta \to 0$ reads 
\begin{equation*}\label{convergence-I2/Delta}
	\begin{split}
		\frac{\mathbb{II}}{\Delta}=&~\frac{1}{\Delta}\bigg(-\frac{I^A_{n+1}}{\tilde a_{n+1}}(\mathcal A_{}-\textit{I}_{3\times3})\left(\mathcal X_{n\Delta+}-\mathbb E[\mathcal X_{n\Delta+}]\right)\\
		&-\frac{I^B_{n+1}}{a_{n+1}}(\mathcal A_{}+\overline{\mathcal A}_{}-\textit{I}_{3\times3})\mathbb E[\mathcal X_{n\Delta+}]-\frac{I^B_{n+1}}{a_{n+1}}\mathcal C_{}\bigg) \\	
		\rightarrow	&-\frac{I^A_t}{\tilde a}\mathcal H({\mathcal X_t}-\mathbb E[{\mathcal X_t}])-\frac{I^B_t}{a}\Big((\mathcal H+\overline{\mathcal H})\mathbb E[{\mathcal X}_t]+\mathcal G\Big).
	\end{split}
\end{equation*}  
Moreover, heuristically,  
\begin{equation*}\label{convergence-I3/Delta}
	\begin{split}
			~\mathbb{III}=&-\frac{I^A_{n+1}}{\tilde a_{n+1}}\mathcal D_{}\epsilon_{n+1}\rightarrow -\frac{I^A_t}{\tilde a}\mathcal D_t\,dW_t.
	\end{split}
\end{equation*}
Combining the above limits we obtain the candidate trading strategy on $(t,T)$:
\begin{equation*}\label{candidate}
	\begin{split}
	-dX_s = d\widetilde Z_s=&\left(-\frac{\dot I^A_s}{\tilde a}({\mathcal X_s}-\mathbb E[{\mathcal X_s}])-\frac{\dot I^B_s}{a}\mathbb E[{\mathcal X_s}]-\frac{\dot I^D_s}{a}-\frac{I^A_s}{\tilde a}\mathcal H({\mathcal X_s}-\mathbb E[{\mathcal X_s}]) \right. \\ 
	& ~ \left. - \frac{I^B_s}{a}((\mathcal H+\overline{\mathcal H})\mathbb E[{\mathcal X}_s]+\mathcal G)\right)\,ds -\frac{I^A_s}{\tilde a}\mathcal D_s\,dW_s. %\qquad \qquad s\in(t,T). 
	\end{split}
\end{equation*}

\end{appendix}
\normalem 
\bibliography{bib_FHX2021}

\end{document}